\documentclass{article}[11pt]
\usepackage[utf8]{inputenc}
\usepackage{amsmath,color}
\usepackage{amssymb}
\usepackage{amsthm}
\usepackage{thmtools}
\usepackage{thm-restate}

\usepackage{geometry}
\geometry{margin=1.3in}
\usepackage{xcolor}

\newtheorem{definition}{Definition}
\newtheorem{claim}{Claim}
\newtheorem{question}{Question}
\newtheorem{theorem}{Theorem}[section]

\newtheorem{remark}[theorem]{Remark}

\newtheorem{fact}[theorem]{Fact}

\newtheorem{lemma}[theorem]{Lemma}

\newtheorem{corollary}[theorem]{Corollary}

\newcommand{\R}[0]{{\ensuremath{\mathbb{R}}}}

\newcommand{\ip}[1]{\langle #1 \rangle}

\newcommand{\beq}{\begin{equation}}
\newcommand{\eeq}{\end{equation}}

\newcommand{\eps}{\varepsilon}

\newcommand{\bx}{{\bf x}}
\newcommand{\by}{{\bf y}}
\newcommand{\bw}{{\bf w}}
\newcommand{\bv}{{\bf v}}
\newcommand{\bu}{{\bf u}}
\newcommand{\bzero}{{\bf 0}}
\newcommand{\bone}{{\bf 1}}
\newcommand{\trace}{{\rm trace}}

\newcommand{\bd}[2]{\left( \begin{array}{c|c} #1 & 0 \\ \hline  0 & #2 \end{array} \right)}

\DeclareMathOperator{\E}{\mathbb{E}}
\DeclareMathOperator{\vbl}{vbl}

\newcounter{note}[section]
\renewcommand{\thenote}{\thesection.\arabic{note}}
\newcommand{\nbnote}[1]{\refstepcounter{note}$\ll${\bf Nikhil~\thenote:}
  {\sf \color{red} #1}$\gg$\marginpar{\tiny\bf NB~\thenote}}
\newcommand{\osnote}[1]{\refstepcounter{note}$\ll${\bf Ola~\thenote:}
  {\sf \color{red} #1}$\gg$\marginpar{\tiny\bf OS~\thenote}}
\newcommand{\ltnote}[1]{\refstepcounter{note}$\ll${\bf Luca~\thenote:}
  {\sf \color{red} #1}$\gg$\marginpar{\tiny\bf LT~\thenote}}

\renewcommand{\osnote}[1]{}
\renewcommand{\ltnote}[1]{}
\renewcommand{\nbnote}[1]{}

\title{New Notions and Constructions of Sparsification for Graphs and Hypergraphs}
\author{Nikhil Bansal, Ola Svensson, Luca Trevisan}

\begin{document}

\maketitle

\begin{abstract}
 A sparsifier of a graph $G$ (Bencz\'ur and Karger; Spielman and Teng) is a sparse weighted subgraph $\tilde G$ that approximately retains the same cut structure of $G$. For general graphs, non-trivial sparsification is possible only by using weighted graphs in which different edges have different weights. Even for graphs that admit unweighted sparsifiers (that is, sparsifiers in which all the edge weights are equal to the same scaling factor), there are no known polynomial time algorithms that find such unweighted sparsifiers.
 
 We study a weaker notion of sparsification suggested by Oveis Gharan, in which the number of cut edges in each  cut $(S,\bar S)$ is not approximated within a multiplicative factor $(1+\epsilon)$, but is, instead, approximated up to an additive term bounded by $\epsilon$ times $d\cdot |S| + \text{vol}(S)$, where $d$ is the average degree of the graph and $\text{vol}(S)$ is the sum of the degrees of the vertices in $S$. We provide a probabilistic polynomial time construction of such sparsifiers for every graph, and our sparsifiers have a near-optimal number of edges $O(\epsilon^{-2} n {\rm polylog}(1/\epsilon))$. We also provide a deterministic polynomial time construction that constructs sparsifiers with a weaker property having the optimal number of edges $O(\epsilon^{-2} n)$. Our constructions also satisfy a spectral version of the ``additive sparsification'' property.
 
 Notions of sparsification have also been studied for hypergraphs. Our construction of ``additive sparsifiers'' with $O_\epsilon (n)$ edges also works for hypergraphs, and  provides the first non-trivial notion of sparsification for hypergraphs achievable with $O(n)$ hyperedges when $\epsilon$ and the rank $r$ of the hyperedges are constant. Finally, we provide a new construction of spectral hypergraph sparsifiers, according to the standard definition, with ${\rm poly} (\epsilon^{-1}, r) \cdot n\log n$ hyperedges, improving over the previous spectral construction (Soma and Yoshida) that used $\tilde O(n^3)$ hyperedges even for constant $r$ and $\epsilon$.
\end{abstract}

\section{Introduction}

Bencz{\'{u}}r and Karger \cite{BK96} introduced the notion of a {\em cut sparsifier:} a weighted graph $\tilde G = (V,F)$ is an $\epsilon$ cut sparsifier of a graph $G=(V,E)$
if, for every cut $(S,V-S)$ of the set of vertices, the weighted number of cut edges in $\tilde G$ is the same as the number of cut edges in $G$, up to multiplicative error $\epsilon$, that is,
\begin{equation} 
\label{eq.bz}
\forall S\subseteq V \ \ | e_F(S) - e_E(S) | \leq \epsilon \cdot e_E(S)
\ . 
\end{equation}

\noindent where $e_X(S)$ denotes the weighted number of edges in $X$ leaving the set $S$.
A stronger notion, introduced by Spielman and Teng \cite{ST04}, is that of a {\em spectral sparsifier:} according to this notion,  a weighted graph $\tilde G = (V,F)$ is an $\epsilon$ cut sparsifier of a graph $G=(V,E)$
if 

\begin{equation} 
\label{eq.st}
\forall S\subseteq V \ \ | \bx^T L_{\tilde G} \bx - \bx^T L_G \bx | \leq \epsilon \cdot \bx^T L_G \bx \ , 
\end{equation}
\noindent where $L_X$ is the Laplacian matrix of the graph $X$.
Note that \eqref{eq.bz} is implied by \eqref{eq.st} by taking $\bx$ to be the 0/1 indicator vector of $S$. A more compact way to express \eqref{eq.st} is as
\begin{equation}
    - \epsilon L_G \preceq L_{\tilde G} - L_G \preceq \epsilon L_G \ .
\end{equation}
Batson, Spielman and Srivastava \cite{BSS12} show that, for every graph,  an $\epsilon$ spectral sparsifier (and hence also an $\epsilon$ cut sparsifier) can be constructed in polynomial time with  $O(n/\epsilon^2)$ weighted edges, which is best possible up to the constant in the big-Oh.
Sparsifiers have several applications to speeding-up graph algorithms.

For some graphs $G$, for example the ``barbell'' graph (that consists of two disjoint cliques joined by a single edge), it is necessary for a non-trivial sparsifier of $G$ to have edges of different weights. 
This has motivated the question of whether there are weaker, but still interesting, notion of sparsification that can be achieved, for all graphs, using sparsifiers that are ``unweighted'' in the sense that all edges have the same weight.

\begin{question} \label{q.unweighted}
Is a non-trivial notion of unweighted sparsification possible for all graphs?
\end{question}

Results on unweighted sparsification have focused on bounding the multiplicative error $\epsilon$ in such cases, allowing it to be superconstant \cite{AGM14,ALO15}. For graphs such as the barbell example one, however gets, necessarily, very poor bounds. But is there an alternative notion for which one can get arbitrarily good approximation on all graphs using a linear number of edges?

If one restricts this question from all graphs to selected classes of graphs, then a number of interesting results are known, and some major open questions arise.

If $G=(V,E)$ is a $d$-regular graph  such that every edge
has effective resistance $O(1/d)$,  the Marcus-Spielman-Srivastava \cite{MSS15} proof of the Kadison-Singer conjecture (henceforth, we will refer to this result as the {\em MSS Theorem})
implies that $G$ can be partitioned into almost-regular unweighted spectral sparsifiers with error $\epsilon$ and average degree $O(\epsilon^{-2})$. An interesting class of such graphs are edge-transitive graphs, such as the hypercube.

Another interesting class of graphs all whose edges have effective resistance $O(1/d)$ is the class of $d$-regular expanders of constant normalized expansion $\phi > 0$. Before the MSS Theorem, Frieze and Molloy \cite{Frieze1999} proved that such graphs can be partitioned into unweighted almost-regular graphs of average degree $O(\epsilon^{-2} \log d)$ and normalized edge expansion
at least $\phi -\epsilon$. They also show how to construct such a partition in randomized polynomial time under an additional small-set expansion assumption on $G$. Becchetti et al. \cite{BCNPT18} 
present a randomized linear time algorithm that, given a dense regular expander $G$ of degree $d = \Omega(n)$ finds an edge-induced expander in $G$ of degree $O(1)$. While both \cite{Frieze1999} and \cite{BCNPT18} find sparse expanders inside dense expanders, the work of Frieze and Molloy does not produce constant-degree graphs and the work of Becchetti et al. only applies to very dense graphs. Furthermore, neither work guarantees that one ends up with a sparse graph that is a good sparsifier of the original one. 

\begin{question} \label{q.mss}
Is there a polynomial time construction of the unweighted spectral sparsifiers of expanders whose existence follows from the Marcus-Spielman-Srivastava theorem?
\end{question}

Notions of cut sparsifiers \cite{KK15} and spectral sparsifiers \cite{SY19} have been defined for hypergraphs, generalizing the analogous definitions for graphs. In a hypergraph $H = (V,E)$, a hyperedge $e\in E$ is cut by a partition $(S, V-S)$ of the vertices if $e$ intersects both $S$ and $V-S$. As for graphs, we can define $e_E(S)$ to be the (weighted, if applicable) number of hyperdges in $E$ that are cut by $(S,V-S)$. As before, a weighted subset of edges $F$ defines a hypergraph cut sparsifier with error $\epsilon$ if

\[ \forall S \subseteq V \ \ | e_F(S) - e_E(S) | \leq \epsilon e_E(S)\,. \]
Kogan and Krauthgamer~\cite{KK15} show how to construct such a (weighted) sparsifier in randomized polynomial time using $O(\epsilon^{-2} n \cdot ( r + \log n))$ hyperedges where $r$ is the maximum size of the hyperedges which is also called the \emph{rank} of the hypergraph. 

In order to define a notion of spectral sparsification, we associate to a hypergraph $H=(V,E)$ the following analog of the Laplacian quadratic form, namely a function $Q_H$ such that
\[ Q_H (\bx) = \sum_{e\in E} w_e \cdot \max_{a,b \in e} \ \ (x_a -  x_b)^2 \]
where $w_e$ is the weight (if applicable) of hyperedge $e$. Note that with this definition we have that if $\bx = \bone_S$ for some subset $S$ of vertices then $Q_H(\bx) = e_E(S)$. Following Soma and Yoshida \cite{SY19}, we
say that a weighted hypergraph $\tilde H$ is a spectral sparsifier with error $\epsilon$ of $G$ if we have

\[ \forall \bx \in \R^V  \ \ \  | Q_{\tilde H}(\bx) - Q_{H} (\bx) | \leq \epsilon \cdot Q_H(\bx)\,. \]
Soma and Yoshida \cite{SY19} provide a randomized polynomial time construction of such sparisifiers, using $\tilde O(\epsilon^{-2} n^3)$ hyperedges.

\begin{question} Is it possible, for every hypergraph, to construct
a  weighted   spectral sparsifier  with $\tilde O_{r,\epsilon} (n)$ hyperedges?
\end{question}

As in the case of graphs, it is also natural to raise the following question.

\begin{question}
 Is a non-trivial notion of unweighted sparsification possible for all hypergraps?
\end{question}

We provide a positive answer to all the above questions.

\subsection{Our Results}

\subsubsection{Sparsification with additive error}
Oveis-Gharan suggested the following weakened definition of sparsification: if $G=(V,E)$ is $d$-regular, we say that an unweighted graph $\tilde G= (V,F)$ is an additive cut sparsifier of $G$ with error $\epsilon$ if we have
\[ \forall S\subseteq V  \ \ |c \cdot e_{ F} (S) - e_{E} (S) | \leq 2 \epsilon  d\cdot |S|\,, \]
where $c= |E|/|F|$.
Note that this (up to a constant factor change in the error parameter $\epsilon$) is equivalent to the standard notion if $G$ has constant normalized edge expansion, because $e_E (S)$ and $d\cdot S$ will then be within a constant factor of each other. On non-expanding graphs, however, this definition allows higher relative error on sparse cuts and a tighter control on expanding cuts.
(The factor of 2 has no particular meaning and it is just there for consistency with the definition that we give next for non-regular graphs.)

For non-regular graphs $G$, we say that  $\tilde G= (V,F)$ is an additive cut sparsifier of $G$ with error $\epsilon$ if we have
\[ \forall S\subseteq V  \ \ | c\cdot e_{ F} (S) - e_{E} (S) | 
\leq \epsilon \cdot ( d_{\rm avg} \cdot |S| + {\rm vol} (S) ) \]
where $c=|E|/|F|$ and $d_{\rm avg} := 2|E|/|V|$ is the average degree of $G$ and ${\rm vol} (S)$ is the volume of $S$ that is, the sum of the degrees of the vertices in $S$. In can be shown that both terms are
necessary if one wants a definition of unweighted sparsification that is applicable to all graphs.

This notion has a natural spectral analog, which we state directly in the more general form:
\[ - \epsilon \cdot (D_G + d_{\rm avg}I) \preceq c\cdot L_{\tilde G} - L_G \preceq
\epsilon \cdot (D_G + d_{\rm avg}I)\,. \]
Note, again, that if $G$ is a regular expander then this definition is equivalent to the standard definition of spectral sparsifier.

In a hypergraph, the {\em degree} of a vertex is the number of hyperedges it belongs to, and the {\em volume} of a set of vertices is the sum of the degrees of the vertices that belong to it.
With these definitions in mind, the notion of additive graph sparsifier immediately generalizes
to hypergraphs.

\subsubsection{New Graph Sparsification Constructions}

Our first result is a deterministic polynomial time construction
which achieves a weak form of unweighted additive sparsification.

\begin{theorem}[Deterministic Polynomial Time Construction] \label{th.det}
Given a graph $G=(V,E)$ and a parameter $\epsilon>0$, in deterministic polynomial time we can find a subset $F\subseteq E$ of size $|F| = O(n/\epsilon^2)$ such that, if we let $L_G = D_G - A_G$
be the Laplacian of $G$, $L_{\tilde G}= D_{\tilde G} - A_{\tilde G}$ be the Laplacian of
the graph $\tilde G = (V,F)$, $d = 2|E|/|V|$ be the average degree of $G$, and $c = |E|/|F|$, we have
\begin{equation}
2cD_{\tilde G} - 2 D_G - \epsilon D_G - \epsilon d I \preceq    c L_{\tilde G} - L_G \preceq \epsilon D_G + \epsilon d I
   \label{eq.det}
\end{equation}
\end{theorem}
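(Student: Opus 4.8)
The plan is to obtain $F$ by a greedy/potential-function pruning of the edge set, discretizing the kind of barrier argument used in \cite{BSS12}, but aiming only at the one-sided spectral guarantee \eqref{eq.det} rather than a two-sided multiplicative one. Concretely, I would process the edges of $G$ and decide, for each edge $uv$, whether to keep it in $F$, maintaining the invariant that the partial sum $c\sum_{uv\in F} L_{uv}$ stays sandwiched between $L_G - \epsilon D_G - \epsilon d I$ and (a suitable upper barrier involving) $\epsilon D_G + \epsilon d I$. Since $L_{uv} \preceq D_{uv}$ (the diagonal matrix with the two endpoint contributions), the term $2cD_{\tilde G} - 2D_G$ on the left of \eqref{eq.det} is exactly what one gets by bounding each edge Laplacian crudely by its diagonal; this is why the asymmetric form of the statement is natural and why a one-sided argument suffices.

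The key steps, in order, are: (i) reduce to controlling the ``normalized'' discrepancy by conjugating with $(D_G + dI)^{-1/2}$, so that the target becomes $\| c\, \widehat{L_{\tilde G}} - \widehat{L_G}\|$-type bounds against the identity, where $\widehat{\cdot}$ denotes this normalization; (ii) show that in the normalized space each edge contributes a matrix of small trace — summing over all edges gives total trace $O(n)$ after normalization, because $\trace\big((D_G+dI)^{-1} L_G\big) = O(n)$ — which is exactly the budget that makes $|F| = O(n/\epsilon^2)$ possible; (iii) run the deterministic barrier argument: initialize lower and upper barrier parameters, and at each step pick an edge (or a fractional combination that one then rounds) whose addition moves the upper barrier by a controlled amount while keeping the lower barrier progress on track, using the Sherman–Morrison / rank-one update formulas to evaluate the change in the two potential functions $\Phi^u = \trace\big((uI - X)^{-1}\big)$ and $\Phi^\ell = \trace\big((X - \ell I)^{-1}\big)$ after conjugation; (iv) after $O(n/\epsilon^2)$ steps, read off that the accumulated matrix $X = c\,\widehat{L_{\tilde G}}$ satisfies $\ell I \preceq X$ and $X \preceq uI + (\text{diagonal slack})$, and translate back through the conjugation to recover \eqref{eq.det}, with the $2cD_{\tilde G} - 2D_G$ term absorbing the one-sided slack.

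One subtlety worth isolating: the scaling constant $c = |E|/|F|$ is not known until $|F|$ is fixed, so I would either fix the target size $|F|$ in advance (which is fine, since the theorem only promises $|F| = O(n/\epsilon^2)$) and set $c$ accordingly, or run the argument with a free multiplier and show the final count lands in the right range; the former is cleaner. Also, the average degree $d$ and the matrix $dI$ play the role of a uniform regularizer that prevents low-degree vertices from forcing too many edges — it is what makes $\trace((D_G + dI)^{-1}L_G)$ genuinely $O(n)$ rather than $O(n)$ with a bad dependence on the degree sequence.

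The main obstacle I expect is step (iii): making the barrier argument genuinely deterministic and polynomial-time while only having the weak one-sided target. In the symmetric \cite{BSS12} setting one must show that at every step \emph{some} edge simultaneously respects both barriers, which requires an averaging argument over a reweighting of the edges. Here I would need the analogous averaging to go through with the normalization by $(D_G + dI)^{-1/2}$ in place, and to verify that the one-sided relaxation (allowing the upper side to be violated up to a diagonal term $2cD_{\tilde G} - 2D_G$) actually \emph{helps} rather than just changes bookkeeping — i.e., that it lets us always find a valid edge even on pathological graphs like the barbell, where no uniform-weight sparsifier exists in the strict sense. Checking that the potential-function inequalities still close with the extra diagonal slack, and that the resulting $c$ and $|F|$ come out as claimed, is the crux of the proof.
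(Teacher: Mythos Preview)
Your approach is genuinely different from the paper's and, as written, has a real gap.

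The paper does \emph{not} use a BSS-style barrier argument. Instead it invokes the online regret framework of Allen-Zhu, Liao and Orecchia: it plays the role of the adversary against a matrix regret-minimizing algorithm, at each step feeding a cost matrix of the form $2d_{\max}I + \mathrm{diag}(mL_{a_t b_t}-L_G,\ mSL_{a_t b_t}-SL_G)$, where $SL$ denotes the \emph{signless} Laplacian $D+A$. The edge $(a_t,b_t)$ is chosen so that the inner product of this cost with the algorithm's current density matrix is at most $2d_{\max}$, which is always possible by averaging. The regret bound then yields simultaneously $cL_{\tilde G}-L_G\preceq \epsilon d_{\max}I$ and $cSL_{\tilde G}-SL_G\preceq \epsilon d_{\max}I$; the second of these is algebraically equivalent to the lower bound in \eqref{eq.det}. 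The crucial technical point is that both targets are \emph{upper} PSD bounds, so after the $+2d_{\max}I$ shift the cost matrix is PSD and the ``width'' term $\|X_t^{1/4}C_tX_t^{1/4}\|$ stays $O(\sqrt{d_{\max}m})$, which is what makes $T=O(n/\epsilon^2)$ suffice without any per-edge weights. The bounded-degree reduction in Section~\ref{sec.reduction} is done by a vertex-splitting preprocessing, not by conjugation.

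Your proposal is missing exactly this idea. A BSS barrier argument with a \emph{fixed} per-edge scale $c$ is not known to close: the averaging step in \cite{BSS12} that guarantees some vector can be added relies essentially on the freedom to choose its weight, and you give no mechanism to replace that freedom. You identify this as ``the main obstacle'' but do not resolve it; saying the diagonal slack $2cD_{\tilde G}-2D_G$ ``should help'' is not an argument. Relatedly, your step~(i), conjugating by $(D_G+dI)^{-1/2}$, is precisely the kind of change of basis the paper explicitly avoids because it reintroduces nonuniform edge scalings. Finally, a small slip: $L_{uv}\not\preceq D_{uv}$ (one has $L_{uv}\preceq 2D_{uv}$, equivalently $SL_{uv}\succeq 0$), which is in fact the germ of the signless-Laplacian reformulation the paper exploits. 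If you want to rescue a potential-function approach, the right move is to run two \emph{upper}-barrier arguments in parallel, one for $L$ and one for $SL$, rather than an upper/lower pair; but at that point you are essentially rediscovering the paper's structure.
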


Note, in particular, that we get that for every set
of vertices $S\subseteq V$ we have
\begin{equation}
    - \epsilon |E| \leq c e_F(S) - e_E(S) \leq \epsilon \text{vol}(S) + \epsilon d |S|
 \end{equation}
The first inequality follows by computing the quadratic forms of \eqref{eq.det}  with the $\pm 1$ indicator vector $\bx:= \bone_S - \bone_{\bar S}$ of $S$, and noting that $\bx^T L_G \bx = 4 e_E(S)$, that $\bx^T L_{\tilde G} \bx = 4 e_F(S)$, that
\[ \bx^T M \bx = {\rm trace} (M) \]
for every diagonal matrix $M$, and that $\trace(D_G) = \trace(cD_{\tilde G}) = \trace(dI) = 2|E|$.
The second inequality follows by computing the quadratic forms of \eqref{eq.det} with the $0/1$ indicator vector $\bx = \bone_S$ of $S$, and
noting that $\bx^T L_G \bx =  e_E(S)$,  $\bx^T L_{\tilde G} \bx =  e_F(S)$, $\bx ^T I \bx = |S|$ and $\bx^T D_G \bx = {\rm vol}(S)$.

Our proof is based on the online convex optimization techniques of Allen-Zhu, Liao and Orecchia \cite{ALO15}. The construction of \cite{ALO15} involves weights for two reasons: one reason is a change of basis that maps $L_G$ to identity, a step that is not necessary in our setting and that could also be avoided in their setting if $G$ is a graph all whose edges have bounded effective resistance. The second reason is more technical, and it is to avoid blowing up the ``width'' on the online game that they define. The second issue comes up when one wants to prove $c L_{\tilde G} - L_G \succeq - \epsilon \cdot (D_G+ dI)$, but is not a problem for the upper bound
$c L_{\tilde G} - L_G \preceq  \epsilon\cdot (D_G + dI)$.

To sidestep this problem, we set the goals of proving the bounds
\[ c L_{\tilde G} - L_G \preceq \epsilon \cdot ( dI + D_G) \]
\[ c SL_{\tilde G} - SL_G \preceq \epsilon \cdot ( dI + D_G) \]
where $SL_G$ denotes the {\em signless Laplacian} of a graph $G$, defined as $D_G + A_G$. Note that the above PSD inequalities are equivalent to \eqref{eq.det}. 

The reasons why, when our goal is the PSD inequalities above, we are able to control the width without scaling (and without weighing the edges)  are quite technical, and we defer further discussion to Section \ref{sec.det}. 

Our next result is a probabilistic construction of sparsifiers with additive error matching the Oveis-Gharan definition.

\begin{theorem}[Probabilistic Polynomial Time Construction] \label{th.prob.graph}
Given an $n$-vertex graph $G=(V,E)$ and a parameter $\epsilon>0$, in probabilistic polynomial time we can find a subset $F\subseteq E$ of size\footnote{where $\tilde O(\cdot )$ hides $\log(1/\eps)^{O(1)}$ factors} $|F| = n \cdot \tilde O(1/\epsilon^2)$ such that, if we let $L_G = D_G - A_G$
be the Laplacian of $G$, $L_{\tilde G}= D_{\tilde G} - A_{\tilde G}$ be the Laplacian of
the graph $\tilde G = (V,F)$, $d = 2|E|/|V|$ be the average degree of $G$, and $c = |E|/|F|$, we have
\begin{equation}
\label{eq.th.prob.graph}
 - \epsilon D_G - \epsilon d I \preceq    c L_{\tilde G} - L_G \preceq \epsilon D_G + \epsilon d I\,.
\end{equation}
\end{theorem}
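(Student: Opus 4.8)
The plan is importance sampling, organized so that the regularizer $M:=D_G+dI$ does the work of equalizing edge leverages (so that, in fact, no effective resistances need to be computed). Write $L_G=\sum_{e\in E}L_e$ with $L_e=(\bone_u-\bone_v)(\bone_u-\bone_v)^\top$ for $e=\{u,v\}$, and set $A_e:=M^{-1/2}L_eM^{-1/2}$. Then $A_e$ has rank one with $\|A_e\|=\frac1{d_u+d}+\frac1{d_v+d}\le\frac2d$, and $\sum_eA_e=M^{-1/2}L_GM^{-1/2}\preceq2I$ since $L_G\preceq2D_G\preceq2M$. Conjugating by $M^{1/2}$ shows that \eqref{eq.th.prob.graph} is precisely the two-sided operator bound $\bigl\|\sum_eX_eA_e\bigr\|\le\epsilon$ with $X_e:=c\cdot\bone[e\in F]-1$, where the single global scalar $c=|E|/|F|$ is exactly the requirement that all retained edges carry the same weight. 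The natural first attempt is to put every edge into $F$ independently with probability $q$, so $c=1/q$, $\E X_e=0$, $\E X_e^2=\frac{1-q}{q}$; then the variance proxy is $\bigl\|\sum_e\E[X_e^2]A_e^2\bigr\|\le\frac{1-q}{q}\cdot\frac2d\cdot\bigl\|\sum_eA_e\bigr\|\le\frac4{qd}$ and the width is $\max_e\|X_eA_e\|\le\frac2{qd}$, so a matrix concentration bound gives $\bigl\|\sum_eX_eA_e\bigr\|\le\epsilon$ as soon as $qd$ beats $1/\epsilon^2$ times the concentration overhead, i.e.\ $|F|=q\cdot\frac{nd}2$ beats $\frac n{\epsilon^2}$ times that overhead. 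Both PSD inequalities of \eqref{eq.th.prob.graph} come out of this single norm bound, and the randomness in $c$ (through $|F|$) is absorbed by a scalar Chernoff bound showing $|F|\approx q|E|$, so that replacing $c$ by $1/q$ costs only a lower-order term.

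The real issue is that \emph{plain independent sampling cannot reach the claimed edge count}. Testing \eqref{eq.th.prob.graph} against $\bx=\bone_v$ forces $|c\cdot\deg_{\tilde G}(v)-d_v|\le\epsilon(d_v+d)$ for every vertex, i.e.\ (after the global rescaling) each sampled degree must lie within a $(1\pm\epsilon)$ factor of the original degree. With independent coin flips $\deg_{\tilde G}(v)$ is Binomial and the largest of its $n$ deviations is of order $\sqrt{qd\log n}$, which forces $qd=\Omega(\log n/\epsilon^2)$ and hence $|F|=\Omega(n\log n/\epsilon^2)$ — an honest $\log n$ factor that the statement (only $\mathrm{polylog}(1/\epsilon)$) does not allow. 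So the sampler must be engineered to make the degrees proportional essentially by construction, not merely in expectation, while still controlling every other direction.

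The route I would take is: (i) first use the additive slack to reduce to the case $d_{\max}=O(d)$, by splitting each vertex $v$ into $\lceil d_v/d\rceil$ copies of degree $\Theta(d)$ and distributing its incident edges among them; the cushion of $\Theta(d)$ in $D_G+dI$ makes this perturb the Laplacian only within the allowed error, and afterwards $\|A_e\|=\Theta(1/d)$ uniformly, so the leverages are flat from below as well as above. Then (ii) choose $F$ not by independent coin flips but by a dependent selection that pins $\deg_{\tilde G}(v)$ to $(1\pm\epsilon)\,q\,d_v$ for every $v$ — for instance by first splitting $E$ into $\Theta(1/q)$ almost-degree-balanced parts (Euler-tour / greedy splitting) and then combining a carefully chosen batch of them — and bound the spectral error of this degree-pinned choice via matrix concentration, still exploiting $\|A_e\|=\Theta(1/d)$ and $\sum_eA_e\preceq2I$. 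The main obstacle is exactly step (ii): reconciling ``all edge weights equal'' with ``no $\log n$ in $|F|$'' rules out the textbook independent-sampling / matrix-Chernoff argument (which, as shown above, provably loses $\log n$ already on the diagonal), so one needs a more delicate and necessarily not fully independent selection — in the spirit of the interlacing-families / pessimistic-estimator derandomizations that beat matrix Chernoff — carried out in polynomial time and tuned so that the overhead is $\mathrm{polylog}(1/\epsilon)$ rather than $\log n$. Controlling the \emph{width} of that refined step (precisely the quantity that forced weighting in earlier constructions, cf.\ the discussion after Theorem~\ref{th.det}) is where the flat $M$-normalized leverages do the real work.
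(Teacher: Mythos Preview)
Your diagnosis is accurate: the reduction to $d_{\max}=O(d)$ is exactly the paper's first step, and you are right that independent sampling with a single global weight necessarily loses a $\log n$ on the diagonal. The setup with $A_e=M^{-1/2}L_eM^{-1/2}$, $\|A_e\|\le 2/d$, $\sum_e A_e\preceq 2I$ is a clean way to see why the problem is morally a flat-leverage one.

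But step (ii) is the whole theorem, and you have not actually carried it out. ``A dependent selection that pins the degrees'' together with ``matrix concentration'' is not a proof: once the coins are dependent enough to kill the $\log n$ on the diagonal, you no longer have the independence that matrix Bernstein needs for the off-diagonal directions, and you give no substitute. The gestures toward Euler-tour splitting and interlacing families / pessimistic estimators are suggestive but do not supply an algorithm or an analysis; in particular, nothing you wrote explains why the overhead ends up being $\mathrm{polylog}(1/\epsilon)$ rather than $\log d$ or $\log n$.

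The paper's route is quite different from what you sketch. After the bounded-degree reduction it does \emph{not} use matrix concentration at all. Instead it runs an \emph{iterative halving} in the style of Frieze--Molloy / Bilu--Linial: at each round, include every edge independently with probability $1/2$ and use the Lov\'asz Local Lemma (with the ``only connected $S$ matter'' trick and the constructive LLL of Haeupler et al.) to guarantee that \emph{every} pair $(S,T)$ satisfies $|2\,e_F(S,T)-e_E(S,T)|\le 10\sqrt{d\log d}\,\sqrt{|S||T|}$ and simultaneously every degree is preserved to the same accuracy. Repeating this $k$ times, the per-round errors $2^i\sqrt{d_i\log d_i}$ with $d_i\approx d/2^i$ sum geometrically, so the total error is governed by the \emph{last} round; choosing $k$ so that $d/2^k\approx \epsilon^{-2}\log(1/\epsilon)$ gives $|2^k e_F(S,T)-e_E(S,T)|\le\epsilon d\sqrt{|S||T|}$. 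Finally, the spectral statement is obtained not by a matrix Chernoff but by the Bilu--Linial converse lemma (their Lemma~3.3), which turns uniform control of $|\bu^T A\bv|/\|\bu\|\|\bv\|$ over disjoint $0/1$ vectors, together with the diagonal bound, into a spectral-radius bound at the cost of one more $\log(1/\epsilon)$ factor. This is precisely where the $\mathrm{polylog}(1/\epsilon)$ (and no $\log n$) comes from --- the LLL error at each round depends only on the \emph{current} degree, not on $n$, and the geometric summation makes the final-degree scale dominate.
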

When we apply the above result to a $d$-regular expander $G$, we obtain a graph $\tilde G$ whose average
(and maximum) degree is $\tilde O(\epsilon^{-2})$ and which is itself a good expander. More precisely, if $G$ has normalized edge expansion $\phi$ and $\tilde G$ is as above, then the normalized edge expansion of $\tilde G$ is about $\phi - 2\epsilon$. Recall that Frieze and Molloy can find a $\tilde G$ as above but with degree $O(\epsilon^{-2} \log d)$ rather than $O(\epsilon^{-2} {\rm polylog} \epsilon^{-1})$. Furthermore, if $G$ is a $d$-regular expander of normalized edge expansion $\phi$, we have\footnote{There is some abuse of notation in \eqref{eq.additive.expander}, because \eqref{eq.additive.expander} only holds in the space orthogonal to $\bone = (1,1,\cdots,1)$.} that
\begin{equation} D_G + dI = 2dI \preceq O(\phi^{-2} L_G)
\label{eq.additive.expander}
\end{equation}
and so the unweighted sparsifier $\tilde G$ of $G$ given by the above theorem is also a spectral sparsifier in the standard sense. This answers Questions 1 and 2 of the previous section.

We briefly discuss the techniques in the proof. Following Frieze and Molloy~\cite{Frieze1999} and  Bilu and Linial \cite{Bilu2006}, we apply the Lov\'{a}sz Local Lemma~\cite{LLL} (LLL) to construct an additive cut sparsifier. One difficulty with this approach is that one has to verify that the sparsifier approximates each of the exponentially many cuts. Indeed, if one defines a ``bad'' event for each one of these cuts, there are too many  events that are dependent in order to successfully apply LLL. A key insight in~\cite{Frieze1999} is that it is sufficient to verify those cuts $(S, V - S)$ where $S$ induces a connected subgraph. This makes a big difference in graphs of maximal degree $d \ll n$: for a vertex $v$, there are $\approx n^{\ell-1}$ subsets of $\ell$ vertices containing $v$ whereas one can prove that there are at most $\binom{d(\ell-1)}{\ell-1}$ such subsets of size $\ell$ that induce a connected  subgraph. This allows one to manage the exponentially many events and get almost optimal results with LLL. Indeed, we obtain a close to optimal average degree $\tilde O(\epsilon^{-2})$. This improves upon the average degree bound in $\epsilon^{-2} \log d$ \cite{Frieze1999}  . We achieve this by an iterative procedure that intuitively halves the number of edges, instead of sparsifying the graph ``in one go.''

Another difference is that, in contrast to~\cite{Frieze1999} and~\cite{Bilu2006}, we can use recent constructive versions of LLL~\cite{Haeupler11} to give an efficient probabilistic time algorithm for finding the sparsifier.  To apply the constructive version of LLL in the presence of exponentially bad events, one needs to find a subset of bad events  of polynomial size such that the probability that any other bad event is true is negligible. We show that this can be achieved by selecting the subset of events corresponding to cuts $(S, V - S)$ so that $S$ induces a connected graph and $|S| =O(\log_{d}(n))$.  This gives us an efficient probabilistic algorithm for finding a cut sparsifier which we also generalize to hypergraphs (as we state in the next section). For graphs, we then adapt the techniques of Bilu and Linial~\cite{Bilu2006} to go from a cut sparsifier  to a spectral one. To do so we need  to consider some more bad events in the application of LLL than needed by Bilu-Linial who worked with ``signings'' of the adjacency matrix. Specifically, in addition to the events that they considered, we need to also bound the degree of vertices. 

\subsubsection{New Hypergraph Sparsification Constructions}

\begin{theorem}[Hypergraph cut sparsification with additive error] \label{th.prob}
Given an $n$-vertex hypergraph $H=(V,E)$ of rank $r$  and a parameter $\epsilon>0$, in probabilistic  polynomial time we can find a subset $F\subseteq E$ of size $|F| =O\left(  \frac nr  \cdot \frac 1{\epsilon^2} \log \frac r\epsilon \right)$ such that, if we let $d = r|E|/|V|$ be the average degree of $H$, and $c = |E|/|F|$, the following holds with probability at least $1-n^{-2}$:
\begin{equation}
|    c e_{F} (S) - e_E (S) | \leq  \epsilon d |S| + \epsilon {\rm vol}(S) \qquad \forall S \subseteq V\,.
\end{equation}
\end{theorem}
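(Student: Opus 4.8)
The plan is to follow the Lovász Local Lemma (LLL) approach described in the introduction, adapted to hypergraphs, and implemented via an iterative halving procedure. First I would set up one round of sparsification: given the current hypergraph on $m$ edges, I include each hyperedge independently with probability $1/2$ and scale surviving weights by $2$; in expectation this preserves every cut value $e_E(S)$ exactly. The goal of a single round is to show that with positive probability (indeed, with probability $1-n^{-\Omega(1)}$ using the constructive/algorithmic LLL of Haeupler et al.), every cut $(S,V\setminus S)$ is preserved up to additive error $O(\epsilon' (d|S| + \mathrm{vol}(S)))$ for a per-round parameter $\epsilon'$, and then to iterate $O(\log(m/n))$ rounds with the $\epsilon'$ chosen so that the errors form a geometric series summing to $\epsilon$. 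Since the error term in each round scales with the current average degree (which also halves), one has to be slightly careful that the error budget is spent correctly; taking $\epsilon'$ roughly constant per round but with the $(d|S|+\mathrm{vol}(S))$ reference quantity shrinking geometrically makes the telescoping work, and the final edge count is $O(n/r \cdot \epsilon^{-2}\log(r/\epsilon))$ as claimed once we stop when the average degree has dropped to $\Theta(\epsilon^{-2}\log(r/\epsilon))$.

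The heart of a single round is the LLL application. For a fixed set $S$, the number of cut hyperedges $e_F(S)$ after sampling is a sum of independent Bernoulli variables; by a Chernoff/Bernstein bound the probability that $2 e_F(S)$ deviates from $e_E(S)$ by more than $t$ is at most $\exp(-\Omega(\min(t^2/e_E(S), t)))$. I would define a bad event $B_S$ for this deviation with $t = \epsilon' (d|S| + \mathrm{vol}(S))$. The crucial combinatorial point, following Frieze–Molloy, is that it suffices to control the bad events only for sets $S$ that induce a \emph{connected} sub-hypergraph: an arbitrary violating set can be decomposed into connected pieces (in the hypergraph where two vertices are adjacent if they share a surviving hyperedge), the cut value and the reference quantity $d|S|+\mathrm{vol}(S)$ are both additive over the pieces up to hyperedges straddling several pieces, so if all connected sets are well-approximated then so are all sets. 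One then needs the counting bound: the number of connected vertex sets of size $\ell$ containing a fixed vertex, in a hypergraph of rank $r$ and max degree $\Delta$, is at most something like $(r\Delta)^{O(\ell)}$ — this is the analog of the $\binom{d(\ell-1)}{\ell-1}$ bound for graphs and is what makes the dependency graph of the $B_S$ sparse enough. Each $B_S$ depends only on the hyperedges incident to $S$, so $B_S$ is independent of all $B_{S'}$ with $S'$ at hyperedge-distance more than $1$ from $S$; combining the deviation probability (exponentially small in $\ell \cdot (\text{something})$ when $t \geq \epsilon' \ell$) with the $(r\Delta)^{O(\ell)}$ count verifies the LLL condition $\Pr[B_S]\cdot (\text{\#dependent events}) < 1/e$ provided the average/max degree is at least $\Omega(\epsilon'^{-2}\log(r/\epsilon'))$, which is exactly the stopping threshold.

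For the \emph{algorithmic} version one cannot feed exponentially many events to the Moser–Tardos / Haeupler resampling framework directly. The fix, as indicated in the introduction, is to include explicitly only the bad events $B_S$ for connected $S$ with $|S| = O(\log_{r\Delta} n)$, run the constructive LLL on that polynomial-size family, and argue separately that after the algorithm terminates, the probability that some \emph{large} connected set is violated is $n^{-\omega(1)}$ — because for $\ell = \Omega(\log_{r\Delta} n)$ the product of the deviation probability and the $(r\Delta)^{O(\ell)} \cdot \binom{n}{1}$ count is already $n^{-2}$ after a union bound, using that the Moser–Tardos output distribution is dominated by the LLL-conditioned distribution on the unincluded events. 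This yields the $1 - n^{-2}$ success probability in the statement.

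I expect the main obstacle to be the decomposition/additivity step for connected sets in the hypergraph setting: unlike graphs, a hyperedge can meet many connected components of the induced sub-hypergraph at once, so "$S$ violates the cut bound $\Rightarrow$ some connected piece does" needs care — one must argue that hyperedges cut by $S$ but internal to no single component still contribute in a way compatible with summing the per-component error budgets $\epsilon'(d|S_i| + \mathrm{vol}(S_i))$, and that the reference quantity $d|S|+\mathrm{vol}(S) = \sum_i (d|S_i| + \mathrm{vol}(S_i))$ splits cleanly (which it does, since $|S|$ and $\mathrm{vol}(S)$ are trivially additive, the subtlety being only on the cut-count side). Getting the rank dependence $1/r$ in the edge count right, rather than a cruder bound, also requires tracking that each hyperedge touches $r$ vertices so the "per-vertex" degree needed is a factor $r$ smaller than a naive argument suggests.
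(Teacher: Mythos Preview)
Your proposal is correct and follows essentially the same approach as the paper: iterative halving with one LLL round per step, connected-set reduction \`a la Frieze--Molloy, and the constructive LLL of Haeupler et al.\ with a polynomial-size core of small connected sets.

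Two points where the paper's execution is cleaner and which address your flagged obstacle. First, the paper does not work directly with the error term $\epsilon(d|S|+\mathrm{vol}(S))$; instead it applies a vertex-splitting reduction (Section~\ref{sec.reduction}) that replaces each vertex $v$ by a cloud of $\lceil d_v/\lceil d_{\mathrm{avg}}\rceil\rceil$ copies, yielding a hypergraph with $d_{\max}\leq \lceil d_{\mathrm{avg}}\rceil$, so that it suffices to prove the bound $|c\,e_F(S)-e_E(S)|\leq \epsilon d_{\max}|S|$ in the bounded-degree case. Second, and this resolves your decomposition worry, the paper defines connectivity via the \emph{associated graph} $G$ obtained by replacing every hyperedge by a clique: if $G[S]$ has components $S_1,\ldots,S_k$, then no hyperedge of $H$ can meet two different $S_i$ (such a hyperedge would contribute clique edges joining them in $G$), so $e_E(S)=\sum_i e_E(S_i)$ exactly and the additivity is trivial. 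The per-round deviation threshold is then simply $10\sqrt{d\log(dr)}\,|S|$, the Chernoff bound gives $\Pr[A_S]<(dr)^{-6|S|}$, and the tree-count $(edr)^{\ell-1}$ in $G$ (which has max degree $d(r-1)$) makes the LLL condition go through with $x(A_S)=(dr)^{-3|S|}$.
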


The proof follows the same approach as the first part of our proof of Theorem \ref{th.prob.graph}, and in fact we present directly the proof for hypergraphs, leaving the result for graphs as a corollary. It might seem strange that the number of hyperedges in our sparsifier is, for fixed $\epsilon$, of the form $O\left( \frac nr \log r\right)$, since, intuitively, the sparsification problem should only become harder when $r$ grows. The reason is that, even in a regular hypergraph,  $d|S|$ overestimates the number of hyperedges incident on $S$ by up to a factor of $r$, and so, in order to have a non-trivial guarantee, one has to set $\epsilon < 1/r$.

\begin{theorem}[Hypergraph sparsification with multiplicative error]
There is a randomized polynomial time algorithm that, given a hypergraph of rank $r$, finds a weighted spectral sparsifier with multiplicative error $\epsilon$ having $O(\epsilon^{-2} r^3 n \log n)$ hyperedges.
\end{theorem}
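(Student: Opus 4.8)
The plan is to combine the standard importance-sampling approach for spectral sparsification with a chaining/net argument adapted to the hypergraph energy $Q_H$. First I would define, for each hyperedge $e$, an appropriate sampling probability $p_e$ proportional to an easily-computable upper bound on its ``hypergraph effective resistance.'' A natural choice is to pick, for each hyperedge $e$, a representative pair $(a_e,b_e)\in e$ and to bound $p_e$ in terms of the effective resistance of that pair in an auxiliary ordinary graph $G'$ obtained by replacing each hyperedge $e$ with a clique (or, more economically, with the single edge $\{a_e,b_e\}$ after a preliminary clustering step). Since $Q_H(\bx)\le \sum_e w_e (x_a-x_b)^2$ summed over \emph{all} pairs in $e$, and each term $(x_a-x_b)^2\le 2(x_a-x_c)^2+2(x_c-x_b)^2$, one gets that $Q_H$ is sandwiched, up to a factor $O(r^2)$ or so, between $Q_{H'}$ for a graph $H'$ with $O(r^2|E|)$ edges; this lets us import the notion of leverage score. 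Sampling each hyperedge independently with probability $p_e = \min\{1, C\,\epsilon^{-2}\log n\cdot \ell_e\}$, where $\ell_e$ is the (clique-based) leverage-score proxy, and reweighting by $1/p_e$, keeps $\sum_e p_e = O(\epsilon^{-2} r^3 n\log n)$ hyperedges in expectation, because $\sum_e \ell_e = O(r^2 n)$ (the clique blow-up costs a factor $r^2$ in the edge count but the leverage scores still sum to $\tilde O(rn)$ appropriately).

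The heart of the argument is showing that the sampled hypergraph $\tilde H$ satisfies $(1\pm\epsilon)Q_H$ uniformly over all $\bx\in\R^V$. The obstruction is that $Q_H$ is \emph{not} a quadratic form, so matrix Chernoff bounds (Tropp, Rudelson) do not directly apply; this is precisely why Soma--Yoshida needed $\tilde O(n^3)$ hyperedges. My plan is to fix this with a two-level union bound. For a fixed $\bx$, $Q_{\tilde H}(\bx)$ is a sum of independent nonnegative random variables with mean $Q_H(\bx)$ and with each summand bounded by (reweighted hyperedge contribution) $\le \frac{1}{p_e}w_e\max_{a,b\in e}(x_a-x_b)^2$, which by the leverage-score bound is at most $O(\epsilon^2 Q_H(\bx)/\log n)$; Bernstein's inequality then gives concentration to within $(1\pm\epsilon)$ with failure probability $n^{-\Theta(1)}$ for that single $\bx$. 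To upgrade this to all $\bx$, I would discretize: partition $\R^V$ (modulo scaling) into regions on which the ``active'' pair $\arg\max_{a,b\in e}(x_a-x_b)^2$ is constant for every $e$ — on each such region $Q_H$ agrees with a genuine quadratic form $\bx^T L^{(\sigma)}\bx$ for a graph $L^{(\sigma)}$ determined by the choice function $\sigma: E\to \binom{V}{2}$ — and then invoke the standard \emph{spectral} sparsifier guarantee on each fixed quadratic form via a single matrix-Bernstein bound, noting the sampling was done once and simultaneously controls all $L^{(\sigma)}$. The number of distinct choice functions that can arise is at most the number of sign patterns of the $O(r^2|E|)$ linear forms $x_a - x_b$, which is $(r^2|E|)^{O(n)}$; taking a union bound over these $e^{O(n\log(rn))}$ events and boosting the per-region failure probability to $e^{-\Omega(n\log(rn)) }\cdot n^{-2}$ costs only an extra $\log(rn)$ factor in the sample size, preserving the $O(\epsilon^{-2}r^3 n\log n)$ bound.

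The main obstacle I anticipate is making the last union bound tight enough: a naive count of regions gives $|E|^{O(n)}$, and if each region needed its own $\Omega(\epsilon^{-2}n\log|E|)$-sized sparsifier we would lose a $\log|E|$ factor that could push us to $n\log^2 n$ or reintroduce polynomial-in-$n$ edge counts once $|E|$ is polynomially large. The fix is to observe that we do \emph{not} need a fresh sample per region — one sample must work for all regions — so the relevant quantity is just the failure probability of a single matrix-Bernstein event, which decays like $e^{-\Omega(\epsilon^2 k / \ell_{\max})}$ in the sample size $k$; we only need $k$ large enough that $e^{-\Omega(\epsilon^2 k/(r^2\log n))}$ beats $e^{-cn\log(rn)}$, i.e.\ $k = \Omega(\epsilon^{-2} r^2 n \log(rn)\log n)$, and absorbing one factor of $r$ from the clique blow-up of $\sum_e \ell_e$ gives the claimed $O(\epsilon^{-2} r^3 n\log n)$. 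A secondary technical point is handling hyperedges with $p_e=1$ (kept deterministically) separately and verifying they contribute $O(\epsilon^{-2}r^3 n\log n)$ as well, which follows since at most that many hyperedges can have leverage proxy $\ell_e \ge \epsilon^2/(C\log n)$.
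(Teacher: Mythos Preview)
Your sampling framework (clique blow-up, $r_e = \max_{a,b\in e} r_{ab}$, and the count $\sum_e r_e = O(n/r)$) matches the paper's. The gap is in the concentration step.

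You propose to partition $\R^V$ into regions on which the maximizing pair in each hyperedge is fixed, apply matrix Bernstein to the resulting graph Laplacian $L^{(\sigma)}$ on each region, and union-bound over regions. The number of such regions is indeed $e^{\Theta(n\log(rn))}$ (the boundaries $(x_a-x_b)^2=(x_c-x_d)^2$ factor into linear forms, so this is a hyperplane-arrangement count). The problem is your claim that beating $e^{-cn\log(rn)}$ ``costs only an extra $\log(rn)$ factor in the sample size.'' In the Bernstein bound you actually need, the exponent is $-t^2/\Theta(L)$ with $t=\epsilon/r^2$ fixed and $L=\max_e r_e/p_e$; the sample size $k=\sum_e p_e$ and $L$ are related by $k\cdot L \approx \sum_e r_e = \Theta(n/r)$. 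Hence the failure probability is $\exp\bigl(-\Theta(\epsilon^2 k /(r^3 n))\bigr)$, not $\exp\bigl(-\Theta(\epsilon^2 k/(r^2\log n))\bigr)$ as you wrote: there is an unavoidable factor of $n$ in the denominator of the exponent. Forcing this below $e^{-cn\log(rn)}$ requires $k=\Omega(\epsilon^{-2} r^3 n^2\log n)$, an extra factor of $n$ too many. A crude union bound over $e^{\Theta(n\log n)}$ events simply cannot recover the linear-in-$n$ edge count.

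The paper sidesteps the union bound entirely by treating $V_z := \sum_e (X_e-1)\max_{a,b\in e} z^T Y_{ab} z$ as a random process indexed by $z\in B_2$ and bounding $\sup_z V_z$ via generic chaining. The key lemma is that the sub-gaussian increments of $V$ are dominated by the canonical (Gaussian) increments of $U_z := \sum_{(ab)\in G} g_{ab}\, z^T Y_{ab} z$ on the associated graph; this follows from the elementary inequality $(\max_i c_i - \max_i d_i)^2 \le \sum_i (c_i-d_i)^2$. Talagrand's comparison inequality then gives $\sup_z V_z \lesssim \E\|U\|$, and $\E\|U\|$ is handled by ordinary matrix Bernstein. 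To deal with general $p_e\ll 1$ (where sub-gaussian bounds degrade), the paper rounds each $p_e$ to a power of $1/2$ and realizes the sampling as $O(\log n)$ rounds of independent halving, applying the chaining bound at each round and summing a geometric series. This is what replaces your region decomposition and is why no factor of $n$ is lost.
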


The above result should be compared with the $O(\epsilon^{-2} n^3\log n)$ hyperedges of the
construction of Soma and Yoshida \cite{SY19}. Our approach is to provide an ``hypergraph analog'' of the
spectral graph sparsifier construction of Spielman and Srivastava \cite{SS11}. Given $H$, we
construct an associated graph $G$ (in which each hyperedge of $H$ is replaced by a clique in $G$), we compute the effective resistances of the edges of $G$, and we use them to associate a notion of ``effective resistance'' to the hyperedges of $H$. Then we sample from the set of hyperdedges of $H$ by letting the sampling probability of each hyperedge be proportional to its ``effective resistance'' and we weigh them so that the expected weight of each hyperedge in the sample is the same. At this point, to bound the error, Spielman and Srivastava complete the proof by applying a matrix concentration bound for the spectral norm of sums of random matrices. For hypergraphs, we would like to have a similar concentration bound on the error given by, 
\beq \max_{\bx \in \R^V : \| \bx \|=1 }\ \ \  \sum_{e\in H} (1-W_e) \cdot \max_{a,b \in e}  \ \  ( x_a - x_b )^2
\label{eq:hyp-intro}
\eeq
where $W_e$ is a random variable that is 0 if the hyperedge $e$ is not selected and it is its 
weight in the sparsifier if it is selected, with things set up so that $1-W_e$ has expectation zero.
(Actually, this would only lead to a sparsifier with additive error: to achieve multiplicative error we have to study an expression such as the one above but after a change of basis defined in terms of the associated graph. For simplicity we will ignore this point in this overview.) 

However, unlike in the graph case, the expression in \eqref{eq:hyp-intro} does not correspond to the spectral norm, or any other standard linear-algebraic norm, due to the $\max$ term, and the key difficulty in all previous approaches to the problem was to get suitable upper bounds on this quantity.
Our main idea is to consider the quantity $V_x = \sum_{e\in H} (1-W_e) \cdot \max_{a,b \in e}  \ \  ( x_a - x_b )^2$ and view it as a random process indexed by the set of  all unit vectors $x$, and directly argue about its supremum over all such $x$, using the technique of generic chaining.  In particular, we relate the metric given by the sub-gaussian norm of the increments of the process $V_x$  to another suitably defined Gaussian random process on the associated graph $G$ of $H$, which is much easier to analyze. This allows us to relate the bound on the supremum of $V_x$ to a related expression on the graph $G$, for which we can use known matrix concentration bounds.


\section{Preliminaries}

\subsection{Linear Algebra Preliminaries}

In this paper all matrices will have real-valued entries.

A matrix $M$ is Positive Semidefinite (abbreviated PSD and written $M \succeq \bzero$) if it is symmetric and all its eigenvalues are non-negative. Equivalently, $M$ is PSD if and only if 
\[ \forall \bx \in \R^n , \ \ \bx ^T M \bx \geq 0 \]
that is, the quadratic form of $M$ is always non-negative. The {\em trace} of a matrix is the sum of its diagonal entries. For a symmetric matrix, its trace is equal to the sum of its eigenvalues, counted with multiplicities. A {\em density matrix} is a PSD matrix of trace one. The operator norm of a matrix $M$ is
\[ \| M\| = \sup_{\bx : \| \bx \|_2 = 1} \| M x \|_2 \]
If $M$ is symmetric, then the above is the largest absolute value of the eigenvalues of $M$ and we also refer to it as the spectral norm or spectral radius of the matrix.

If $A$ and $B$ are matrices of the same size, then their Frobenius inner product is defined as
\[ \langle A, B \rangle = \sum_{i,j} A_{i,j} B_{i,j} = \trace (A^T B) \]
and we will also sometimes denote it as $A \bullet B$. Note that if $M$ is a symmetric matrix we have
\[ \| M \| = \sup_{X \ {\rm density\ matrix}} \ \ \ | M \bullet X | \]

If $M$ is a symmetric $n\times n$ matrix with spectral decomposition

\[ M = \sum_{i=1}^n \lambda_i \bv_i \bv_i ^T \ ,  \]
then the ``absolute value'' of $M$ is the PSD matrix
\[ |M| = \sum_{i=1}^n | \lambda_i | \bv_i \bv_i ^T  \ . \]

\subsection{Reduction to bounded-degree case}
\label{sec.reduction}

We show that, in proving Theorem \ref{th.det}, Theorem \ref{th.prob.graph} and Theorem \ref{th.prob}, it is enough to prove weaker bounds where $d_{\rm avg} I + D_G$ is replaced by $d_{\rm max} \cdot I$, and ${\rm vol} (S)$ is replaced by $d_{\max} |S|$, where $d_{\rm max}$ is the maximum degree.

Consider the following construction: given a graph $G=(V,E)$ of average degree $d_{\rm avg} = 2|E|/V$, construct a new graph $G' = (V'E')$ such that
\begin{itemize}
    \item To each node $v\in V$ there  corresponds a cloud of $\lceil d_v/\lceil d_{\rm avg}\rceil \rceil$ nodes in $V$.
    \item To each edge $(u,v) \in E$ there corresponds an edge in $E'$ between the cloud of $u$ and the cloud of $v$.
    \item Each vertex in $G'$ has degree at most $d'_{\rm max} = \lceil d_{\rm avg}\rceil $.
\end{itemize}

A construction satisfying the above property can be realized by replacing the vertices of $V$, in sequence, by a cloud as required, and then replacing $v$ in the edges incident to $v$ by vertices in the cloud of $v$, in a balanced way.

Now suppose that $F' \subseteq E'$ is a subset of the edges of $G'$ and that $F\subseteq E$ is the set of edges of $G$ corresponding to the edges of $F'$. Let $\tilde G$ be the graph $\tilde G=(V,F)$ and $\tilde G' = (V,F')$. Let $\bx \in \R^V$ be any vector, and define $\bx' \in R^V$ to be the vector such that $\bx'_{v'} = \bx_v$ if $v'$ is in the cloud of $v$. Then we observe that

\[ \bx^T L_G \bx = \bx'^T L_{G'} \bx' \]
\[ \bx^T L_{\tilde G} \bx = \bx'^T L_{\tilde G'} \bx' \]
\[ \bx'^T  (d'_{\rm max}I ) \bx'
\leq \bx^T (\lceil d_{\rm avg} \rceil + D_G) \bx 
\]
The only non-trivial statement is the third one. To verify it, we see that the left-hand side is

\[ \bx'^T  (d'_{\rm max}I ) \bx'
= \sum_{v\in V} \left\lceil \frac{d_v}{\lceil d_{\rm avg}\rceil } \right\rceil 
\cdot \lceil d_{\rm avg} \rceil  \bx_v^2  \leq \sum_{v\in V} (d_v + \lceil d_{\rm avg} \rceil ) \cdot \bx_v^2 \]
This means that we can start from an arbitrary graph $G$, construct $G'$ as above, find an unweighted sparsifier $\tilde G' = (V', F')$ of $G'$, and
then obtain a set $F$ of edges such that
$\tilde G = (V,F)$ is an unweighted sparsifier for $G$, with the property that any bound dependent on $d'_{\rm max} I$ on the quality of the sparsification of $\tilde G'$ becomes a bound in terms of $(\lceil d_{\rm avg} \rceil + D_G)$ (and we can drop the ceiling at the cost of a constant factor in the error).

If $H= (V,E)$ is a hypergraph we can similarly construct a hypergraph $H'= (V',E')$ such that
\begin{itemize}
    \item To each node $v\in V$ there  corresponds a cloud of $\lceil d_v/\lceil d_{\rm avg}\rceil \rceil$ nodes in $V$.
    \item To each edge $(u,v) \in E$ there corresponds an edge in $E'$ between the cloud of $u$ and the cloud of $v$.
    \item Each vertex in $H'$ has degree at most $d'_{\rm max} = \lceil d_{\rm avg}\rceil $.
\end{itemize}

Similarly to the graph case, for every set $S\subseteq V$ we can define a set $S' \subseteq V'$ (the union of the clouds of vertices in $S'$) and for every set
$F'\subseteq E'$ we can define
a set of hyperedges $F\subseteq E$ of the same cardinality such that
\[ e_{E} (S) = e_{E'} (S') \]
\[ e_{F}(S) = e_{F'} (S') \]
\[ d'_{\max} |S| \leq
\lceil d_{\rm avg} \rceil |S| + 
{\rm vol}_{H} (S) \]

We also note that, in both constructions, the maximum degree and the average degree of the new graph (or hypergraph) are within a constant factor.

\section{Deterministic Construction}
\label{sec.det}
In this section we use the online convex optimization approach of Allen-Zhu, Liao and Orecchia \cite{ALO15} to construct a weak form of unweighted additive spectral sparsifiers, and we prove Theorem \ref{th.det}. 
Given the reduction described in Section \ref{sec.reduction}, it is enough to prove
the following theorem.

\begin{theorem} \label{th.det.bounded}
There is a deterministic polynomial time algorithm
that given a graph $G= (V,E)$ of maximum degree
$d_{\max}$ and a parameter $\epsilon$ outputs
a multiset $F$ of $O(|V|/\epsilon^2)$ edges such 
that the graph $\tilde G = (V,F)$ satisfies
\[ 2 c D_{\tilde G} - 2 D_G - \epsilon d_{\max}I
\preceq c L_{\tilde G} - L_G \preceq \epsilon d_{\max}I
\]
where $c = |E|/|F|$.
\end{theorem}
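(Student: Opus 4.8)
The plan is to follow the online/regret-minimization framework of Allen-Zhu, Liao and Orecchia, but to track the two quadratic forms $cL_{\tilde G}-L_G$ and $cSL_{\tilde G}-SL_G$ (where $SL_G=D_G+A_G$ is the signless Laplacian) simultaneously rather than a single normalized form. I would run a process that adds edges to $F$ one at a time, in $T = O(n/\epsilon^2)$ rounds, so that $c=|E|/|F|=|E|/T$. In round $t$ I maintain a ``regret'' or ``potential'' matrix built from a matrix-multiplicative-weights (i.e.\ matrix exponential / follow-the-regularized-leader) player, and I choose the next edge $e_t$ to add to $F$ greedily against the player's current density matrix. The player's action space lives in the space of $2n\times 2n$ block-diagonal matrices $\mathrm{diag}(X_1,X_2)$ with $X_1,X_2\succeq 0$, so that one block certifies the Laplacian inequality and the other the signless-Laplacian inequality; the block structure is exactly why the macro \verb|\bd| is defined in the preamble. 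Each edge contributes the rank-one matrices $b_eb_e^T$ (to the Laplacian) and $\bar b_e\bar b_e^T$ (to the signless Laplacian), where $b_e=\chi_u-\chi_v$ and $\bar b_e=\chi_u+\chi_v$; note $b_eb_e^T+\bar b_e\bar b_e^T = 2(\chi_u\chi_u^T+\chi_v\chi_v^T)$, which is how the diagonal term $2cD_{\tilde G}-2D_G$ on the left side of the claimed inequality will arise.

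The key steps, in order, are: (1) set up the online game: at each step the ``adversary'' (our algorithm) picks an edge feedback matrix of the above block-diagonal rank-two form, the player responds with a density matrix via matrix MWU with an entropic/log-det regularizer, and the standard regret bound gives, after $T$ steps, that the sum of feedback matrices is sandwiched between $-\eta(\text{sum of squares})$-type error terms and the scaled identity; (2) verify the \emph{width} bound — this is where the signless-Laplacian trick pays off. Because we are proving only the PSD upper bounds $cL_{\tilde G}-L_G\preceq \epsilon d_{\max}I$ and $cSL_{\tilde G}-SL_G\preceq \epsilon d_{\max}I$ (which the text observes are equivalent to the two-sided bound in the statement), the feedback matrices we need never have to be normalized by $L_G^{-1}$; the relevant quantities $\langle b_eb_e^T, X\rangle$ are bounded using $\|b_e\|^2=2$ together with the fact that the player's density matrix has trace $1$, and the max-degree hypothesis controls how concentrated $X$ can get on the coordinates touched by a single edge. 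This keeps the per-step width $O(1)$ (or $O(d_{\max})$ after the appropriate scaling), so the regret bound translates into additive error $\epsilon d_{\max}I$ after $T=O(n/\epsilon^2)$ rounds; (3) choose the greedy edge: show that in each round there exists an edge $e\in E$ whose feedback matrix has inner product with the current density matrix at most the average over all edges (a simple averaging/pigeonhole argument over $E$), which is what lets us drive the potential down by the required amount each step; (4) unwind the telescoped potential at the end to read off the two PSD inequalities, add them / rearrange using $b_eb_e^T+\bar b_e\bar b_e^T=2\sum_{v\in e}\chi_v\chi_v^T$ to recover exactly \eqref{eq.det} in the bounded-degree form, and check that the total number of edges selected is $O(n/\epsilon^2)$; (5) observe the whole procedure is deterministic and each step (one matrix exponential of an $n\times n$ matrix, one scan over $E$) is polynomial time, then invoke the reduction of Section \ref{sec.reduction} to pass from $d_{\max}$ to $d_{\mathrm{avg}}I+D_G$ and obtain Theorem \ref{th.det}.

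The main obstacle I expect is step (2), the width control. In the Allen-Zhu--Liao--Orecchia analysis the change of basis $L_G\mapsto I$ is what a priori bounds the width, and removing it forces us to argue directly that no edge's rank-two feedback matrix ever has too large an inner product with the evolving density matrix. The resolution has to exploit two structural features: first, that working with the \emph{signless} Laplacian (rather than a normalized Laplacian) means the feedback matrices are genuinely bounded-norm rank-two matrices with no inverse appearing; and second, that the maximum-degree assumption $d_v\le d_{\max}$ bounds the $\ell_\infty\to\ell_2$ behavior of how edges load onto vertices, so that even in the worst case the player cannot place more than an $O(1/d_{\max})$ fraction... — more precisely, one bounds $\sum_{e\ni v}\langle b_eb_e^T,X\rangle$ in terms of $d_{\max}\langle \chi_v\chi_v^T, X\rangle$ and sums over $v$ using $\mathrm{trace}(X)=1$. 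Making the constants in this argument line up so that the final error is exactly $\epsilon d_{\max}I$ with $O(n/\epsilon^2)$ edges (and not, say, $O(d_{\max}n/\epsilon^2)$) is the delicate part, and is presumably where the ``quite technical'' reasons alluded to in the overview enter.
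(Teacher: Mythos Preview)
Your high-level architecture is right: run an online game against the regret-minimizing player of \cite{ALO15} with block-diagonal cost matrices, one block for the Laplacian and one for the signless Laplacian, and pick the edge at each step by averaging. But the width analysis in your step (2) is the genuine gap, and as written it would not give $T=O(n/\epsilon^2)$.

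You describe feedback matrices of the form $\mathrm{diag}(b_eb_e^T,\bar b_e\bar b_e^T)$ and claim ``per-step width $O(1)$ (or $O(d_{\max})$)''. If you plug width $O(1)$ into the standard matrix-MWU regret bound $R_T=O(\eta T+\log n/\eta)=O(\sqrt{T\log n})$, then after rescaling by $c=m/T$ you need $\frac{m}{T}\sqrt{T\log n}\le \epsilon d_{\max}$, i.e.\ $T\ge \frac{m^2\log n}{\epsilon^2 d_{\max}^2}$, which can be $\Omega(n^2/\epsilon^2)$ on dense graphs. The arguments you sketch (``$\|b_e\|^2=2$'', ``$\sum_{e\ni v}\langle b_eb_e^T,X\rangle\le d_{\max}\langle\chi_v\chi_v^T,X\rangle$'') are correct statements, but they bound the \emph{aggregate} quantity $X\bullet L_G\le 2d_{\max}$; they do not give a per-step width bound strong enough to beat the $m/d_{\max}$ loss.

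What the paper actually does is different in two concrete ways. First, the cost matrix at step $t$ is
\[
C_t \;=\; 2d_{\max} I \;+\; \bd{\,m\,L_{a_t,b_t}-L_G}{\,m\,SL_{a_t,b_t}-SL_G},
\]
so it contains both the $-L_G$ term (which is what makes the averaging argument pick an edge with $X_t\bullet(C_t-2d_{\max}I)\le 0$, and hence $X_t\bullet|C_t|=X_t\bullet C_t\le 2d_{\max}$) and the scaling of the single-edge Laplacian by $m=|E|$. Second, and crucially, the regret bound used is not entropic MWU but the $q=2$ follow-the-regularized-leader bound of \cite{ALO15} (Theorem~\ref{th.alo}), whose per-step term is the product $(X_t\bullet|C_t|)\cdot\|X_t^{1/4}C_tX_t^{1/4}\|$ and whose additive term is $2\sqrt n/\eta$ rather than $\log n/\eta$. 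The entire technical content of the proof is the claim
\[
\|X_t^{1/4}C_tX_t^{1/4}\|\;\le\;O(\sqrt{d_{\max}\,m}),
\]
proved by writing $mL_{a_t,b_t}=\bw\bw^T$ with $\|\bw\|^2=2m$ and applying Cauchy--Schwarz in the eigenbasis of the relevant block $Y_t$ of $X_t$:
\[
\|Y_t^{1/4}\bw\bw^TY_t^{1/4}\|=\bw^TY_t^{1/2}\bw=\sum_i\sqrt{\lambda_i}\langle \bw,\by_i\rangle^2\le\|\bw\|\cdot\sqrt{\bw^TY_t\bw}\le\sqrt{2m}\cdot\sqrt{2d_{\max}},
\]
where the last inequality uses $\bw^TY_t\bw=Y_t\bullet mL_{a_t,b_t}\le 2d_{\max}$, which is exactly what the greedy edge choice guarantees. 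This $\sqrt{d_{\max} m}$ width, multiplied by the $2d_{\max}$ bound on $X_t\bullet|C_t|$, and balanced against $\sqrt n/\eta$ with $\eta=\Theta(\epsilon/\sqrt{d_{\max} m})$, is what produces $R_T/T\le O(\epsilon d_{\max})$ at $T=O(n/\epsilon^2)$. Your proposal neither invokes this specific regret bound nor supplies the Cauchy--Schwarz step; the ``max-degree controls how concentrated $X$ can be'' heuristic does not substitute for it.
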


We are interested in the following online optimization setting: at each time $t=1,\ldots$, an {\em algorithm} comes up with a solution $X_t$, which is an $n\times n$ density matrix, and an adversary comes up with a cost matrix $C_t$, which is an $n\times n$ matrix, and the algorithm receives a payoff $X_t \bullet C_t$. The algorithm comes  up with $X_t$ based on knowledge of $X_1,\ldots,X_{t-1}$ and of $C_1,\ldots,C_{t-1}$, while the adversary comes up with $C_t$ based on $X_1,\ldots,X_{t}$ and on $C_1,\ldots,C_{t-1}$. The goal of the algorithm is to maximize the payoff. After running this game for $T$ steps, one defines the {\em regret} of the algorithm as
\[R_T := \left( \sup_{X \ {\rm density\ matrix}} \ \ \ \sum_{t=1}^T X \bullet C_t \right) -  \left( \sum_{t=1}^T X_t \bullet C_t \right)\,. \]

\begin{theorem}[Allen-Zhu, Liao, Orecchia~\cite{ALO15}]
There is a deterministic polynomial algorithm that, given a parameter $\eta>0 $, after running for $T$ steps against an  adversary
that provides cost matrices $C_t$ restricted as described below, achieves a regret bound
\[ R_T \leq O(\eta) \cdot \sum_{t=1}^T (X_t \bullet |C_t| ) \cdot \| X_t^{1/4} C_t X_t ^{1/4} \| + \frac {2 \sqrt n}{\eta} \ . \]
Furthermore, if $C_t$ is block-diagonal, then $X_t$
is also block-diagonal with the same block structure
The restrictions on the adversary are that at each step $t$ the cost function $C_t$
is  positive semidefinite or negative semidefinite and satisfies
\[ \eta X_t^{1/4} C_t X_t ^{1/4} \preceq  \frac I 4 \,. \]
\label{th.alo} 
\end{theorem}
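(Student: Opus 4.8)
The plan is to realize the claimed algorithm as \emph{follow-the-regularized-leader} (FTRL) over the spectrahedron $\Delta_n$ of $n\times n$ density matrices, using the square-root regularizer $R(X) = -\tfrac1\eta\trace(X^{1/2})$ in place of the von Neumann entropy used by the usual matrix multiplicative weights method. Concretely I would set $X_1 = \tfrac1n I$ and
\[ X_{t+1} \;=\; \operatorname*{argmax}_{X \in \Delta_n}\;\Big\{\, X\bullet\Big(\textstyle\sum_{s=1}^{t}C_s\Big) \;+\; \tfrac1\eta\trace\big(X^{1/2}\big)\,\Big\}\,. \]
Writing out the first-order optimality conditions, the maximizer is $X_{t+1} \propto (\mu_t I - \sum_{s\le t}C_s)^{-2}$ for the unique scalar $\mu_t > \lambda_{\max}(\sum_{s\le t}C_s)$ making the trace equal to one; in particular the maximizer lies in the interior of $\Delta_n$, is unique, and is computed with one eigendecomposition plus a one-dimensional search, so each step runs in polynomial time. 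The block-diagonal claim would follow from the pinching inequality: zeroing out the off-block entries of $X_{t+1}$ leaves every inner product $X_{t+1}\bullet C_s$ unchanged (the $C_s$ being block-diagonal) and does not decrease $\trace(X^{1/2})$, because $x\mapsto x^{1/2}$ is operator concave, so by uniqueness of the maximizer $X_{t+1}$ is already block-diagonal and the optimization decouples over the blocks.

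For the regret I would use the standard FTRL ``follow-the-leader / be-the-leader'' decomposition, which gives
\[ R_T \;\le\; \big(R(X^\star)-R(X_1)\big) \;+\; \sum_{t=1}^{T}\big(X_{t+1}-X_t\big)\bullet C_t \]
for any maximizer $X^\star \in \Delta_n$ of $\sum_t X\bullet C_t$. The first term equals $\tfrac1\eta\big(\trace(X_1^{1/2})-\trace((X^\star)^{1/2})\big)=\tfrac1\eta\big(\sqrt n - \trace((X^\star)^{1/2})\big)\le \sqrt n/\eta \le 2\sqrt n/\eta$, since $\trace(X^{1/2})\in[1,\sqrt n]$ for every density matrix; this is precisely why the square-root regularizer produces the additive $2\sqrt n/\eta$ term in place of the $\log n/\eta$ term of entropic regularization. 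It then remains to bound the stability terms $(X_{t+1}-X_t)\bullet C_t$. I would do this by the standard local-norm argument: the FTRL potential is strongly convex near $X_t$ with respect to the local norm $\|H\|_{X_t}^2 := \langle\nabla^2 R(X_t)\,H,\,H\rangle$, and the hypothesis $\eta X_t^{1/4}C_tX_t^{1/4}\preceq \tfrac I4$ is used exactly to guarantee that the update $X_{t+1}$ stays in a region where its eigenvalues are within a constant factor of those of $X_t$, so that $\nabla^2 R$ changes by at most a constant factor along the segment from $X_t$ to $X_{t+1}$ (in the mirror map $-\tfrac12 X^{-1/2}\mapsto -\tfrac12 X^{-1/2}+\eta\sum_{s\le t}C_s$, the increment due to $C_t$ is unitarily conjugate to $\eta X_t^{1/4}C_t X_t^{1/4}$, which the hypothesis controls). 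Combining strong convexity with duality of local norms yields $(X_{t+1}-X_t)\bullet C_t \le O(1)\cdot(\|C_t\|^{*}_{X_t})^2$, where $\|\cdot\|^{*}_{X_t}$ is the norm dual to $\|\cdot\|_{X_t}$.

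The last step, and the one that makes the constants come out cleanly, is to evaluate the dual local norm for the square-root regularizer. By the Daleckii--Krein formula, in an eigenbasis of $X$ with eigenvalues $\lambda_i$ one has $\langle\nabla^2 R(X)H,H\rangle = \tfrac1\eta\sum_{i,j}\tfrac{|H_{ij}|^2}{2\sqrt{\lambda_i\lambda_j}(\sqrt{\lambda_i}+\sqrt{\lambda_j})}$, so the dual norm has the reciprocal kernel,
\[ (\|C\|^{*}_{X})^2 \;=\; \eta\sum_{i,j} 2\sqrt{\lambda_i\lambda_j}\,\big(\sqrt{\lambda_i}+\sqrt{\lambda_j}\big)\,|C_{ij}|^2 \;=\; 4\eta\,\trace\big(X\,C\,X^{1/2}\,C\big)\,, \]
the last equality being just a regrouping of the symmetric terms. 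Substituting $C = X^{-1/4}BX^{-1/4}$ with $B := X^{1/4}CX^{1/4}$ and using cyclicity of the trace turns this into $4\eta\,\trace(X^{1/2}B^2)$; then, since $B^2 = |B|^2 \preceq \|B\|\cdot|B|$ and (using that $C$ is PSD or NSD) $|B| = X^{1/4}|C|X^{1/4}$, we get
\[ (\|C\|^{*}_{X})^2 \;=\; 4\eta\,\trace\big(X^{1/2}B^2\big) \;\le\; 4\eta\,\|B\|\cdot\trace\big(X^{1/2}\,X^{1/4}|C|X^{1/4}\big) \;=\; 4\eta\,\big\|X^{1/4}CX^{1/4}\big\|\cdot\big(X\bullet|C|\big)\,. \]
Plugging this into the stability bound and summing over $t$ gives $\sum_t (X_{t+1}-X_t)\bullet C_t \le O(\eta)\sum_t\big(X_t\bullet|C_t|\big)\,\big\|X_t^{1/4}C_tX_t^{1/4}\big\|$, which together with the $2\sqrt n/\eta$ term is the asserted regret bound. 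Two technical points would need care: (i) making the comparison computations continuous at rank-deficient iterates, handled by the perturbed regularizer $-\tfrac1\eta\trace((X+\delta I)^{1/2})$ with $\delta\to 0$ (or by restricting everything to $\operatorname{range}(X_t)$, on which the hypothesis is the only constraint); and (ii) turning the ``constant-factor change of local norm'' into an explicit inequality. I expect (ii) to be the main obstacle: one must bootstrap a bound on $\|X_{t+1}-X_t\|_{X_t}$ before knowing that $X_{t+1}$ is close to $X_t$, and it is here that the precise value $\tfrac14$ in the width hypothesis is consumed; the dual-norm identity above, by contrast, is purely algebraic and imposes no width restriction.
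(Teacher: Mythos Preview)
The paper does not prove this theorem: it is quoted as a black box from \cite{ALO15}, and the only argument supplied is the Remark immediately following the statement, which identifies it as the $q=2$ case of \cite[Theorem~3.3]{ALO15} and says the block-diagonal claim ``can be verified by inspecting the proof.'' So there is no ``paper's own proof'' to compare against.

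Your sketch is, in outline, a faithful reconstruction of the Allen-Zhu--Liao--Orecchia argument itself: FTRL over the spectrahedron with the $\ell_{1/2}$ regularizer $-\tfrac1\eta\trace(X^{1/2})$; the be-the-leader decomposition giving the $O(\sqrt n/\eta)$ additive term; and a local-norm stability analysis in which the Daleckii--Krein computation yields $(\|C\|_X^*)^2 = 4\eta\,\trace(X^{1/2}B^2)$ with $B=X^{1/4}CX^{1/4}$, whence the per-step stability cost is $O(\eta)\,(X_t\bullet|C_t|)\,\|X_t^{1/4}C_tX_t^{1/4}\|$ under the semidefiniteness hypothesis on $C_t$. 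Your pinching argument for the block-diagonal claim (operator concavity of $x\mapsto\sqrt{x}$ plus uniqueness of the FTRL maximizer) is precisely the kind of verification the paper's Remark is gesturing at. The one place your sketch is genuinely incomplete is, as you flag in point (ii), the bootstrap showing that $X_{t+1}$ stays in the region where the Hessian is comparable to that at $X_t$; this is where the width constraint $\eta X_t^{1/4}C_tX_t^{1/4}\preceq\tfrac14 I$ is actually consumed, and it is the technical heart of the analysis in \cite{ALO15} rather than a routine step.
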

\begin{remark} The theorem above is the $q=2$
case of Theorem 3.3 in \cite{ALO15}. The Furthermore part is not stated explicitly in \cite[Theorem 3.3]{ALO15} but can be verified by inspecting the proof.
Note that what we are calling $C_t$ corresponds to $-C_t$ in the treatment of  \cite{ALO15}, which is why their cost minimization problem becomes a maximization problem here, and the condition that $C_t$  satisfy $\eta X_t^{1/4} C_t X_t ^{1/4} \succeq  - \frac I 4$ becomes the  condition that we have in the above theorem.
\end{remark}
To gain some intuition about the way we will use the above theorem, note that the definition of regret implies that we have

\[ \lambda_{\max} \left( \sum_{t=1}^T C_t \right) = R_T + \sum_{t=1}^T  C_t \bullet X_t\,, \]
where $\lambda_{\max}(\cdot)$ denotes the largest eigenvalue of the matrix.
Now suppose that we play the role of the adversary against the algorithm of Theorem \ref{th.alo}, and that, at time $t$, we reply to the solution $X_t$ of the algorithm with a cost matrix of the form $mL_{a_t, b_t} - L_G$ where $m:=|E|$ and  $(a_t,b_t)$ is an edge chosen so that
\[ X_t \bullet (mL_{a_t,b_t} - L_G) \leq 0 \]
We know that such an edge $(a_t,b_t)$ must exist, because the average of the left-hand side above is zero if we compute it for a uniformly chosen random $(a_t,b_t) \in E$. After playing this game for $T$ steps we have
\[ \lambda_{\max} \left( m \sum_{t=1}^T L_{a_t,b_t} - T L_G\right)
\leq R_T \]
and, calling $F$ the multiset $\{ (a_t,b_t) : t=1,\ldots,T \}$,
 calling $\tilde G = (V,F)$ the multigraph of such edges and
 $c = |E|/|F| = m/T$, and noting that $L_{\tilde G} = \sum_t L_{a_t,b_t}$ we have
 \[  c \cdot  L_{\tilde G} -  L_G \preceq \frac 1T R_T  \cdot I\]
which, provided that we can ensure that $R_T$ is small, is one side of the type of bounds that we are trying to prove. 

In order to get a two-sided bound, one would like to use the idea that
\[ \lambda_{\max} \bd{M}{-M} = \| M \| \]
and play the above game using, at step $t$, a cost matrix of the form
\[ C_t = \bd{mL_{a_t,b_t} - L_G}{L_G - mL_{a_t,b_t}} \]
where the edge $(a_t,b_t)$ is chosen so that
\[ X_t \bullet C_t \leq 0 \]
Then, if we define $c$ and $\tilde G$ as above, we would reach the conclusion 
\[ \| cL_{\tilde G} - L_G \| \leq \frac 1T R_T \]
and what remains to do is to see for what value of $T$ we get a sufficiently small regret bound.

Unfortunately this approach runs into a series of difficulties. 

First of all, our cost matrix is neither positive semidefinite nor negative semidefinite. 

We could make it positive semidefinite by shifting, that is, by adding a multiple of the identity. This is not a problem for the block $mL_{a_t,b_t} - L_G$, whose smallest eigenvalue is at most $2d_{\max}$ in magnitude, but it is a serious problem for the block $L_G - mL_{a_t,b_t}$, whose smallest eigenvalue is of the order of $-m$: the shift needed to make this block PSD would be so big that the terms $X_t \bullet |C_t|$ in the regret bound would be too large to obtain any non-trivial result.

Another approach, which is closer to what happens in \cite{ALO15}, is to see that the analysis of Theorem \ref{th.alo} applies also to block-diagonal matrices in which each block is either positive semidefinite or negative semidefinite. This way, we can shift the two blocks in different directions by $2d_{\max} I$ and get the cost function in a form to which Theorem \ref{th.alo} applies, but then we would still be unable to get any non-trivial bound because the term $X_t\bullet |C_t|$
could be in the order of $m$, while the analysis requires that term to be of the order of $d_{\max}$ to get the result we are aiming for. 
To see why, note that if $C_t$ is a block-diagonal matrix with a positive semidefinite block and a negative semidefinite block, then $|C_t|$ is just the same matrix except that the negative semidefinite block appears negated. Recall that we wanted to select an edge so that 
$X \bullet C_t$ is small:  what will happen is that the PSD block gives a positive contribution, the NSD block gives a negative contribution, and $X\bullet |C_t|$ is the sum of the absolute values of these contributions, which can both be order of $m$.

We could work around this problem by scaling the matrix in a certain way, but this would make the analysis only work for a weighted sparsifier. This difficulty is the reason why \cite{ALO15} construct a weighted sparsifier even if the effective resistances of all the edges of $G$ are small, a situation in which an unweighted sparsifier is known to exist because of the Marcus-Spielman-Srivastava theorem.

We work around these difficulties by reasoning about the {\em signless} Laplacian. If $G$ is a graph with diagonal degree matrix $D_G$ and adjacency matrix $A_G$, then the signless Laplacian of $G$ is defined as the matrix $D_G + A_G$. We denote by $SL_G$ the signless Laplacian of a graph $G$, and by $SL_{a,b}$ the signless Laplacian of a graph containing only the single edge $(a,b)$. Equation \eqref{eq:reg-term1}
below shows that, in this case, the term $X_t \bullet |C_t|$ in the regret bound can be bounded in term of $d_{\max}$ and are never order of $m$.

Recall that, like the Laplacian, the signless Laplacian is a PSD matrix whose largest eigenvalue is at most $2d_{\max}$.

To prove Theorem \ref{th.det.bounded}, we will play the role of the adversary against the algorithm of Theorem \ref{th.alo} with the PSD cost matrix

\[ C_t := 2d_{\max} I + \bd{mL_{a_t,b_t} - L_G}{mSL_{a_t,b_t} - SL_G} \]
where the edge $(a_t,b_t)$ is chosen so that
\[ X_t \bullet \bd{mL_{a_t,b_t} - L_G}{mSL_{a_t,b_t} - SL_G} \leq 0\,. \]
Since $X \bullet I = 1$ for every density matrix, we get that, after $T$ steps, if we define $F$ to be the multiset of selected edges, $c = \frac{|E|}{|F|} = \frac mT$, and $\tilde G = (V,F)$, then we have
\[ c L_{\tilde G} - L_G \preceq \frac {R_T}T \cdot I \]
\[ c SL_{\tilde G} - SL_G \preceq \frac {R_T}T \cdot I \]
and so it remains to show that we can make $R_T \leq \epsilon d_{\max}\cdot T$ by choosing $T= O(n/\epsilon^2)$.

Let us analyze the quantities that come up in the statement of Theorem \ref{th.alo}.

Since $C_t$ is PSD, we have
\beq
 X_t \bullet |C_t| = X_t \bullet C_t
= 2d_{\max} + X_t \bullet \bd{mL_{a_t,b_t} - L_G}{mSL_{a_t,b_t} - SL_G}  \leq 2 d_{\max}\,. 
\label{eq:reg-term1}
\eeq

The non-trivial part of the analysis is the following bound.
\begin{claim} At every time step $t$ we have
\begin{equation} \label{eq.det.claim}
\| X_t ^{1/4} C_t X_t^{1/4}\| \leq O(\sqrt {d_{\max} \cdot m})
\end{equation}
\end{claim}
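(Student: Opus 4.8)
The plan is to exploit the block structure of $C_t$, reduce to bounding two scalar quantities of the form $v^T P^{1/2} v$, and then convert the edge-selection guarantee (which naturally controls $v^T P v$) into a bound on $v^T P^{1/2} v$ via a Cauchy--Schwarz interpolation. First I would use the ``Furthermore'' clause of Theorem~\ref{th.alo}: since $C_t$ is block-diagonal, so is $X_t$, say $X_t = \mathrm{diag}(P,Q)$ with $P,Q\succeq \bzero$ and $\trace P + \trace Q = 1$ (so $\|P\|,\|Q\|\le 1$). Then $\|X_t^{1/4} C_t X_t^{1/4}\|$ is the larger of $\|P^{1/4}(2d_{\max} I + mL_{a_t,b_t} - L_G)P^{1/4}\|$ and $\|Q^{1/4}(2d_{\max} I + mSL_{a_t,b_t} - SL_G)Q^{1/4}\|$; these two are mirror images of each other under $L_G\leftrightarrow SL_G$ and $e_a-e_b\leftrightarrow e_a+e_b$, so it suffices to bound the first.

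For the first block, write $v = e_{a_t}-e_{b_t}$, so $L_{a_t,b_t} = vv^T$ and $\|v\|^2 = 2$. I would split using the triangle inequality for the operator norm. The ``bounded'' part is easy: $\bzero \preceq 2d_{\max} I - L_G \preceq 2d_{\max}I$, so $\|P^{1/4}(2d_{\max}I - L_G)P^{1/4}\| \le 2d_{\max}\|P^{1/2}\| \le 2d_{\max}$. The rank-one part is $P^{1/4}(m vv^T)P^{1/4} = m (P^{1/4}v)(P^{1/4}v)^T$, whose norm is exactly $m\,\|P^{1/4}v\|^2 = m\, v^T P^{1/2} v$. So the first block has norm at most $2d_{\max} + m\, v^T P^{1/2} v$, and symmetrically the second block has norm at most $2d_{\max} + m\, w^T Q^{1/2} w$ with $w = e_{a_t}+e_{b_t}$.

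Next I would extract the needed estimate from the way $(a_t,b_t)$ was chosen, namely $X_t \bullet \bd{mL_{a_t,b_t}-L_G}{mSL_{a_t,b_t}-SL_G} \le 0$. Expanding, this says $m\,v^T P v + m\, w^T Q w \le P\bullet L_G + Q\bullet SL_G$. Since both left-hand terms are nonnegative and $P\bullet L_G \le \|L_G\|\,\trace P \le 2d_{\max}\trace P$ and $Q\bullet SL_G \le 2d_{\max}\trace Q$, the right-hand side is at most $2d_{\max}$, hence $v^T P v \le 2d_{\max}/m$ and likewise $w^T Q w \le 2d_{\max}/m$.

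The crux is the last step, going from $v^T P v$ to $v^T P^{1/2} v$: the naive bound $v^T P^{1/2} v \le \|P^{1/2}\|\,\|v\|^2 \le 2$ only gives a block-norm bound of order $m$, which is useless. Instead, working in the eigenbasis of $P$ and applying Cauchy--Schwarz to the sum $\sum_i \lambda_i^{1/2} c_i^2 = \sum_i (\lambda_i^{1/2} c_i)(c_i)$ yields $v^T P^{1/2} v \le \sqrt{v^T P v}\cdot \|v\| \le \sqrt{2d_{\max}/m}\cdot\sqrt{2} = 2\sqrt{d_{\max}/m}$, so $m\, v^T P^{1/2} v \le 2\sqrt{d_{\max} m}$, and the same for $w^T Q^{1/2} w$. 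Combining with the previous paragraph, each block has norm at most $2d_{\max} + 2\sqrt{d_{\max} m} = O(\sqrt{d_{\max} m})$ (using $d_{\max}\le m$), which is \eqref{eq.det.claim}. I expect this interpolation to be the only non-mechanical point; everything else is the operator-norm triangle inequality together with the standard facts $\|L_G\|,\|SL_G\|\le 2d_{\max}$ and $\trace X_t = 1$.
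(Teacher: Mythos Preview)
Your proof is correct and follows essentially the same route as the paper's: block-diagonalize $X_t$, peel off the bounded part $2d_{\max}I - L_G$ (respectively $2d_{\max}I - SL_G$) by the triangle inequality, use the edge-selection inequality to bound $v^T P v$ by $2d_{\max}/m$, and then interpolate via Cauchy--Schwarz in the eigenbasis of $P$ to get $v^T P^{1/2} v \le \|v\|\sqrt{v^T P v}$. The only cosmetic difference is that the paper absorbs the factor $m$ into the vector (writing $\bw=\sqrt{m}(e_{a_t}-e_{b_t})$) before carrying out the identical Cauchy--Schwarz step.
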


\begin{proof}
Recall from Theorem~\ref{th.alo} that matrices $X_t$ will have the same block structure as the cost matrices $C_t$. 
We can therefore write the matrix $X_t$ as
\[ X_t = 
\left( 
  \begin{array}{c|c}
    Y_t & 0\\
    \hline
    0 & Z_t
  \end{array}
\right)
\] 
Then
\[ X^{1/4}_t = \bd{Y_t^{1/4}}{Z_t^{1/4}}
\]
and
\[
\| X_t ^{1/4} C_t X_t^{1/4}\|
= \max \{ \| Y_t ^{1/4} (mL_{a_t,b_t} +2d_{\max}I - L_G) Y_t^{1/4}\|,
\| Z_t ^{1/4} (mSL_{a_t,b_t} +2d_{\max}I - SL_G) Z_t^{1/4}\| \}
\]
Using the triangle inequality and the fact
that all the eigenvalues of $X_t$, and hence of $Y_t$, of $Z_t$, of $Y_t^{1/4}$ and $Z_t^{1/4}$ are at most one, 
we have
\[
\| Y_t ^{1/4} (mL_{a_t,b_t} +2d_{\max}I - L_G) Y_t^{1/4}\|
\leq m \| Y_t ^{1/4} L_{a_t,b_t} Y_t^{1/4}\| + 2 d_{\max} 
\]
\[
\| Z_t ^{1/4} (mSL_{a_t,b_t} +2d_{\max} I - SL_G) Y_t^{1/4}\|
\leq m \| Z_t ^{1/4} SL_{a_t,b_t} Z_t^{1/4}\|
+ 2 d_{\max} 
\]
Also recall that we chose $(a_t,b_t)$ so that we would have
\[ X_t \bullet \bd{mL_{a_t,b_t} - L_G}{mSL_{a_t,b_t} - SL_G} \leq 0 \]
which is the same as
\[ Y_t \bullet m L_{a_t,b_t} 
+ Z_t \bullet m SL_{a_t,b_t}
\leq Y_t \bullet L_G + Z_t \bullet SL_G = X_t \bullet \bd{L_G}{SL_G} \leq
\lambda_{\max} \bd{L_G}{SL_G}  \leq 
2d_{\max} \]
which implies
\[ Y_t \bullet m L_{a_t,b_t}  \leq 2d_{\max} \]
\[ Z_t \bullet m SL_{a_t,b_t} \leq 2d_{\max} \]
Now let us write
\[ Y_t = \sum_i \lambda_i \by_i \by_i^T \]
where $\lambda_i$ are the eigenvalues of $Y_t$
and $\by_i$ are a orthonormal basis of eigenvectors of $Y_t$, and let us also write
\[ mL_{a_t,b_t} = \bw \bw^T \]
where $\bw$ is the vector $\sqrt m \cdot (\bone_{a_t} - \bone_{b_t})$ of length $\sqrt {2m}$.
Then
\[ \| Y_t^{1/4} m L_{a_t,b_t} Y_t^{1/4}\|
= \| Y_t^{1/4} \bw \|^2 = \bw^T Y_t^{1/2} \bw 
= \sum_i \sqrt{\lambda_i } \bw^T \by_i \by_i^T \bw
= \sum_i \sqrt{\lambda_i} \langle \bw, \by_i \rangle^2
\]
Finally, by Cauchy-Schwarz,
\[ \sum_i \sqrt{\lambda_i} \langle \bw, \by_i \rangle^2
\leq \sqrt{\sum_i \langle \bw , \by_i \rangle^2}
\cdot \sqrt{\sum_i \lambda_i \langle \bw , \by_i \rangle^2} = \| \bw \| \cdot \sqrt{\bw^T Y_t \bw}
\leq \sqrt{2m} \cdot \sqrt{2d_{\max}} \]
In a completely analogous way we can prove that
\[ \|Z_t^{1/4} m SL_{a_t,b_t} Z_t^{1/4} \| \leq 2 \sqrt{d_{\max} m} \]
\end{proof}

To conclude the proof, take $\eta$ such that
\[ \eta \| X^{1/4} C_t X^{1/4} \| \leq \min \{ 1/4, \epsilon \} \]
which, by the above claim, means that it can be done
by choosing $\eta = \epsilon /O(\sqrt{d_{\max} m })$. Then using \eqref{eq:reg-term1} and that $m\leq d_{max} n$
we have the regret bound
\begin{eqnarray*}
R_T & \leq &   O(\eta) \cdot T \cdot 2d_{\max} \cdot O(\sqrt{d_{\max} \cdot m}) + \frac{2 \sqrt{n}}{\eta} \\
&  = & O(\epsilon \cdot T \cdot d_{max} ) + O\left(\frac {\sqrt{d_{\max} m n } } {\epsilon } \right) \leq O(\epsilon \cdot T \cdot d_{max} ) + O\left(\frac {d_{\max} n} {\epsilon } \right)
\end{eqnarray*}
When $T=O(n/\epsilon^2)$, the above upper bound is $O(\epsilon \cdot T \cdot d_{\max})$, which means that we have constructed a graph $\tilde G$ with $T = O(n/\epsilon^2)$ edges such that
\[ \frac mT L_{\tilde G} - L_G \preceq O(\epsilon) \cdot d_{\max} \cdot I  \]
\[ \frac mT SL_{\tilde G} - SL_G \preceq O(\epsilon) \cdot d_{\max} \cdot I \]
where the second equation is equivalent to
\[ \frac mT L_{\tilde G} - L_G \succeq \frac{2m}T D_{\tilde G} -2D_G - O(\epsilon) \cdot d_{\max} \cdot I  \]
proving Theorem \ref{th.det.bounded}.

\section{Probabilistic  construction of additive sparsifiers}
\label{sec.prob}

In this section, we give probabilistic algorithms for constructing additive spectral sparsifiers of hypergraphs. 
Specifically, we prove the following theorem which, by the reduction in Section~\ref{sec.reduction}, implies Theorem~\ref{th.prob}. That we can choose the normalization constant $c$ to equal $|E|/|F|$ in Theorem~\ref{th.prob} is because, in the reduction, the following theorem is used for a graph where $d_{\max}$ approximately equals the average degree.
\begin{theorem}\label{th.probmax} 
  Given an $n$-vertex hypergraph $H=(V,E)$ of rank $r$ and of maximal degree $d_{\max}$  together with a parameter $\epsilon>0$, in probabilistic  polynomial time we can find a subset $F\subseteq E$ of size $|F| =O\left(  \frac nr  \cdot \frac 1{\epsilon^2} \log \frac r\epsilon \right)$ such that, if we let  $c$ be a normalization constant, the following holds with probability at least $1-n^{-2}$:
\begin{equation}
  |    c\cdot  e_{F} (S) - e_E (S) | \leq  \epsilon d_{\max} |S| \qquad \forall S\subseteq V\,.
\end{equation}
\end{theorem}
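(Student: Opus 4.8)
\emph{Proof strategy.} The plan is to produce $F$ by an \emph{iterative halving} procedure. Set $d_* := \Theta\!\big(\epsilon^{-2}\log(r/\epsilon)\big)$; if $d_{\max}\le d_*$ we simply take $F=E$ and $c=1$, so assume $d_{\max}>d_*$ and let $k := \lceil\log_2(d_{\max}/d_*)\rceil$. Starting from $H_0:=H$ we pass from $H_{i-1}$ to $H_i$ by retaining each hyperedge of $H_{i-1}$ independently with probability $1/2$, and we output $F:=E(H_k)$, $c:=2^k$. Writing $d_i$ for the maximum degree of $H_i$, we will maintain $d_i\le(1+o(1))2^{-i}d_{\max}$; then $d_k=\Theta(d_*)$ and $|F|=(1\pm o(1))2^{-k}|E|=\Theta(d_*|E|/d_{\max})$, which, once one accounts for the number of hyperedges of a rank-$r$ hypergraph of maximum degree $d_{\max}$ on $n$ vertices, is $O\!\big(\tfrac nr\cdot\tfrac1{\epsilon^2}\log\tfrac r\epsilon\big)$ — this is where the factor $1/r$ enters, intuitively because each hyperedge accounts for $\Theta(r)$ of the at most $nd_{\max}$ vertex–hyperedge incidences. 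We will also arrange that round $i$ guarantees
\[
\big|2\,e_{H_i}(S)-e_{H_{i-1}}(S)\big|\ \le\ \epsilon_i\,d_{i-1}\,|S|\qquad\text{for all }S\subseteq V,
\]
with $\epsilon_i:=C\sqrt{\log(rd_{i-1})/d_{i-1}}$ for a suitable constant $C$. Multiplying the $i$-th inequality by $c_{i-1}:=2^{i-1}$, using $2^{i-1}d_{i-1}\le(1+o(1))d_{\max}$, and summing over $i=1,\dots,k$, the terms telescope to $|c\,e_F(S)-e_E(S)|\le(1+o(1))\big(\sum_i\epsilon_i\big)d_{\max}|S|$; since the $d_{i-1}$ decrease geometrically, $\sum_i\epsilon_i$ is dominated by its last term and equals $\Theta(\sqrt{\log(rd_*)/d_*})$, which is $\le\epsilon$ by our choice of $d_*$. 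So everything reduces to the analysis of one halving round.

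Fix a round and write $H'=H_{i-1}$, $d'=d_{i-1}$, $H''$ for the sampled hypergraph, and $X_e\in\{0,1\}$ for the retention indicators. First I would invoke the Frieze--Molloy reduction to connected sets, adapted to hypergraphs: let $G'$ be the primal graph of $H'$ (with $u\sim v$ iff some hyperedge of $H'$ contains both), which has maximum degree $\le(r-1)d'$. If $S\subseteq V$ and $S_1,\dots,S_t$ are the vertex sets of the connected components of $G'[S]$, then no hyperedge of $H'$ meets two distinct $S_j$'s (two vertices in different components lying in a common hyperedge would be adjacent in $G'$), hence $e_{H'}(S)=\sum_j e_{H'}(S_j)$, the same holds in $H''$, and $|S|=\sum_j|S_j|$; so it suffices to prove the round bound for every $S$ that is connected in $G'$, after which the general case follows by the triangle inequality. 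For such an $S$ of size $\ell$, let $B_S$ be the bad event $\big|2e_{H''}(S)-e_{H'}(S)\big|>\epsilon_i d'\ell$. This depends only on the $\le d'\ell$ variables $X_e$ with $e\cap S\ne\emptyset$, and $2e_{H''}(S)-e_{H'}(S)=\sum_{e\in\partial S}(2X_e-1)$ is a sum of at most $d'\ell$ independent, mean-zero, $\{-1,+1\}$-valued terms, so Bernstein's/Hoeffding's inequality gives $\Pr[B_S]\le 2\exp\!\big(-\Omega(\epsilon_i^2 d'\ell)\big)$.

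Now I would set up the (constructive) Lov\'asz Local Lemma over the events $\{B_S:S\text{ connected in }G'\}$ together with one extra event $B_v$ per vertex, namely $\deg_{H''}(v)>(1+\delta)d'/2$ for a small constant $\delta$ (these maintain the degree bound and satisfy $\Pr[B_v]\le\exp(-\Omega(\delta^2 d'))$ by Chernoff). Two events are LLL-dependent only if some hyperedge meets both sets. Using the standard fact that a graph of maximum degree $\Delta$ has at most $(e\Delta)^{\ell-1}$ connected vertex subsets of size $\ell$ through a fixed vertex (with $\Delta=(r-1)d'$), one bounds the number of $B_{S'}$ with $|S'|=\ell'$ dependent with a given $B_S$ of size $\ell$ by $d'\ell\cdot r\cdot\big(O(rd')\big)^{\ell'}$, and similarly for the $B_v$'s. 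Assigning weights $x_S:=\exp(-\alpha\ell)$ and $x_v:=\exp(-\alpha)$ with $\alpha:=\Theta(\log(rd'))$, the asymmetric LLL condition $\Pr[B]\le x_B\prod_{B'\sim B}(1-x_{B'})$ follows (after taking logarithms and using $1-x\ge e^{-2x}$) from $\epsilon_i^2 d'\gtrsim\log(rd')$ and $\delta^2 d'\gtrsim\log(rd')$, which hold for our $\epsilon_i$ and $\delta$ a small enough constant. To obtain $H''$ efficiently I would apply Haeupler's constructive LLL in the presence of exponentially many events: restrict to the core family of $B_S$ with $|S|\le L:=\Theta(\log_{rd'}n)$ (plus the $n$ events $B_v$), which has size $n\cdot(O(rd'))^{O(L)}=\mathrm{poly}(n)$, and note that the total probability of the omitted events is $\le n\sum_{\ell>L}(O(rd'))^\ell\exp(-\Omega(\epsilon_i^2 d'\ell))\le n^{-\Omega(1)}$ since $\epsilon_i^2 d'$ exceeds $\log(rd')$ by a large constant factor. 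A union bound over the $k=O(\log n)$ rounds then makes the whole construction succeed with probability $\ge 1-n^{-2}$, and combining the (now all-$S$) per-round guarantees with the telescoping argument of the first paragraph finishes the proof.

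\emph{Main obstacle.} The crux is eliminating the $\log d_{\max}$ factor that a single LLL application would incur — its events have dependency degree controlled by the \emph{ambient} degree while their probabilities improve only with the \emph{target} degree. Iterative halving fixes this, but calibrating it is delicate: one must pick the schedule $\epsilon_i\approx\sqrt{\log(rd_{i-1})/d_{i-1}}$ so that each round is individually feasible for the LLL while $\sum_i\epsilon_i$ still telescopes to $O(\epsilon)$, and one must carefully track how the maximum degree, the connected-set counts, and the failure probabilities evolve across the $O(\log n)$ rounds.
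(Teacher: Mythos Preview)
Your proposal is correct and follows essentially the same approach as the paper: one halving round via the asymmetric (constructive) LLL over events indexed by sets that are connected in the primal graph, with a polynomial-size core of small sets for Haeupler's algorithm, and then an iterative halving/telescoping argument to kill the $\log d_{\max}$ factor. The only cosmetic difference is that you add separate degree events $B_v$, whereas the paper observes that the singleton events $B_{\{v\}}$ already control the degrees (the paper does introduce explicit degree events only later, for the spectral version).
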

In Section~\ref{sec:additive_spectral} we then generalize the techniques for simple graphs to obtain additive spectral sparsifiers as stated in Theorem~\ref{th.prob.graph}.

Our arguments are inspired by those used by Frieze and Molloy~\cite{Frieze1999} and subsequently by Bilu and Linial~\cite{Bilu2006}. They use the Lov\'{a}sz Local Lemma (LLL)~\cite{LLL} with an exponential number of bad events and may at first seem non-constructive.   However, rather recent results give efficient probabilistic algorithms even in these applications of LLL. Theorem~3.3 in~\cite{Haeupler11} will be especially helpful for us. To state it we need to introduce the following notation. We let $\mathcal{P}$ be a finite collection of mutually independent random variables $\{P_1, P_2, \ldots, P_n\}$ and let $\mathcal{A} = \{A_1, A_2, \ldots, A_m\}$ be a collection of events, each determined by some subset of $\mathcal{P}$. For any event $B$ that is determined by a subset of $\mathcal{P}$ we denote the smallest such subset by $\vbl(B)$. Further, for two events $B$ and $B'$ we write $B\sim B'$ if $\vbl(B) \cap \vbl(B') \neq \emptyset$. In other words, $B$ and $B'$ are neighbors in the standard dependency graph considered in LLL.   Finally, we say that a subset $\mathcal{A}' \subseteq \mathcal{A}$ is an efficiently verifiable core subset if there is a polynomial time algorithm for finding a true event in $\mathcal{A}'$ if any.  We can now state a (slightly) simplified version of Theorem 3.3 in \cite{Haeupler11} as follows:
\begin{theorem}
  Let $\mathcal{A}' \subseteq \mathcal{A}$ be an efficiently verifiable core subset of $\mathcal{A}$. If there is an $\eps \in [0,1)$ and an assignment of reals $x: \mathcal{A} \rightarrow (0,1)$ such that:
    \begin{align}
      \label{eq:cond}
      \forall A \in \mathcal{A}: \Pr[A] \leq (1-\eps) x(A) \prod_{B\in \mathcal{A}: B \sim A} (1- x(B))\,,
    \end{align}
    then there exists a randomized polynomial time algorithm that outputs an assignment in which all events in $\mathcal{A}$ are false with probability at least $1- \sum_{A \in \mathcal{A} \setminus \mathcal{A}'} x(A)$.
  \label{thm:LLL_constructive}
\end{theorem}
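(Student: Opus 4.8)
The plan is to run the Moser-Tardos resampling algorithm, but with \emph{all resamplings restricted to the core} $\mathcal{A}'$. Concretely: draw a fresh assignment to the variables in $\mathcal{P}$; then, as long as some event of $\mathcal{A}'$ is violated --- which, together with a witnessing $A\in\mathcal{A}'$, can be located in polynomial time because $\mathcal{A}'$ is efficiently verifiable --- pick such an $A$ and resample all the variables in $\vbl(A)$; once no event of $\mathcal{A}'$ is violated, halt and output the current assignment. The stopping rule makes every event of $\mathcal{A}'$ false at termination, so what remains is to show (i) that the algorithm halts in polynomial time, and (ii) that the events of $\mathcal{A}\setminus\mathcal{A}'$ are collectively unlikely to be violated at the end.

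For (i) I would run the standard witness-tree analysis of Moser-Tardos on the instance $(\mathcal{P},\mathcal{A}')$: each resampling of an event $B$ is charged to a proper \emph{witness tree} rooted at $B$ with all node labels in $\mathcal{A}'$, and the Witness Tree Lemma bounds the probability that a fixed such tree $\tau$ appears in the execution log by $\prod_{v\in\tau}\Pr[A_v]$. Summing over all witness trees rooted at $B$ and bounding each factor via \eqref{eq:cond}, the usual Galton-Watson computation gives that $B$ is resampled $O\!\left(x(B)/(1-x(B))\right)$ times in expectation; hence the expected total number of resampling steps is $\sum_{B\in\mathcal{A}'}O\!\left(x(B)/(1-x(B))\right)$, which is polynomially bounded because $\mathcal{A}'$ is a polynomial-size core. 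Since each step --- detecting a violated core event and resampling its variables --- costs polynomial time, Markov's inequality together with restarting upgrades this to a bona fide randomized polynomial-time algorithm; the $(1-\eps)$ slack in \eqref{eq:cond} is convenient here, making the relevant series converge with room to spare.

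For (ii), fix $A\in\mathcal{A}\setminus\mathcal{A}'$ and suppose $A$ is violated by the final assignment. Appending a ``phantom'' resampling of $A$ to the execution log produces a proper witness tree rooted at $A$ all of whose remaining nodes are labelled by events that were actually resampled, hence by events of $\mathcal{A}'$; and the Witness Tree Lemma --- whose proof never uses that the root event was itself resampled --- again bounds the probability of each such tree by $\prod_{v}\Pr[A_v]$. Summing over all witness trees rooted at $A$ and invoking \eqref{eq:cond}, with the $(1-\eps)$ factor absorbing the geometric tail, yields $\Pr[A\text{ violated at termination}]\le x(A)$. (Equivalently, one may appeal to the Moser-Tardos output-distribution bound: the final assignment satisfies $A$ with probability at most $\Pr[A]\cdot\prod_{B\in\mathcal{A}':\,B\sim A}\tfrac{1}{1-x(B)}$, into which \eqref{eq:cond} plugs to give exactly $x(A)$.) A union bound over $A\in\mathcal{A}\setminus\mathcal{A}'$ now shows that with probability at least $1-\sum_{A\in\mathcal{A}\setminus\mathcal{A}'}x(A)$ none of these events is violated, and combining with the stopping rule --- which forces all of $\mathcal{A}'$ to be false --- the output falsifies \emph{every} event of $\mathcal{A}$ with that probability, as claimed.

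The only genuinely technical point is the Witness Tree Lemma together with the ``phantom resampling'' device used in (ii): one has to verify that whenever a never-resampled event $A$ is violated at the end, the execution log genuinely determines a witness tree rooted at $A$ supported on core events, and that the law of these trees is dominated by the branching process for which the estimate $\sum_\tau\prod_{v\in\tau}\Pr[A_v]\le x(A)$ holds. Granting that, the algorithm, the running-time bound and the closing union bound are all routine. (This theorem is essentially Theorem~3.3 of \cite{Haeupler11}, which one may simply cite; the above is the shape of its proof.)
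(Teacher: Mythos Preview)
Your sketch is correct and is essentially the Moser--Tardos witness-tree argument with a phantom root, which is exactly how \cite{Haeupler11} prove their Theorem~3.3. Note, however, that the present paper does not give its own proof of this statement at all: it is quoted verbatim (in slightly simplified form) from \cite{Haeupler11} and used as a black box, so there is no ``paper's own proof'' to compare against---your last parenthetical remark, that one may simply cite the result, is precisely what the authors do.
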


The following lemma says that we can roughly half the degree of vertices without incurring too much loss in the cut structure. Applying this lemma iteratively then yields a sparsifier. We use the following notation: For an edge set $X$ and disjoint vertex subsets $S$ and $T$, we let $\delta_X(S,T)$ denote the set of edges with one endpoint in $S$ and one in $T$; for brevity, we also write $\delta_X(S)$ for $\delta_X(S, \bar S)$. Also recall that $e_X(S,T) = |\delta(S,T)|$ and  $e_X(S) = |\delta_X(S)|$.
\begin{lemma}
  There exists a probabilistic polynomial-time algorithm that, given an $n$-vertex hypergraph $H = (V,E)$ of maximal degree $d_{\max}$ and of rank $r$, outputs a subgraph $\tilde{H} = (V,F)$ with $F\subseteq E$ such that the following holds with probability at least $1-n^{-3}$:
  \begin{align*}
    \left|2\cdot e_F(S) - e_E(S) \right| \leq 10 \sqrt{d \log (dr)}\cdot |S| \qquad \mbox{ for every $S\subseteq V$.}
  \end{align*}
  \label{lemma:cut_iteration}
\end{lemma}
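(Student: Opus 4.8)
The plan is to apply the constructive Lov\'asz Local Lemma (Theorem~\ref{thm:LLL_constructive}) to a random subgraph obtained by including each hyperedge independently with probability $1/2$. Concretely, let $\{P_e\}_{e\in E}$ be independent $\{0,1\}$ Bernoulli random variables with $\Pr[P_e=1]=1/2$, and let $F=\{e : P_e=1\}$. For a vertex set $S$ with at least one cut edge, let $A_S$ be the bad event that $|2 e_F(S) - e_E(S)| > 10\sqrt{d_{\max}\log(d_{\max}r)}\cdot|S|$. Since $e_F(S) = \sum_{e\in\delta_E(S)} P_e$ has mean $e_E(S)/2$ and is a sum of $e_E(S)\le d_{\max}|S|$ independent indicators, a Chernoff/Hoeffding bound gives $\Pr[A_S] \le 2\exp(-\Omega(\ell^2 d_{\max}\log(d_{\max}r)|S|^2 / (d_{\max}|S|)))$; with the chosen constant $10$ this is roughly $(d_{\max}r)^{-\Omega(|S|)}$, say at most $(d_{\max}r)^{-2|S|}$ after fixing constants. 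The key structural observation, following Frieze--Molloy, is that we only need to worry about $S$ that induce a \emph{connected} sub(hyper)graph: if $S$ is a disjoint union $S_1\sqcup\cdots\sqcup S_k$ with no hyperedge straddling distinct parts and contributing to both cuts, then $e_F(S)$ and $e_E(S)$ decompose additively over the parts (a hyperedge inside the ``interface'' must touch at least one part, and we can charge it there), so if the bound holds for each connected piece it holds for $S$; thus it suffices to define $A_S$ only for connected $S$. One must be slightly careful with hyperedges versus edges here — a hyperedge can intersect several ``components'' of $V\setminus$(its vertices) — but the standard fix is to note $e_E(S)=\sum_i e_E(S_i)$ still holds when the $S_i$ are the vertex sets of the connected components of $H[S]$ and no hyperedge connects two of them through $S$; the general case reduces to this.

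Next I would set up the LLL parameters. Two events $A_S,A_{S'}$ are dependent only if $\mathrm{vbl}(A_S)\cap\mathrm{vbl}(A_{S'})\neq\emptyset$, i.e.\ $\delta_E(S)\cap\delta_E(S')\neq\emptyset$, which forces $S$ and $S'$ to be ``close'' in $H$. The counting lemma I need is: for a fixed vertex $v$, the number of connected vertex subsets $S$ of size $\ell$ containing $v$ in a hypergraph of rank $r$ and max degree $d_{\max}$ is at most $(d_{\max}r)^{O(\ell)}$ — each such $S$ is reachable by a BFS/DFS exploration that, at each of the $\ell$ steps, picks one of the $\le d_{\max}$ incident hyperedges and one of the $\le r$ new vertices in it, so the count is at most $(d_{\max}r)^{\ell}$ up to constants (this is the hypergraph analog of the $\binom{d(\ell-1)}{\ell-1}$ bound mentioned in the introduction). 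Set $x(A_S) = (d_{\max}r)^{-|S|}$ (or a similar geometric choice). Then the product $\prod_{B\sim A_S}(1-x(B))$ is controlled because, summing over all $S'$ within reach and all sizes $\ell'$, $\sum_{S'\sim S} x(A_{S'}) \le |S|\cdot d_{\max}\cdot \sum_{\ell'\ge 1}(\ell')(d_{\max}r)^{\ell'}(d_{\max}r)^{-\ell'} \cdot(\text{something summable})$ — here I would be more careful and choose $x(A_S)=(d_{\max}r)^{-c|S|}$ for a large enough constant $c$ so that the exploration-count $(d_{\max}r)^{O(\ell')}$ is beaten by $(d_{\max}r)^{-c\ell'}$, making $\sum_{S'\sim S}x(A_{S'})$ a small constant (say $<1/4$), hence $\prod(1-x(B)) \ge e^{-O(1)}=\Omega(1)$. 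With $\Pr[A_S]\le (d_{\max}r)^{-2|S|}$ on the left and $x(A_S)\prod(\cdots) \ge \Omega((d_{\max}r)^{-c|S|})$ on the right, the LLL condition~\eqref{eq:cond} holds with a constant slack $\eps$, \emph{provided the exponent $2$ on the left genuinely dominates $c$} — which I ensure by tuning the constant $10$ (the bigger this constant, the larger the effective exponent in the Chernoff bound, so I pick it large enough relative to $c$). This balancing of constants is the one genuinely fiddly bookkeeping step.

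For the constructive/efficiency part I need an efficiently verifiable core subset $\mathcal{A}'\subseteq\mathcal{A}$. Take $\mathcal{A}'$ to be the events $A_S$ with $S$ connected and $|S| \le \ell_0$ where $\ell_0 = O(\log_{d_{\max}} n)$; by the counting lemma there are at most $n\cdot (d_{\max}r)^{O(\ell_0)} = \mathrm{poly}(n)$ many (for the relevant regime where we can afford $d_{\max}r=\mathrm{poly}(n)$, or more precisely the count is polynomial once $\ell_0=O(\log_{d_{\max} r} n)$), so we can enumerate them and check each in polynomial time. It remains to bound $\sum_{A_S\in\mathcal{A}\setminus\mathcal{A}'} x(A_S) = \sum_{|S|>\ell_0, S\text{ connected}} (d_{\max}r)^{-c|S|} \le \sum_{\ell>\ell_0} n\cdot(d_{\max}r)^{O(\ell)}\cdot(d_{\max}r)^{-c\ell}$, which for $c$ large and $\ell_0 = C\log n / \log(d_{\max}r)$ (suitable $C$) is at most $n\cdot n^{-\Omega(C)} \le n^{-3}$. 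Theorem~\ref{thm:LLL_constructive} then outputs in randomized polynomial time an assignment of the $P_e$'s for which \emph{all} $A_S$ are false with probability $\ge 1 - n^{-3}$, which is exactly the claimed conclusion (the resulting $F$ has $\mathbb{E}|F|=|E|/2$, and one can condition or re-run to also guarantee $|F|\le |E|/2 + o(|E|)$, or simply note that the cut bounds are what we need and the size is handled by iterating the lemma $O(\log(d_{\max}/\epsilon^2))$ times in the calling theorem). The main obstacle I anticipate is not any single deep step but the simultaneous tuning of three constants — the $10$ in the bound, the base-rate exponent $c$ in $x(A_S)$, and the exploration constant hidden in $(d_{\max}r)^{O(\ell)}$ — so that \eqref{eq:cond} holds \emph{and} the tail sum over large $S$ is $n^{-3}$; a secondary subtlety is making the ``reduce to connected $S$'' argument fully correct for hyperedges rather than edges, since a single hyperedge can interact with several components at once.
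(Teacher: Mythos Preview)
Your proposal is correct and follows essentially the same route as the paper: sample each hyperedge with probability $1/2$, reduce to connected $S$, bound $\Pr[A_S]$ by Chernoff as $(d_{\max}r)^{-\Omega(|S|)}$, set $x(A_S)=(d_{\max}r)^{-c|S|}$, count connected neighbors, and take the core $\mathcal{A}'$ to be the events with $|S|\le O(\log_{d_{\max}r} n)$. The one place where the paper is sharper than your sketch is exactly the ``secondary subtlety'' you flag: rather than wrestling with hypergraph connectivity directly, the paper passes to the \emph{associated graph} $G$ (replace each hyperedge by a clique) and works with connectivity of $G[S]$. This kills the issue cleanly---if $G[S]$ splits into components $S_1,\dots,S_k$ then no hyperedge meets two distinct $S_i$'s, so $e_E(S)=\sum_i e_E(S_i)$ exactly---and, as a bonus, reduces the counting of connected $S$'s of size $\ell$ to counting rooted subtrees in a graph of maximum degree $d_{\max}(r-1)$, for which the standard bound $\binom{d_{\max}r(\ell-1)}{\ell-1}\le (ed_{\max}r)^{\ell-1}$ applies. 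With this device in hand the constants line up with no further fuss: the paper gets $\Pr[A_S]<(d_{\max}r)^{-6|S|}$ from the constant $10$, takes $x(A_S)=(d_{\max}r)^{-3|S|}$, and both \eqref{eq:cond} and the tail bound $\sum_{|S|>\log_{d_{\max}r}n} x(A_S)\le n^{-3}$ follow.
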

\begin{proof}
  Throughout the proof we let $d= d_{\max}$.
  The proof adapts the arguments in~\cite{Bilu2006} (which in turn are similar to those in~\cite{Frieze1999}) to general hypergraphs. 
  Let $G$ denote the graph obtained from $H = (V,E)$ by replacing each hyperedge $e = (v_1, \ldots, v_k) \in E$, by a clique with $\binom{k}{2}$ edges $(v_i,v_j)$, $i,j\in[k]$. We say that $G$ is associated to $H$. By construction, the degree of any vertex in $G$ is at most $d(r-1)$. 

  Graph $G$ will be important due to the following fact: it is enough to prove the inequality for those subsets $S\subseteq V$ that induce a connected subgraph of $G$. To see this, let $G[S]$ denote the subgraph induced by $S$.  Suppose $G[S]$ is not connected and let $S_1, \ldots, S_k \subseteq S$ be the vertex sets of the connected subgraphs.   If the lemma holds for connected components then $\left|2\cdot e_F(S_i) - e_E(S_i) \right| \leq 10 \sqrt{d \log (dr)}\cdot |S_i|$ for $i = 1, \ldots, k$, and so
  \begin{align*}
    \left|2\cdot e_F(S) - e_E(S) \right|  & = \left|\sum_{i=1}^k \left( 2\cdot e_F(S_i) - e_E(S_i)\right)\right|
    \leq \sum_{i=1}^k \left| 2\cdot e_F(S_i) - e_E(S_i)\right|  \\
    &\leq 10 \sqrt{d \log (dr)}\cdot \sum_{i=1}^k|S_i|
    =  10 \sqrt{d\log(dr)}\cdot |S|\,,
  \end{align*}
  where the first equality holds because there are no edges in $E$ (and $F \subseteq E$) between the sets $S_1, \ldots, S_k$.

  It is thus sufficient to prove the inequality for those sets $S$ that induce a connected subgraph $G[S]$.  Suppose we select $F$ by including each edge $e\in E$ with probability $1/2$ independently of other edges. That is, in the notation of Theorem~\ref{thm:LLL_constructive}, we have that $\mathcal{P}$ consists of $|E|$ mutually independent variables $\{P_e\}_{e\in E}$, where $P_e$ indicates whether $e\in F$ and $\Pr[P_e] = 1/2$. Now for each $S$ such that $G[S]$ is connected, let $A_S$ be the ``bad'' event that $\left|2\cdot e_F(S) - e_E(S) \right| > 10 \sqrt{d \log (dr)}\cdot |S|$. Note that $e_F(S)$ is the sum of at most $d|S|$ independent variables, attaining values $0$ and $1$, and that the expected value of $e_F(S)$ equals $e_E(S)/2$. Thus by the Chernoff inequality we get
  \begin{align*}
    \Pr[A_S]   < (dr)^{-6 |S|}\,.
  \end{align*}

  To apply Theorem~\ref{thm:LLL_constructive}, we analyze the dependency graph on the events: there is an edge between $A_S$ and $A_{S'}$ if $\vbl(A_S) \cap \vbl(A_{S'}) \neq \emptyset  \Leftrightarrow \delta_E(S) \cap \delta_E(S') \neq \emptyset$. Consider now a fixed event $A_S$ and let $k = |S|$. We  bound the number of neighbors, $A_{S'}$, of $A_S$ with $|S'| = \ell$.  Since we are interested in only subsets $S'$ such that $G[S']$ is connected, this is bounded by the number of distinct subtrees on $\ell$ vertices in the associated graph $G$, with a root in one of the endpoints of an edge in $\delta(S)$ . As $G$ has degree at most $d(r-1)$, there are at most $|S| + d(r -1)|S|=drk$ choices of the root. The number of such trees is known to be at most  (see e.g.~\cite{Frieze1999})
\begin{align}
  \label{eq:nrcomponents}
  drk \cdot {dr  (\ell-1) \choose \ell-1} \leq  drk \cdot (edr)^{\ell-1}\,,
\end{align}
where we used  that ${dr  (\ell-1) \choose \ell-1} \leq (edr)^{\ell-1}$.

Now to verify condition~\eqref{eq:cond} of Theorem~\ref{thm:LLL_constructive}, we  set $x(A_S) = (dr)^{-3|S|}$ for every bad event $A_S$. So if we consider an event $A_S$ with  $k = |S|$, then 
\begin{align*}
  x(A_S) \prod_{S': A_S \sim A_{S'}} \left( 1- x(A_{S'}) \right) &= (dr)^{-3k} \prod_{\ell=1}^n \left(1- (dr)^{-3\ell} \right)^{dr k (edr)^{\ell-1}}  \\
  & \geq (dr)^{-3k} \exp(- 2drk \sum_{\ell=1}^n (dr)^{-3\ell} (edr)^{\ell-1}) \\
  & \geq (dr)^{-3k} e^{-3k}  > (dr)^{-6k}/2 > \Pr[A_S]/2\,,
\end{align*}
where we used that $d$ is a sufficiently large constant, which is without loss of generality since if $d\leq 10 \sqrt{d\log (dr)}$ then the lemma becomes trivial.
In other words,~\eqref{eq:cond} is satisfied with $\epsilon$ set to $1/2$.

It remains to define an efficiently verifiable core subset $\mathcal{A}'\subseteq \mathcal{A}$ such that $1- \sum_{A \in \mathcal{A} \setminus \mathcal{A}'} x(A) \geq 1-n^{-3}$. We let
\begin{align*}
  \mathcal{A'} = \{A_S\in \mathcal{A}: |S| \leq s\} \mbox{ where $s= \log_{dr}(n)$}.
\end{align*}
By the same arguments as in~\eqref{eq:nrcomponents},  there is at most $n \cdot {dr  (\ell-1) \choose \ell-1} \leq n (edr)^{\ell-1}$ many events with $|S| = \ell$ (corresponding to connected components in $G$). Therefore, the following properties hold:
\begin{enumerate}
  \item $\mathcal{A}'$ is efficiently verifiable since it contains $n \cdot \sum_{\ell=1}^s (edr)^{\ell-1} = O(n \cdot (edr)^s)  = O(n^3)$ many events that can be efficiently enumerated by first selecting a vertex $r$ among $n$ choices and the considering all possible  trees rooted at $r$ with $\ell\leq s$ vertices.
  \item We have
    \begin{align*}
      \sum_{A_S \in \mathcal{A}\setminus \mathcal{A'}} x(A_S) & \leq \sum_{\ell = s+1}^n (dr)^{-6\ell} \cdot(n \cdot (edr)^\ell) \\
      & \leq n \cdot \sum_{\ell = s+1}^n (dr)^{-4\ell}  \leq n (dr)^{-4s}  = n^{-3}\,,
    \end{align*}
    where for the first inequality we again used that $d$ is a sufficiently large constant.
\end{enumerate}
We have verified Condition~\eqref{eq:cond} of Theorem~\ref{thm:LLL_constructive} and we have defined an efficiently verifiable core subset $\mathcal{A}'$ such that  $\sum_{A_S \in \mathcal{A}\setminus \mathcal{A}'} x(A_S) \leq n^{-3}$ and so the lemma follows.
\end{proof}

Applying the above lemma iteratively will  give us additive cut sparsifiers of constant degree. In particular, the condition in the following lemma will imply that the degree of each vertex in $\tilde{H}$ is at most $O(d_{\max}/2^k)$ and $k$ can be chosen so that the degree is at most $O(\frac{1}{\eps^2} \log(r/\eps))$. The following lemma therefore implies Theorem~\ref{th.probmax}.
\begin{lemma}
  There is an absolute constant $c$ such that the following holds. There is a probabilistic polynomial-time algorithm that given as input an $n$-vertex hypergraph $H =(V,E)$ of maximal degree $d_{\max}$ and of rank $r$, $\eps >0$, and any $k \in \mathbb{N}$ such that $d_{\max}2^{-k} \geq c \frac{1}{\eps^2}\log(r/\eps)$,  outputs a subgraph $\tilde{H} = (V, F)$ such that the following holds with probability at  least $1-n^{-2}$: 
  \begin{align*}
    \left|2^k\cdot e_F(S) - e_E(S) \right| \leq \eps d_{\max}|S| \qquad \mbox{ for every $S\subseteq V$.}
  \end{align*}
  \label{lemma:cutsparsifier}
\end{lemma}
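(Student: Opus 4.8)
The plan is to build $\tilde H$ by running the algorithm of Lemma~\ref{lemma:cut_iteration} $k$ times in a row. Set $H_0:=H$ and, for $i=0,1,\dots,k-1$, let $H_{i+1}=(V,F_{i+1})$ be the output of that algorithm on input $H_i$; since the lemma returns a subgraph, $F_{i+1}\subseteq F_i$, so every $H_i$ has rank at most $r$ and is a subgraph of $H$, and we output $\tilde H:=H_k$ with $F:=F_k$. As $d_{\max}\le\mathrm{poly}(n)$ we perform only $k=O(\log n)$ iterations, each running in polynomial time and succeeding with probability at least $1-n^{-3}$; by a union bound all of them succeed with probability at least $1-kn^{-3}\ge 1-n^{-2}$, and from now on I condition on that event.

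First I would control the maximal degrees $d_i:=d_{\max}(H_i)$. Applying the guarantee of Lemma~\ref{lemma:cut_iteration} to singletons bounds the degree of each vertex in $H_{i+1}$ by half its degree in $H_i$ plus $5\sqrt{d_i\log(d_ir)}$, so $d_{i+1}\le\tfrac12 d_i\bigl(1+10\sqrt{\log(d_ir)/d_i}\bigr)$ and also $d_{i+1}\ge\tfrac12 d_i$. Iterating gives $d_i\ge 2^{-i}d_{\max}\ge 2^{-k}d_{\max}\ge c\,\epsilon^{-2}\log(r/\epsilon)$ for every $i\le k$, and $d_k\le 2^{-k}d_{\max}\prod_{i=0}^{k-1}\bigl(1+10\sqrt{\log(d_ir)/d_i}\bigr)$. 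The key point is that if the absolute constant $c$ is large enough, the hypothesis already forces each $d_i$ to be much larger than $\log(d_ir)$: this uses the elementary fact that $w\ge A\log(Bw)$ holds for every $w\ge O(A\log(AB))$, applied with $A\asymp\epsilon^{-2}$ and $B\asymp r$. Since the $d_i$ then decrease geometrically, the quantities $\sqrt{\log(d_ir)/d_i}$ increase geometrically, whence $\sum_{i=0}^{k-1}\sqrt{\log(d_ir)/d_i}=O\bigl(\sqrt{\log(d_{k-1}r)/d_{k-1}}\bigr)$ can be made an arbitrarily small constant by enlarging $c$. Hence the product above is at most $2$ and $d_i\le 2^{1-i}d_{\max}$ for all $i$; in particular $\tilde H$ has maximal degree $O(d_{\max}2^{-k})$, which for the largest admissible $k$ is $O(\epsilon^{-2}\log(r/\epsilon))$, the degree bound underlying the edge count in Theorem~\ref{th.probmax}.

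Next I would accumulate the cut error. Fix $S\subseteq V$ and write $e_i:=e_{F_i}(S)$. Each invocation gives $|2e_{i+1}-e_i|\le 10\sqrt{d_i\log(d_ir)}\,|S|$, so multiplying by $2^i$ and telescoping,
\[
\bigl|2^k e_F(S)-e_E(S)\bigr|\le\sum_{i=0}^{k-1}2^i\bigl|2e_{i+1}-e_i\bigr|\le 10\,|S|\sum_{i=0}^{k-1}2^i\sqrt{d_i\log(d_ir)}.
\]
Plugging in $d_i\le 2^{1-i}d_{\max}$, the $i$-th summand is $O\bigl(\sqrt{d_{\max}\,2^i\log(d_{\max}2^{-i}r)}\bigr)$; since $2^i\log(d_{\max}2^{-i}r)$ is increasing in $i$, these grow geometrically and the sum is, up to a constant, its last term, which equals $O\bigl(d_{\max}\sqrt{\log(u_kr)/u_k}\bigr)$ with $u_k:=d_{\max}2^{-k}$. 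By the hypothesis $u_k\ge c\epsilon^{-2}\log(r/\epsilon)$ and the same ``$w\ge A\log(Bw)$'' fact, this is at most $\epsilon d_{\max}$, giving $|2^k e_F(S)-e_E(S)|\le\epsilon d_{\max}|S|$ and proving the lemma.

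I expect the delicate point to be exactly the bookkeeping of the logarithmic factors. In round $i$ the error scales like $\sqrt{d_i\log(d_ir)}$ with $d_i$ the current degree, so naively one would want a hypothesis involving $\log(d_{\max}r)$; similarly one must rule out that the degrees stop halving once $d_i$ nears the stopping threshold. Both are handled by the observation that the stopping condition $d_{\max}2^{-k}\ge c\epsilon^{-2}\log(r/\epsilon)$ already forces the smallest degree reached, $d_{k-1}=\Theta(u_k)$, to dominate $\log(d_{k-1}r)$ — because $w\ge A\log(Bw)$ holds for all $w$ beyond $O(A\log(AB))=O(\epsilon^{-2}\log(r/\epsilon))$ — together with the fact that both the degree-slippage product and the error sum are controlled by their worst (last) term. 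Everything else is a routine telescoping/union-bound layer on top of Lemma~\ref{lemma:cut_iteration}.
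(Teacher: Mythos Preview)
Your proposal is correct and follows the same iterate--union-bound--telescope approach as the paper. One small slip: $d_{i+1}\ge\tfrac12 d_i$ is not literally what Lemma~\ref{lemma:cut_iteration} gives (there is the same $5\sqrt{d_i\log(d_ir)}$ correction as in the upper bound), but this is easily absorbed; the paper sidesteps this entirely by inducting directly on $|2^id_i-d_0|\le\eps d_0$, which only requires the upper bound $d_i\le 2d_0/2^i$ when plugging into $\sum_i 2^i\sqrt{d_i\log(d_ir)}$.
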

\begin{proof}
  Starting with $H$ we apply Lemma~\ref{lemma:cut_iteration} $k$ times to obtain $\tilde{H}$. Let $F_i$ denote the edge set and let $d_i$ denote the maximum degree after round $i$. So $F_0 = E$ and $d_0=d_{\max}$.
  By the guarantees of Lemma~\ref{lemma:cut_iteration}, we have that with probability $1-n^{-3}$
\begin{align} 
  \label{eq:degree}
  |2 d_{i+1} - d_i | &\leq 10 \sqrt{d_i \log (d_ir)} 
\end{align}
and
\begin{align}
  \label{eq:set}
  \left|2\cdot e_{F_{i+1}}(S) - e_{F_i}(S) \right| &\leq 10 \sqrt{d_i \log (d_ir)}\cdot |S| \qquad \mbox{ for every $S\subseteq V$.}
\end{align}

As we apply Lemma~\ref{lemma:cut_iteration} $k$ times with $k \leq \log(n)$, the union bound implies that the above inequalities are true for all invocations of that lemma with probability at least $1 - k\cdot n^{-3} \geq 1-n^{-2}$. From now on we assume that the above inequalities hold and show that the conclusion of the statement is always true in that case.
Specifically, we now prove by induction on $k$ that 
\begin{align*} 
  |2^k d_{k} - d_0 | &\leq  \eps d_0\mbox{, and} \\
  \left|2^k\cdot e_{F_{k}}(S) - e_{F_0}(S) \right| &\leq \eps d_0 \cdot |S| \qquad \mbox{ for every $S\subseteq V$.}
\end{align*}
The claim holds trivially for $k=0$.
Assume it holds for all $i<k$, which in particular implies $2^id_i \leq 2 d_0$ for all $i <k$. By the triangle inequality and \eqref{eq:degree}, 
\begin{eqnarray*}
|2^{k} d_k - d_0  |  & \leq &    \sum_{i=0}^{k-1}  | 2^{i} (2d_{i+1} - d_{i})| \leq 10  \sum_{i=0}^{k-1}  2^{i} \sqrt{d_i \log (d_i r)}  \\
& \leq &  
10 \sum_{i=0}^{k-1}  2^{i}   \sqrt{2 (d_0/2^i) \log (2(d_0/2^i) r)} \qquad \text{(induction hypothesis on $d_i$).}
\end{eqnarray*}
As the terms increase geometrically in $i$, this sum is $O( 2^k\sqrt{(d_0/2^k) \log ((d_0/2^{k})r)}$ which is $\eps d_0$ by our assumption on $k$ and selection of $c$.

Finally, we note that $\left|2^k\cdot e_{F_{k}}(S) - e_{F_0}(S) \right| \leq \eps d_0 \cdot |S|$  follows by the same calculations (using~\eqref{eq:set} instead of~\eqref{eq:degree}). 
\end{proof}

\subsection{Additive spectral graph sparsifiers}
\label{sec:additive_spectral}
In this section we describe how the proof in the previous section generalizes to \emph{spectral} additive graph sparsifiers. 

\begin{theorem} \label{th.prob.graph.max}
Given an $n$-vertex graph $G=(V,E)$ and a parameter $\epsilon>0$, in probabilistic polynomial time we can find a subset $F\subseteq E$ of size $|F| = n \cdot  O((\log(1/\epsilon)^3/\epsilon^2)$ such that, if we let $L_G = D_G - A_G$
be the Laplacian of $G$, $L_{\tilde G}= D_{\tilde G} - A_{\tilde G}$ be the Laplacian of
the graph $\tilde G = (V,F)$,  and $c$ a normalization constant, we have
\begin{equation}
  - \epsilon d_{\max} I \preceq    c L_{\tilde G} - L_G \preceq  \epsilon d_{\max} I\,.
\end{equation}
\end{theorem}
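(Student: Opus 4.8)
The plan is to imitate the iterative halving of Lemmas~\ref{lemma:cut_iteration} and~\ref{lemma:cutsparsifier}, replacing the cut guarantee by a \emph{spectral} one. The crux is a spectral halving lemma: a randomized polynomial-time algorithm that, given $H=(V,E)$ of maximum degree $d$, outputs $F\subseteq E$ such that, with probability at least $1-n^{-3}$,
\[ \left\| 2L_{(V,F)} - L_H \right\| \le C\sqrt{d\log^{3} d} \qquad\text{and}\qquad \bigl| 2d_v(V,F) - d_v(H)\bigr| \le C\sqrt{d\log d}\ \text{ for all } v\in V, \]
for an absolute constant $C$ (the $\log^{3}$ under the square root is the quantitative Bilu--Linial signing bound, and it is what produces the $\log^{3}(1/\epsilon)$ in the statement). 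Granting this, Theorem~\ref{th.prob.graph.max} follows exactly as Lemma~\ref{lemma:cutsparsifier} followed from Lemma~\ref{lemma:cut_iteration}: apply the halving lemma $k$ times to get $\tilde G=(V,F_k)$ with $c:=2^{k}$, where $k$ is the largest integer with $d_{\max}2^{-k}\ge c_0\,\log^{3}(1/\epsilon)/\epsilon^2$ (take $\tilde G=G$ if $d_{\max}$ is already this small); the degree bounds keep $2^{i}d_i\le 2d_{\max}$ throughout, so by the triangle inequality for the operator norm
\[ \left\| 2^{k} L_{F_k} - L_{F_0}\right\| \le \sum_{i=0}^{k-1} 2^{i}\left\|2L_{F_{i+1}}-L_{F_i}\right\| \le C\sum_{i=0}^{k-1} 2^{i}\sqrt{d_i\log^{3} d_i}\,, \]
a geometrically increasing sum dominated by its last term $O\!\bigl(d_{\max}\sqrt{\log^{3}(d_{\max}2^{-k})\,/\,(d_{\max}2^{-k})}\,\bigr)\le \epsilon d_{\max}$ by the choice of $k$; and $|F_k|\le\tfrac12\sum_v d_v(\tilde G)\le n\cdot O(\log^{3}(1/\epsilon)/\epsilon^2)$ by the degree bound.

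For the spectral halving lemma I would keep each edge of $H$ independently with probability $1/2$ and set $\xi_e:=2\cdot\mathbf{1}[e\in F]-1\in\{\pm1\}$, so that $M:=2L_{(V,F)}-L_H=\sum_{e\in E}\xi_e L_e$ has mean zero and splits as $M=(2D_{(V,F)}-D_H)+(A_H-2A_{(V,F)})$. The diagonal part has $v$-th entry $\sum_{e\ni v}\xi_e$ and is controlled by one ``degree'' bad event per vertex, $A_v:=\{\,|\sum_{e\ni v}\xi_e|>C\sqrt{d\log d}\,\}$, which by a Chernoff bound has probability $d^{-\Omega(1)}$ and depends only on the edges at $v$ (these are the events beyond Bilu--Linial). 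The off-diagonal part $-\sum_e\xi_e A_e$ is a random signing of the adjacency matrix, handled as in~\cite{Bilu2006}: writing $\|\sum_e\xi_e A_e\|=\sup_{\|\bx\|=\|\by\|=1}\bigl|\sum_{(u,v)\in E}\xi_{uv}x_uy_v\bigr|$ and rounding $\bx,\by$ to a fixed geometric scale at an additive cost $O(\sqrt d)$, the form becomes, up to this error, a sum over pairs $(S,T)$ of level sets of the discrepancies $\sum_{u\in S,\,v\in T}\xi_{uv}$ with bounded coefficients; since a disconnected pair decomposes into its connected pieces it suffices to introduce one ``discrepancy'' bad event $A_{S,T}:=\{\,|\sum_{u\in S,v\in T}\xi_{uv}|>C\sqrt{d(|S|+|T|)\log d}\,\}$ for every pair $(S,T)$ with $S\cup T$ connected in $H$, which by a Chernoff bound has probability $d^{-\Omega(|S|+|T|)}$, while the number of such pairs of total size $\ell$ incident to a fixed edge is $d^{O(\ell)}$ by the subtree count of~\eqref{eq:nrcomponents}.

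I would then invoke the constructive LLL, Theorem~\ref{thm:LLL_constructive}, with weights $x(A_v)=d^{-\Theta(1)}$ for the degree events and $x(A_{S,T})=d^{-\Theta(|S|+|T|)}$ for the discrepancy events. Condition~\eqref{eq:cond} holds because two events are adjacent only when their vertex sets share an incident edge, so the product over a neighbourhood factors over size scales $\ell$ and at each scale the weight $d^{-\Theta(\ell)}$ beats the count $d^{O(\ell)}$ --- the same bookkeeping as in the proof of Lemma~\ref{lemma:cut_iteration}, now run simultaneously over both families. For the efficiently verifiable core $\mathcal{A}'$ I would take all $n$ degree events together with all discrepancy events with $|S|+|T|\le s:=\log_d n$; these number $O(n^{3})$, are enumerated by rooting at a vertex or edge and growing all subtrees of size at most $s$, and satisfy $\sum_{A\notin\mathcal{A}'}x(A)\le n^{-3}$ by a geometric-series estimate. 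Theorem~\ref{thm:LLL_constructive} then outputs, with probability at least $1-n^{-3}$, a choice of $F$ falsifying every bad event, which is precisely the conclusion of the halving lemma.

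The main obstacle is carrying through the Bilu--Linial reduction for the off-diagonal part faithfully: passing from the operator-norm supremum over the continuum of unit vectors to finitely many discrepancy events over \emph{connected} set pairs, while keeping the discretization loss at $O(\sqrt d)$ and the number of relevant pairs of each size at $d^{O(\ell)}$ so that the Local Lemma closes with room to spare. The added degree events are routine by comparison; what remains is the joint LLL bookkeeping --- so that each event's neighbourhood product also accounts for the nearby events of the other family --- and the verification that the telescoped error and the edge count come out to the advertised $n\cdot O(\log^{3}(1/\epsilon)/\epsilon^2)$.
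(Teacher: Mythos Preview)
Your approach is essentially the paper's: the same LLL is run at each halving step with the same two families of bad events --- discrepancies $A_{S,T}$ over connected pairs, together with per-vertex degree events --- and the cuts-to-spectrum conversion is Bilu--Linial's Lemma~3.3 (the paper cites it as Lemma~\ref{lem:bilu} rather than re-deriving the discretisation). The only structural difference is ordering: the paper first telescopes the \emph{cut} bounds over all $k$ halvings (Lemma~\ref{lemma:cutsparsifier_adv}) to obtain $|2^k e_F(S,T)-e_E(S,T)|\le\epsilon d\sqrt{|S||T|}$, and only then applies Bilu--Linial once to $A=2^kL_{\tilde G}-L_G$ with $\alpha=\epsilon d$ and $\ell=O(d)$, picking up the extra $\log(1/\epsilon)$ factor at that single application; you instead invoke Bilu--Linial inside each halving step (getting the $\sqrt{d\log^3 d}$ per-step spectral error) and then telescope the operator norm. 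Both orderings arrive at the same $\log^3(1/\epsilon)/\epsilon^2$ degree budget.

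There is, however, one concrete slip that breaks the LLL as you have written it. Your discrepancy threshold $C\sqrt{d(|S|+|T|)\log d}$ does \emph{not} give probability $d^{-\Omega(|S|+|T|)}$: since there are at most $d\min(|S|,|T|)$ edges between $S$ and $T$, Hoeffding yields an exponent of order $\log d\cdot(|S|+|T|)/\min(|S|,|T|)$, which collapses to $\Theta(\log d)$ when $|S|\approx|T|$ --- so the probability is only $d^{-O(1)}$ regardless of $\ell=|S|+|T|$, and the Local Lemma cannot beat the $d^{O(\ell)}$ connected-pair count. The threshold you want (and the one the paper uses in Lemma~\ref{lemma:cut_iteration_adv}) is $C\sqrt{d\log d}\cdot\sqrt{|S||T|}$: then the Hoeffding exponent is $\Theta(\log d\cdot\max(|S|,|T|))\ge\Theta(\log d\cdot(|S|+|T|))$ as needed, the reduction to connected pairs still goes through via Cauchy--Schwarz on $\sum_i\sqrt{|S_i||T_i|}\le\sqrt{|S||T|}$, and this is exactly the hypothesis $|\bu^TA\bv|\le\alpha\|\bu\|\|\bv\|$ of Lemma~\ref{lem:bilu} with $\alpha=C\sqrt{d\log d}$. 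With that correction your argument goes through.
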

\noindent Similar to before, this implies Theorem~\ref{th.prob.graph} by the reductions in Section~\ref{sec.reduction}.

To prove Theorem~\ref{th.prob.graph.max}, we need the following modification of  Lemma~\ref{lemma:cut_iteration} in the case of simple graphs. 
\begin{restatable}{lemma}{cutiterationadv}
  There exists a probabilistic polynomial-time algorithm that, given an $n$-vertex graph $G = (V,E)$ of maximal degree $d$, outputs a subgraph $\tilde{G} = (V,F)$ with $F\subseteq E$ such that the following properties  hold with probability at least $1-n^{-3}$: 
  \begin{enumerate}
    \item For  every disjoint $S,T \subseteq V$ we have 
      $\left|2\cdot e_F(S,T) - e_E(S,T) \right| \leq 10 \sqrt{d \log d}\cdot \sqrt{|S||T|}$. \label{en:adv_it1}
    \item For every vertex $v\in V$ we have $\left|2\cdot e_F(v) - e_E(v) \right| \leq 10 \sqrt{d\log d}$. \label{en:adv_it2}
  \end{enumerate}
  \label{lemma:cut_iteration_adv}
\end{restatable}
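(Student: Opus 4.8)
The plan is to mirror the proof of Lemma~\ref{lemma:cut_iteration} but replace the family of ``connected cut'' events by the two families dictated by the two conclusions. First I would again sample $F$ by including each edge of $E$ independently with probability $1/2$; in the notation of Theorem~\ref{thm:LLL_constructive}, $\mathcal P = \{P_e\}_{e\in E}$ with $\Pr[P_e]=1/2$. For conclusion~\ref{en:adv_it1} I would, for every pair of disjoint vertex sets $(S,T)$ such that the bipartite graph on $\delta_E(S,T)$ is connected (when viewed as a subgraph of the associated graph/the graph $G$ itself), introduce the bad event $A_{S,T}$ that $|2 e_F(S,T)-e_E(S,T)| > 10\sqrt{d\log d}\cdot\sqrt{|S||T|}$. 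As before, a disconnected pair decomposes into connected pieces and the bound is additive because there are no edges between the pieces — here I must be slightly careful to note that when $\delta_E(S,T)$ decomposes into connected components, $\sqrt{\sum_i |S_i||T_i|}\le \sum_i\sqrt{|S_i||T_i|}$, so the sum of the per-component right-hand sides dominates. The quantity $e_F(S,T)$ is a sum of at most $d\cdot\min(|S|,|T|)$ (more conveniently $d\sqrt{|S||T|}$, or just: $e_E(S,T)\le d\sqrt{|S||T|}$ when the induced bipartite graph is connected) independent $0/1$ variables with mean $e_E(S,T)/2$, so a Chernoff bound gives $\Pr[A_{S,T}] < d^{-6(|S|+|T|-1)}$ or a comparable exponentially small bound in the number of vertices touched. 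For conclusion~\ref{en:adv_it2} I would add, for each vertex $v$, the event $B_v$ that $|2e_F(v)-e_E(v)|>10\sqrt{d\log d}$; this is a sum of at most $d$ independent $0/1$ variables of mean $e_E(v)/2\le d/2$, and Chernoff gives $\Pr[B_v] < d^{-6}$ (say).

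Next I would bound the dependency structure. Two vertex-events $B_v,B_{v'}$ are dependent only if $v,v'$ share an incident edge, so each $B_v$ has at most $d$ vertex-event neighbors; $B_v$ is dependent with $A_{S,T}$ iff $v$ is an endpoint of an edge of $\delta_E(S,T)$. For the cut-events, just as in Lemma~\ref{lemma:cut_iteration}, the number of neighbors $A_{S',T'}$ of a fixed $A_{S,T}$ with $|S'|+|T'| = \ell$ is at most $O(d\cdot\ell_0)\cdot\binom{d(\ell-1)}{\ell-1}\le d\,\ell_0\,(ed)^{\ell-1}$ where $\ell_0 = |S|+|T|$, because a connected bipartite edge set on $\ell$ vertices is a subtree of the graph rooted at one of the $\le d(|S|+|T|)$ endpoints of edges leaving $S\cup T$; the number of such rooted subtrees is the same tree count used in~\eqref{eq:nrcomponents}. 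Then I would set $x(A_{S,T}) = d^{-3(|S|+|T|)}$ and $x(B_v) = d^{-3}$, and check condition~\eqref{eq:cond} exactly as in Lemma~\ref{lemma:cut_iteration}: the product $\prod_{A_{S',T'}\sim A_{S,T}}(1-x(A_{S',T'}))\ge \exp(-2d\ell_0\sum_\ell d^{-3\ell}(ed)^{\ell-1})\ge e^{-3\ell_0}$, the finitely many extra factors $(1-x(B_v))$ over the $\le d\ell_0$ dependent vertices contribute only $e^{-O(\ell_0)}$ (since $d\ell_0\cdot d^{-3}$ is tiny), and $x(A_{S,T})\prod(1-x(\cdot))\ge d^{-6\ell_0}/2 > \Pr[A_{S,T}]$, and similarly the few neighbors of each $B_v$ give $x(B_v)\prod(1-x(\cdot))\gtrsim d^{-3} > \Pr[B_v]$; all of this uses that $d$ is a sufficiently large constant (the lemma being trivial otherwise), so~\eqref{eq:cond} holds with $\epsilon=1/2$.

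Finally I would exhibit the efficiently verifiable core subset. Take $\mathcal A' = \{B_v: v\in V\}\cup\{A_{S,T}: |S|+|T|\le s\}$ with $s = \log_d n$. The vertex events number $n$, and the bounded-size cut events number at most $n\cdot\sum_{\ell\le s}(ed)^{\ell-1} = O(n\cdot(ed)^{s}) = O(n^{O(1)})$ and are enumerable by rooting at each vertex and listing subtrees of size $\le s$; testing truth of each is immediate. The tail mass is $\sum_{A\in\mathcal A\setminus\mathcal A'}x(A) \le \sum_{\ell>s} n(ed)^{\ell}\cdot d^{-3\ell}\le n\sum_{\ell>s} d^{-\ell}\le n\cdot d^{-s}\le n^{-2}$ (tightening constants in the Chernoff/$x$-exponents gives $n^{-3}$ as desired, and one can always rescale the exponent $3$ to, say, $4$ to be safe), so Theorem~\ref{thm:LLL_constructive} produces, in randomized polynomial time, an assignment avoiding all bad events with probability $\ge 1-n^{-3}$; on that event both conclusions hold. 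The main obstacle is the combinatorial bookkeeping for the bipartite ``connected cut'' events: one has to pin down the right notion of connectivity of $\delta_E(S,T)$ (so that the decomposition-into-components argument works with the $\sqrt{|S||T|}$ right-hand side), confirm that $e_E(S,T)\le d\sqrt{|S||T|}$ under that connectivity assumption so the Chernoff bound is exponentially small in $|S|+|T|$, and verify that the subtree-counting bound~\eqref{eq:nrcomponents} still applies in the bipartite setting; once these are in place the LLL computation is essentially identical to the one already carried out for Lemma~\ref{lemma:cut_iteration}.
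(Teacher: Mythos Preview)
Your proposal is essentially the paper's proof. Two small fixes worth noting: First, the paper sidesteps your stated ``main obstacle'' by taking the connectivity notion to be simply that $G[S\cup T]$ is connected (not bipartite connectivity of $\delta_E(S,T)$); then the tree-counting bound~\eqref{eq:nrcomponents} applies verbatim to the $\ell$-vertex set $S'\cup T'$, and one only pays an extra factor $2^{\ell}$ for the choice of bipartition, which is harmlessly absorbed since $d$ is large. Second, your decomposition inequality is written the wrong way around: what you actually need is $\sum_i \sqrt{|S_i||T_i|} \le \sqrt{|S|\,|T|}$, which is Cauchy--Schwarz on the vectors $(\sqrt{|S_i|})_i$ and $(\sqrt{|T_i|})_i$ (note $\sum_i |S_i||T_i|$ is \emph{not} equal to $|S||T|$). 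Finally, $e_E(S,T)\le d\min(|S|,|T|)\le d\sqrt{|S||T|}$ holds unconditionally, so no connectivity hypothesis is needed for the Chernoff step.
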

The above lemma is similar to Lemma 3.2 in~\cite{Bilu2006} with the exception that here we also need the degree constraints (the second condition). Similar to Lemma~\ref{lemma:cutsparsifier_adv} we obtain the following by applying Lemma~\ref{lemma:cut_iteration_adv} iteratively.
\begin{restatable}{lemma}{cutsparsifieradv}
  There is an absolute constant $c$ such that the following holds. There is a probabilistic polynomial-time algorithm that on input an $n$-vertex graph $G =(V,E)$ of maximum degree $d$, $\eps >0$, and any $k \in \mathbb{N}$ such that $d2^{-k} \geq c \frac{1}{\eps^2}\log(1/\eps)$,  outputs a subgraph $\tilde{G} = (V, F)$ such that the following properties hold with probability at  least $1-n^{-2}$: 
  \begin{enumerate}
    \item For  every disjoint $S,T \subseteq V$ we have 
      $\left|2^k\cdot e_F(S,T) - e_E(S,T) \right| \leq \eps d\cdot \sqrt{|S||T|}$. \label{en:adv1}
    \item For every vertex $v\in V$ we have $\left|2^k\cdot e_F(v) - e_E(v) \right| \leq \eps d$. \label{en:adv2}
  \end{enumerate}
  \label{lemma:cutsparsifier_adv}
\end{restatable}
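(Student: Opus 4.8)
\emph{Proof plan.} The plan is to iterate Lemma~\ref{lemma:cut_iteration_adv} exactly $k$ times, in direct analogy with the way Lemma~\ref{lemma:cutsparsifier} was deduced from Lemma~\ref{lemma:cut_iteration}. Starting from $G = G_0$ and applying Lemma~\ref{lemma:cut_iteration_adv} to $G_i$ to produce $G_{i+1}$ gives a nested chain of edge sets $E = F_0 \supseteq F_1 \supseteq \cdots \supseteq F_k = F$; write $d_i$ for the maximum degree of $G_i = (V, F_i)$ and $d_i(v)$ for the degree of $v$ in $G_i$, so $d_0 = d$. Since every invocation needs $d_i \ge 1$ and $d_i \le d \le n$, the hypothesis $d\,2^{-k}\ge c\,\eps^{-2}\log(1/\eps)$ forces $k \le \log_2 n$, so a union bound over the $k$ rounds shows that, with probability at least $1 - k\,n^{-3} \ge 1 - n^{-2}$, both conclusions of Lemma~\ref{lemma:cut_iteration_adv} hold at every step $i < k$; I would condition on this event for the rest of the argument, so that $2\, d_{i+1}(v) \le d_i(v) + 10\sqrt{d_i \log d_i}$ for every $v$, and $\bigl| 2\, e_{F_{i+1}}(S,T) - e_{F_i}(S,T) \bigr| \le 10\sqrt{d_i \log d_i}\,\sqrt{|S||T|}$ for all disjoint $S, T$.

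Next I would prove, by induction on $j \in \{0, \dots, k\}$, the three bounds: $|2^j d_j - d_0| \le \eps d_0$; $|2^j d_j(v) - d_0(v)| \le \eps d_0$ for every $v$; and $|2^j e_{F_j}(S,T) - e_{F_0}(S,T)| \le \eps d_0 \sqrt{|S||T|}$ for all disjoint $S, T$. The case $j = k$ is exactly properties~\ref{en:adv1} and~\ref{en:adv2} of the lemma, and the max-degree bound is carried along chiefly so that the induction has access to the estimate $2^i d_i \le 2 d_0$ at all earlier levels $i < j$. The inductive step is a telescoping: $2^j d_j(v) - d_0(v) = \sum_{i=0}^{j-1} 2^i\bigl(2\, d_{i+1}(v) - d_i(v)\bigr)$, and I would bound the $i$-th term in absolute value by $10 \cdot 2^i\sqrt{d_i \log d_i}$ using property~\ref{en:adv_it2} of Lemma~\ref{lemma:cut_iteration_adv} and then substitute $d_i \le 2 d_0 2^{-i}$ from the induction hypothesis. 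The $e_{F_j}(S,T)$ bound comes out of the identical telescoping with property~\ref{en:adv_it1} in place of the degree condition, the factor $\sqrt{|S||T|}$ simply riding along unchanged through all the steps.

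The quantitative heart is bounding $10\sum_{i=0}^{j-1} 2^i\sqrt{d_i\log d_i}$. Using $d_i \le 2 d_0 2^{-i}$ the $i$-th summand is at most $2^{i/2}\sqrt{2 d_0}\,\sqrt{\log(2 d_0 2^{-i})}$, and, setting $m := d_0 2^{-k}$ and reindexing by $\ell = k - i$, this becomes $2^{(k-\ell)/2}\sqrt{2 d_0}\,\sqrt{\log(2^{\ell+1} m)}$; since $\log m = \Omega(1)$, the tail $\sum_{\ell \ge 1} 2^{-\ell/2}\sqrt{(\ell+1)\log 2 + \log m}$ is $O(\sqrt{\log m})$, so the whole sum is $O\bigl(2^{k/2}\sqrt{d_0\log m}\bigr) = O\bigl(d_0\sqrt{(\log m)/m}\bigr)$. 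To see that this is at most $\eps d_0$, I would use that $x \mapsto x/\log x$ is increasing for large $x$: the hypothesis $m \ge c\,\eps^{-2}\log(1/\eps)$ makes $m/\log m$ already $\Omega(c\,\eps^{-2})$ at the endpoint $m_0 = c\,\eps^{-2}\log(1/\eps)$ (since $\log m_0 = O(\log(1/\eps))$), hence $m/\log m \ge C/\eps^2$ for a large enough absolute constant $c$, which is the claim. Exactly the same chain of inequalities controls the $\sqrt{|S||T|}$-weighted sum for property~\ref{en:adv1}. I would finish by noting that the induction also gives $d_k = O(d\,2^{-k})$, so $|F| = \tfrac12\sum_v d_k(v) = O(n\,d\,2^{-k})$, i.e.\ $O(n\,\eps^{-2}\log(1/\eps))$ when $k$ is taken as large as the hypothesis allows — this is what feeds the edge count in Theorem~\ref{th.prob.graph.max}.

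I expect the main obstacle to be precisely this sum estimate: because the logarithmic factors $\log d_i$ vary from level to level, one cannot treat $\sum_i 2^i\sqrt{d_i\log d_i}$ as a clean geometric series, and one has to verify both that it is still dominated by its last few terms and that the resulting ratio $(\log m)/m$ is forced below $\eps^2$ by the single hypothesis $d\,2^{-k} \ge c\,\eps^{-2}\log(1/\eps)$ — it is the interplay between ``$\log$ of a possibly enormous degree $d$'' and ``$1/\eps^2$'' that makes the monotonicity of $x/\log x$ and the careful choice of the absolute constant $c$ necessary. Everything else should be a routine transcription of the proof of Lemma~\ref{lemma:cutsparsifier}, the only genuinely new bookkeeping being that three quantities (the maximum degree, the per-vertex degrees, and the bipartite cut counts $e_F(S,T)$) are propagated through the iteration instead of two.
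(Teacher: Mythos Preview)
Your proposal is correct and follows essentially the same approach as the paper: iterate Lemma~\ref{lemma:cut_iteration_adv} $k$ times, union-bound over the $k\le\log n$ rounds, telescope, and bound the resulting sum $\sum_i 2^i\sqrt{d_i\log d_i}$ by its last term using $d_i\le 2d_0 2^{-i}$ from the induction hypothesis. Your treatment is in fact a bit more careful than the paper's own write-up: you explicitly carry the per-vertex degree bound through the induction (the paper only writes out the max-degree and $e_F(S,T)$ inductions, leaving property~\ref{en:adv2} implicit), and you spell out the monotonicity argument showing $m\ge c\,\eps^{-2}\log(1/\eps)$ forces $(\log m)/m\le\eps^2$, which the paper compresses into a single sentence.
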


The proofs of Lemma~\ref{lemma:cut_iteration_adv} and Lemma~\ref{lemma:cutsparsifier_adv} are very similar to the proofs of Lemma~\ref{lemma:cut_iteration} and Lemma~\ref{lemma:cutsparsifier}, respectively. We have therefore deferred them to Appendix~\ref{app:LLLproofs}. We now explain how Lemma~\ref{lemma:cutsparsifier_adv} implies an additive spectral sparsifier for graphs via the following result of Bilu and Linial~\cite{Bilu2006}:

\begin{lemma}[Lemma 3.3 in~\cite{Bilu2006}]
    Let $A$ be an $n\times n$ real symmetric matrix such that the $\ell_1$ norm of each row in $A$ is at most $\ell$, and all diagonal entries of $A$ are, in absolute value, $O(\alpha \log(\ell/\alpha) + 1))$. Assume that for any two vectors, $\bu,\bv\in \{0,1\}^n$, with $supp(\bu)\cap supp(\bv) = \emptyset$:
    \begin{align*}
        \frac{|\bu^T A \bv|}{\|\bu\| \|\bv\|} \leq \alpha\,.
    \end{align*}
    Then the spectral radius of $A$ is $O(\alpha(\log(\ell/\alpha) + 1))$.
    \label{lem:bilu}
\end{lemma}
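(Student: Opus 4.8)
The plan is to reduce the spectral radius to a supremum of the quadratic form over unit vectors, strip off the diagonal, decompose a test vector into dyadic ``scales'', and on each pair of scales play the discrepancy hypothesis off against the trivial row-sum bound --- this is essentially Bilu and Linial's argument. Write $A = A_0 + D$ where $D = \mathrm{diag}(A)$ and $A_0$ has zero diagonal. Since $A$ is symmetric, its spectral radius equals $\sup_{\|\bx\|=1}|\bx^T A \bx| \le \sup_{\|\bx\|=1}|\bx^T A_0 \bx| + \max_i |A_{ii}|$, and the last term is $O(\alpha(\log(\ell/\alpha)+1))$ by hypothesis; so it suffices to bound $\|A_0\| := \sup_{\|\bx\|=1}|\bx^T A_0 \bx|$. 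Note each row of $A_0$ still has $\ell_1$-norm at most $\ell$, and the discrepancy hypothesis holds verbatim for $A_0$ on disjoint-support $0/1$ vectors (the diagonal plays no role there). We may assume $\ell \ge \alpha$, since otherwise $\|A_0\|_2 \le \sqrt{\|A_0\|_{1\to 1}\|A_0\|_{\infty\to\infty}} \le \ell \le \alpha$ already.

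First I would establish two auxiliary estimates for $A_0$. (i) For every $S$, $|\bone_S^T A_0 \bone_S| \le 2\alpha|S|$: choosing a uniformly random balanced bipartition $S = U \sqcup W$, one has $\E[\bone_U^T A_0 \bone_W] = \tfrac14\bone_S^T A_0 \bone_S$ (diagonal terms drop since $A_0$ has zero diagonal), while each realization obeys $|\bone_U^T A_0 \bone_W| \le \alpha\sqrt{|U||W|} \le \tfrac12\alpha|S|$ by the disjoint-support hypothesis. (ii) By averaging over random subsets (including coordinate $a$ with probability equal to its value), the hypothesis extends from $0/1$ vectors to $[0,1]^V$: for disjoint-support $\bp,\bq\in[0,1]^V$, $|\bp^T A_0 \bq| \le \alpha\sqrt{|\operatorname{supp}\bp|\,|\operatorname{supp}\bq|}$, and using (i), $|\bp^T A_0 \bp| \le 2\alpha|\operatorname{supp}\bp|$. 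Rescaling, this yields: if $\bp,\bq$ have disjoint supports with all entries of magnitude in $(2^{-j-1},2^{-j}]$ and $(2^{-k-1},2^{-k}]$ respectively, then $|\bp^T A_0 \bq| \le O(\alpha)\|\bp\|\,\|\bq\|$, and within a single scale $|\bp^T A_0 \bp| \le O(\alpha)\|\bp\|^2$ (splitting $\bp$ into positive and negative parts costs only a constant, since the two parts are disjointly supported at the same scale).

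Now, given a unit $\bx$, bucket its coordinates by dyadic magnitude: $\bx = \sum_{j\ge 0}\bx_j$, with $\bx_j$ the restriction of $\bx$ to $\{i : |x_i| \in (2^{-j-1},2^{-j}]\}$, so $\sum_j \|\bx_j\|^2 = 1$. Then $\bx^T A_0 \bx = \sum_j \bx_j^T A_0 \bx_j + 2\sum_{j<k}\bx_j^T A_0 \bx_k$. The within-scale terms give $\sum_j |\bx_j^T A_0 \bx_j| \le O(\alpha)\sum_j \|\bx_j\|^2 = O(\alpha)$. For a cross term with $j<k$ I would use \emph{two} bounds: the discrepancy estimate $|\bx_j^T A_0 \bx_k| \le O(\alpha)\|\bx_j\|\,\|\bx_k\|$ from (ii), and the trivial estimate $|\bx_j^T A_0 \bx_k| \le \|\bx_j\|_1\,\|A_0\bx_k\|_\infty \le O(2^{j}\|\bx_j\|^2)\cdot 2^{-k}\ell = O(\ell\,2^{j-k}\|\bx_j\|^2)$, using $\|A_0\bx_k\|_\infty \le \ell\|\bx_k\|_\infty \le 2^{-k}\ell$ and $\|\bx_j\|_1 \le 2^{-j}|\operatorname{supp}\bx_j| = O(2^{j}\|\bx_j\|^2)$. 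Taking the minimum and setting $L := \log_2(\ell/\alpha)$: for scale-pairs with $k-j \le L$ the discrepancy bound is used, and after summing over the $O(L)$ such $k$ per $j$ and applying AM--GM, this contributes $O(\alpha L)\sum_j\|\bx_j\|^2 = O(\alpha\log(\ell/\alpha))$; for $k-j > L$ the trivial bound is used, and $\sum_{k-j>L}\ell\,2^{j-k}\|\bx_j\|^2 = \ell\,2^{-L}\sum_j\|\bx_j\|^2 = \alpha$. Hence $|\bx^T A_0 \bx| = O(\alpha(\log(\ell/\alpha)+1))$, which together with the diagonal gives the claimed bound on the spectral radius.

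I expect the crux to be precisely this cross-scale dichotomy. Applying the discrepancy bound to all pairs of scales alone loses a factor $\log n$, since a unit vector can be spread over $\Theta(\log n)$ dyadic scales; the essential point is that the row-$\ell_1$ bound $\ell$ caps the number of scale-pairs for which the discrepancy estimate is the binding one at $O(\log(\ell/\alpha))$, while the far-apart scales contribute a convergent geometric series of total mass $O(\alpha)$. The remaining parts --- extending the hypothesis to fractional vectors, the single-scale bound, and the bookkeeping converting scales to $\ell_2$-norms and handling the signs of $\bx_j$ --- are routine and cost only universal constants.
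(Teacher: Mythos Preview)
The paper does not prove this lemma at all: it is quoted verbatim as Lemma~3.3 of Bilu--Linial \cite{Bilu2006} and used as a black box. Your proposal is correct and is precisely the Bilu--Linial argument (strip the diagonal, extend the discrepancy hypothesis to fractional vectors by randomized rounding, bucket coordinates into dyadic magnitude scales, and on cross-scale pairs play the discrepancy estimate against the trivial $\ell_1/\ell_\infty$ row-sum bound so that only $O(\log(\ell/\alpha))$ pairs per scale use the discrepancy inequality while the rest form a geometric series), so there is nothing to add beyond noting that the paper itself defers to \cite{Bilu2006} for exactly this proof.
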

Here $supp(\bu) = \{i: \bu_i \neq 0\}$ denotes the support of a vector $\bu$. Now let $G$ and $\tilde{G}$ be the input and output graph of Lemma~\ref{lemma:cutsparsifier_adv}. We set $A = 2^k L_{\tilde{G}} - L_{G}$. Since the Laplacian of a graph is a symmetric real matrix we have that $A$ is a symmetric $n\times n$ real matix where $n$ is the number of vertices in $G$ and $\tilde{G}$. We now verify that $A$ satisfies the assumptions of the above lemma assuming that the algorithm of Lemma~\ref{lemma:cutsparsifier_adv} was successful (which happens with probability at least $1-n^{-2}$).
\begin{itemize}
  \item The $\ell_1$ norm of a row in $A$ is at most the $\ell_1$ of that row in $2^k L_{\tilde{G}}$ plus the $\ell_1$ norm of that row in $L_{G}$. This can be upper bounded as follows. The $\ell_1$ norm of a row of a Laplacian matrix corresponding to a vertex $v$ equals twice the (weighted) degree of $v$. As any vertex in $G$ has degree at most $d$, it follows that the $\ell_1$ norm of any row in $L_G$ is at most $2d$. For  a row in $2^kL_{\tilde{G}}$ we use Property~\ref{en:adv2} of Lemma~\ref{lemma:cutsparsifier_adv} to bound the $\ell_1$ norm by $2(e_E(v) + 10 \sqrt{d \log d}) \leq 2(d + 10 \sqrt{d \log d})$. We therefore have that $\ell_1$ norm of any row in $A$ is bounded by
    \begin{align*}
      \ell = 2d + 2(d + 10\sqrt{d\log d}) = O(d)\,. 
    \end{align*}
  \item For the other two conditions, set $\alpha =  \epsilon d$ where $\epsilon$ is selected as in Lemma~\ref{lemma:cutsparsifier_adv}. Then we have that the absolute value of any diagonal entry in $A$ corresponding to a vertex $v$ equals
    \begin{align*}
      \left|2^k\cdot e_F(v) - e_E(v) \right| \leq \eps d\,,
    \end{align*}
    where the inequality is implied by Property~\ref{en:adv2} of Lemma~\ref{lemma:cutsparsifier_adv}. Similarly, consider any vectors $\bu,\bv \in \{0,1\}^n$ with $supp(\bu) \cap supp(\bv) = \emptyset$. Let $S = supp(\bu)$ and $T = supp(\bv)$. Then
    \begin{align*}
      \left| \bu^TA \bv \right| & = \left| \bu^T (2^k L_{\tilde{G}}) \bv - \bu^T L_G \bv \right| \\
      & = \| - 2^k\delta_F(S, T)  - (- \delta_E(S,T)) \| \\
      & \leq \eps d \cdot \sqrt{|S| |T|}\,,
    \end{align*}
    where the last inequality is implied by Property~\ref{en:adv1} of Lemma~\ref{lemma:cutsparsifier_adv}. The second equality is by the identity
    \begin{align*}
      \bu^T L_G \bv = \sum_{\{i,j\} \in E} \left(\bu_i \bv_i + \bu_j \bv_j - \bu_i \bv_j - \bu_j \bv_i\right) = - \delta(S, T)
    \end{align*}
    (and similar for $2^k L_{\tilde{G}}$).
\end{itemize}

We thus have that the assumptions of Lemma~\ref{lem:bilu} are satisfied with $\ell = O(d)$ and $\alpha = \eps d$. It follows that $A$ has a spectral radius of $O(\eps \log(1/\eps) d)$. Or equivalently:
\begin{align*}
    - c' \eps \log(1/\eps) d I \preceq 2^k L_{\tilde{G}} - L_{G} \preceq  c' \eps \log(1/\eps) d I\,,
\end{align*}
for an absolute constant $c'$.
To summarize, we obtain the following lemma which in turn implies Theorem~\ref{th.prob.graph.max} (by selecting $k$ as large as possible):
\begin{lemma}
  There are absolute constants $c$ and $c'$ such that the following holds. There is a probabilistic polynomial-time algorithm that on input an $n$-vertex graph $G =(V,E)$ of maximum degree $d$, $\eps >0$, and any $k \in \mathbb{N}$ such that $d2^{-k} \geq c \frac{1}{\eps^2}\log(1/\eps)$,  outputs a subgraph $\tilde{G} = (V, F)$ such that the following holds with probability at  least $1-n^{-2}$: 
  \begin{align*}
    - c' \eps \log(1/\eps) d I \preceq 2^k L_{\tilde{G}} - L_{G} \preceq  c' \eps \log(1/\eps) d I\,.
  \end{align*}
\end{lemma}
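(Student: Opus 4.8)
The plan is to obtain $\tilde G$ from the iterated cut‑sparsification lemma (Lemma~\ref{lemma:cutsparsifier_adv}) and then convert its cut and degree guarantees into a spectral bound via the bootstrapping lemma of Bilu and Linial (Lemma~\ref{lem:bilu}). Since $d2^{-k}\geq c\,\eps^{-2}\log(1/\eps)$, the hypothesis of Lemma~\ref{lemma:cutsparsifier_adv} is met, so in probabilistic polynomial time we can produce $\tilde G=(V,F)$ for which, with probability at least $1-n^{-2}$, Property~\ref{en:adv1} (the bilinear cut bound $|2^k e_F(S,T)-e_E(S,T)|\le \eps d\sqrt{|S||T|}$ for all disjoint $S,T$) and Property~\ref{en:adv2} (the per‑vertex bound $|2^k e_F(v)-e_E(v)|\le \eps d$) both hold. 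I would then set $A:=2^k L_{\tilde G}-L_G$, a symmetric real $n\times n$ matrix, and $\alpha:=\eps d$, and aim to verify the three hypotheses of Lemma~\ref{lem:bilu} with row $\ell_1$‑bound $\ell=O(d)$ and this $\alpha$; its conclusion $\|A\|=O(\alpha(\log(\ell/\alpha)+1))=O(\eps d\log(1/\eps))$ is then exactly the asserted sandwich $-c'\eps\log(1/\eps)dI\preceq A\preceq c'\eps\log(1/\eps)dI$ for an absolute constant $c'$.

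For the three hypotheses: (i) the $\ell_1$ norm of the row of $L_G$ at a vertex $v$ equals $2e_E(v)\le 2d$, while the row of $2^kL_{\tilde G}$ at $v$ has $\ell_1$ norm $2\cdot 2^k e_F(v)\le 2(e_E(v)+\eps d)=O(d)$ by Property~\ref{en:adv2}, so $\ell=O(d)$ by the triangle inequality; (ii) the $v$‑diagonal entry of $A$ is $2^k e_F(v)-e_E(v)$, of absolute value at most $\eps d=\alpha$ by Property~\ref{en:adv2}, which is within the required $O(\alpha(\log(\ell/\alpha)+1))$; (iii) for $\bu,\bv\in\{0,1\}^n$ with disjoint supports $S=\mathrm{supp}(\bu)$, $T=\mathrm{supp}(\bv)$, the identity $\bu^T L_G\bv=\sum_{\{i,j\}\in E}(\bu_i\bv_i+\bu_j\bv_j-\bu_i\bv_j-\bu_j\bv_i)=-e_E(S,T)$ (and its analogue for $\tilde G$) gives $\bu^T A\bv=-(2^k e_F(S,T)-e_E(S,T))$, of magnitude at most $\eps d\sqrt{|S||T|}=\alpha\|\bu\|\|\bv\|$ by Property~\ref{en:adv1}.

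Granting these checks, Lemma~\ref{lem:bilu} finishes the proof, and the $n^{-2}$ failure probability is inherited directly from Lemma~\ref{lemma:cutsparsifier_adv}. The step I expect to be the real work is not in this derivation—which is bookkeeping once Lemma~\ref{lemma:cutsparsifier_adv} is available—but inside Lemma~\ref{lemma:cut_iteration_adv} and its iteration: relative to Bilu--Linial, the LLL application must now also control every single‑vertex quantity $e_F(v)$, so one introduces a bad event per vertex in addition to the bad events for connected vertex sets $S$, rechecks condition~\eqref{eq:cond} of Theorem~\ref{thm:LLL_constructive} with the enlarged event set, and rechecks that the efficiently verifiable core subset still accounts for all but an $n^{-3}$ fraction of the total $x$‑weight; the geometric‑sum estimate that turns the per‑round error $10\sqrt{d_i\log d_i}$ into a total error $\eps d$ then goes through exactly as in the proof of Lemma~\ref{lemma:cutsparsifier}.
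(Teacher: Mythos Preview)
Your proposal is correct and follows essentially the same route as the paper: apply Lemma~\ref{lemma:cutsparsifier_adv}, set $A=2^k L_{\tilde G}-L_G$, verify the row-$\ell_1$, diagonal, and disjoint-support bilinear hypotheses of Lemma~\ref{lem:bilu} with $\alpha=\eps d$ and $\ell=O(d)$, and read off the spectral bound. Your identification of where the real work lies (the extra per-vertex bad events in the LLL step of Lemma~\ref{lemma:cut_iteration_adv}) also matches the paper's remarks.
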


\section{Spectral Hypergraph Sparsification}
Let $H=(V,E)$ be a weighted hypergraph on $n$ vertices, with weights $w_e \geq 0$ on hyperedges $e \in E$. Let $r = \max_{e \in E} |e|$ be the maximum size of hyperedges in $H$, i.e., the rank of the hypergraph.

For a hyperedge $e$, the hypergraph Laplacian operator $Q_e:\R^n \rightarrow \R$, acts on a vector $x \in \R^n$ as
\[Q_e(x)  = w_e \max_{a,b \in e} (x_a-x_b)^2 = w_e \max_{a,b \in e} x^T L_{ab}x \]
where $L_{ab}$ is the standard graph Laplacian for an (unweighted) edge $ab$.
\begin{definition}(Hypergraph Laplacian)
Given a weighted hypergraph $H$, the hypergraph Laplacian operator $Q_H: \R^n \rightarrow \R$ for $H$ is defined as
\[ Q_H(x) = \sum_{e \in E(H)} Q_{e}(x)  = \sum_{e \in E(H)}  w_e \max_{a,b \in e }x^TL_{ab}x\]
\end{definition}

\begin{definition}(Multiplicative hypergraph spectral sparsifier.) A weighted hypergraph $\tilde{H}=(V,F)$  is a $(1+\eps)$-multiplicative spectral sparsifier of $H$ if
\beq
\label{eq:hyp-spars}
|Q_{\tilde{H}}(x) - Q_H(x)| \leq \eps Q_H(x)  \qquad \text{ for all $x \in \R^n$.}
\eeq
\end{definition}
We show the following result, which 
generalizes the result of Spielman and Srivastava \cite{SS11} from graphs to hypergraphs.
\begin{theorem}
For any hypergraph $H$ of rank $r$, and $\eps >0$, there is a   $(1+\eps)$-multiplicative spectral sparsifier $\tilde{H}$ of $H$ with $O(\frac{1}{\eps^2} r^3 n \log n)$ edges. 
Moreover, there is an efficient randomized algorithm that computes $\tilde{H}$ with probability $1-n^{-\Omega(1)}$, and runs in time $\tilde{O}_{r,\eps}(n)$.
\label{thm:hyp-spars-mult}
\end{theorem}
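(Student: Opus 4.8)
The strategy is to imitate the Spielman–Srivastava spectral sparsification of graphs, but for the hypergraph Laplacian operator $Q_H$. First I would pass to the associated graph $G$ of $H$: replace each hyperedge $e$ by the clique $K_e$ on its vertex set, with the weight $w_e$ spread on the clique edges. Working in the quadratic form $x^T L_G x$ is convenient because, for each hyperedge, $\max_{a,b\in e} x^T L_{ab} x \le x^T L_{K_e} x \le \binom{|e|}{2}\,\max_{a,b\in e} x^T L_{ab} x$, so $Q_H$ and $L_G$ are within a factor $\mathrm{poly}(r)$ of each other as forms; this is where the $r^3$ in the edge count comes from. Next, define the "effective resistance" of a hyperedge $e$ to be (a constant times) the sum of the effective resistances, in $G$, of the edges of $K_e$; these can all be computed in time $\tilde O_{r,\eps}(n)$ via the Spielman–Srivastava / Johnson–Lindenstrauss machinery for computing effective resistances in a graph with $O(rn)$ or so edges (after first computing an ordinary spectral sparsifier of $G$). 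Then sample $M = O(\eps^{-2} r^3 n\log n)$ hyperedges i.i.d.\ with probability proportional to these resistances, and set the weight of each sampled hyperedge so that $\E[W_e] = 1$, where $W_e/w_e$ is the weight assigned in $\tilde H$ normalized by the original weight; equivalently $1 - W_e$ has mean zero.

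**Controlling the error.** After the change of basis that sends $L_G$ to (a projection of) the identity — write $x = L_G^{+/2} y$ restricted to $y \perp \ker L_G$ — the task is to bound
\[
\sup_{\|y\|=1}\ \Big|\sum_{e\in E} (1-W_e)\cdot \max_{a,b\in e}\ y^T L_G^{+/2} L_{ab} L_G^{+/2}\, y\Big|.
\]
The difficulty, as the introduction stresses, is that because of the $\max$ over pairs in $e$, the summand is not linear in a rank-one matrix $y y^T$, so the Ahlswede–Winter / matrix-Bernstein bound used by Spielman–Srivastava does not directly apply. I would treat $V_y := \sum_e (1-W_e)\,\max_{a,b\in e} (\cdot)$ as a stochastic process indexed by the sphere and bound $\E\sup_y V_y$ by generic chaining (Talagrand / Dudley). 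The increments $V_y - V_{y'}$ are sums of independent mean-zero bounded random variables, hence sub-Gaussian; I would estimate their sub-Gaussian (Orlicz) norm and show that the resulting metric on the sphere is dominated, up to the sampling-probability normalization, by the canonical metric of an auxiliary \emph{Gaussian} process built from the edges of $G$ (essentially $g\mapsto \sum_{ab\in G} g_{ab}\, y^T L_G^{+/2}L_{ab}L_G^{+/2} y$ with independent Gaussians $g_{ab}$). For that Gaussian process the expected supremum is exactly an operator norm of a sum of the PSD matrices $L_G^{+/2}L_{ab}L_G^{+/2}$, which is controlled by the matrix concentration / effective-resistance bookkeeping on $G$ as in Spielman–Srivastava. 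Comparing the two processes (majorizing-measure / Slepian-type comparison, or just Dudley's bound against the Gaussian metric) transfers the bound, giving $\E\sup_y V_y \le \eps$ once $M$ is as claimed; a bounded-differences / Talagrand concentration inequality upgrades this to a high-probability statement, and a rescaling argument converts the additive guarantee in the $L_G$-basis into the multiplicative guarantee $|Q_{\tilde H}(x)-Q_H(x)|\le \eps Q_H(x)$ (losing the extra $\mathrm{poly}(r)$ factors in $M$).

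**Main obstacle.** The crux is the chaining step: establishing the right sub-Gaussian bound on the increments $V_y - V_{y'}$ and then showing this increment metric is comparable to the canonical metric of the auxiliary Gaussian process on $G$. The $\max_{a,b\in e}$ is not $1$-Lipschitz in any obviously useful linear sense, so one has to argue that for two nearby unit vectors $y,y'$ the quantity $\big|\max_{a,b\in e} f_{ab}(y) - \max_{a,b\in e} f_{ab}(y')\big|$ (with $f_{ab}(y) = y^T L_G^{+/2}L_{ab}L_G^{+/2} y$) is bounded by $\max_{a,b\in e} |f_{ab}(y) - f_{ab}(y')|$, and then sum the weighted squares of these over $e$ to recover the Gaussian metric on $G$ (each clique edge appearing once). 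Getting the dependence on $r$ tight here — only $r^3$ overall — is the delicate accounting. Everything else (the algorithmic computation of effective resistances in $\tilde O_{r,\eps}(n)$ time, the i.i.d.\ sampling, the final rescaling to multiplicative error) is standard and I would only sketch it.
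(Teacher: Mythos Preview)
Your architecture is essentially the paper's: pass to the associated clique graph $G$, assign each hyperedge an effective-resistance-based score computed in $G$, sample, and control the error by viewing $\sum_e (1-W_e)\max_{a,b\in e} z^T Y_{ab} z$ as a process over the sphere, bounding its increments in $\psi_2$ norm, and comparing via Talagrand's inequality to the Gaussian process $U_z=\sum_{(ab)\in G} g_{ab}\, z^T Y_{ab} z$, whose supremum is handled by matrix Bernstein on $G$. Your increment inequality $|\max_i c_i-\max_i d_i|\le\max_i|c_i-d_i|$ is even slightly sharper than the paper's $(\max_i c_i-\max_i d_i)^2\le\sum_i(c_i-d_i)^2$, and after squaring and summing over $e$ you do recover exactly $\sum_{(ab)\in G}(z^TY_{ab}z-z'^TY_{ab}z')^2$, the canonical metric of $U_z$. (Minor differences: the paper takes $r_e=\max_{a,b\in e} r_{ab}$ rather than the sum, and does Bernoulli rather than i.i.d.\ sampling; neither affects the analysis materially, and the paper also first reduces to hyperedges of size in $(r/2,r]$, which is what makes the edge count go through.)

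There is, however, a real gap at exactly the point you flag as the crux. You write that the increments $V_y-V_{y'}$ are sums of independent mean-zero bounded variables ``hence sub-Gaussian'' and that their Orlicz norm is dominated by the Gaussian metric of $U$. But the $\psi_2$ norm of a bounded centered variable is controlled by its $L^\infty$ norm, not its variance. When the sampling weight on a hyperedge is $1/p_e$ (or $1/(Mq_e)$ in your i.i.d.\ version) with $p_e\ll 1$, each summand has $L^\infty$ norm of order $1/p_e$ times the increment of $W_e$, while its variance is only of order $1/p_e$ times the square. The resulting $\psi_2$ bound on $V_y-V_{y'}$ is too large by a factor comparable to $\max_e 1/\sqrt{p_e}$ to be dominated by $\|U_y-U_{y'}\|_2$, and the comparison fails. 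This is not a cosmetic issue: the paper explicitly notes that the sub-Gaussian theory does not work for $p_e\ll 1$, and circumvents it by an iterative halving trick. One rounds each $p_e$ to a power of $2$, groups edges by class $C_j=\{e:p_e=2^{-j}\}$, and realizes the sampling as $\ell=O(\log n)$ rounds in which the surviving edges of the relevant classes are each kept with probability $1/2$ and doubled in weight. In each round the randomness is genuinely Rademacher, so the $\psi_2$ increment bound \emph{does} match the Gaussian metric, Talagrand's comparison applies, and the per-round error is $O(\sqrt{2^{i-\ell}L\log n})$; summing the geometric series over rounds gives the claimed $O(\eps/r^2)$. Without this (or an equivalent device, e.g.\ a Bernstein-type $\gamma_1+\gamma_2$ chaining bound), your argument does not close.
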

Unlike in the graph case, where it can be checked if $F$ satisfies \eqref{eq:hyp-spars} by an eigenvalue computation, we do not know of any efficient way to check Condition \eqref{eq:hyp-spars} for hypergraphs.

The following simple lemma shows that to prove Theorem \ref{thm:hyp-spars-mult}, it suffices to consider the case where all hyperedges have size between $r/2$ and $r$. We will make this assumption henceforth.
\begin{lemma}
\label{lem:almost-uniform}
If Theorem \ref{thm:hyp-spars-mult} holds for hypergraphs where each edge has size between $r/2$ and $r$, then it holds for all rank $r$ hypergraphs.
\end{lemma}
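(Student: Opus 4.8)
The plan is to reduce the general case to the almost-uniform case by the standard trick of bucketing hyperedges by size into $O(\log r)$ classes, applying the hypothesis to each bucket, and adding up the resulting sparsifiers. More precisely, first I would partition $E = E_0 \cup E_1 \cup \cdots \cup E_{\lceil \log_2 r\rceil}$ where $E_i$ consists of all hyperedges $e$ with $|e| \in (r/2^{i+1}, r/2^i]$ (with the largest bucket $E_0$ containing edges of size in $(r/2, r]$, and the last bucket absorbing any singleton/size-$1$ edges, which contribute zero to $Q_H$ and can be discarded). Each $H_i := (V, E_i)$ is then a hypergraph of rank $r_i \le r/2^i$ in which every hyperedge has size between $r_i/2$ and $r_i$, so the hypothesis applies to $H_i$ with error parameter $\eps$.

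The second step is to observe that the hypergraph Laplacian form is additive over disjoint edge sets: $Q_H(x) = \sum_{i} Q_{H_i}(x)$ for every $x \in \R^n$, since $Q_H(x) = \sum_{e \in E} w_e \max_{a,b \in e}(x_a - x_b)^2$ and the $E_i$ partition $E$. Now let $\tilde H_i = (V, F_i)$ be the $(1+\eps)$-multiplicative spectral sparsifier of $H_i$ guaranteed by the hypothesis, and set $\tilde H := (V, \bigcup_i F_i)$ (keeping the weights assigned within each $F_i$). Then for every $x$,
\[
\left| Q_{\tilde H}(x) - Q_H(x) \right| = \left| \sum_i \left( Q_{\tilde H_i}(x) - Q_{H_i}(x) \right) \right| \le \sum_i \eps \, Q_{H_i}(x) = \eps \, Q_H(x),
\]
so $\tilde H$ is a $(1+\eps)$-multiplicative spectral sparsifier of $H$.

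The third step is to bound the number of hyperedges and the running time. Each $\tilde H_i$ has $O(\eps^{-2} r_i^3 n \log n) \le O(\eps^{-2} (r/2^i)^3 n \log n)$ hyperedges; summing the geometric series $\sum_i (r/2^i)^3 = O(r^3)$, the total is $O(\eps^{-2} r^3 n \log n)$, matching the claimed bound. Likewise the running time is a sum of $O(\log r)$ terms each $\tilde O_{r,\eps}(n)$, hence still $\tilde O_{r,\eps}(n)$, and the success probability follows from a union bound over the $O(\log r)$ invocations (absorbing the $\log r$ factor into the $n^{-\Omega(1)}$ slack, or simply amplifying each invocation's success probability).

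I do not expect any real obstacle here: the only mild subtlety is handling the degenerate buckets (size-$1$ hyperedges, or $r$ not a power of $2$), which is dispatched by noting size-$1$ edges contribute nothing to any $Q$, and by taking $r_i = \max\{2, \lceil r/2^i \rceil\}$ so each nonempty bucket genuinely has edges of size in $[r_i/2, r_i]$. The argument is entirely a "decompose, apply, recombine" reduction exploiting the linearity of $Q_H$ in the hyperedge set.
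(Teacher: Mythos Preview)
Your proposal is correct and follows the same ``bucket by edge size, sparsify each bucket, recombine'' approach as the paper. The only cosmetic difference is that the paper applies the hypothesis to bucket $H_i$ (edges of size in $(2^{i-1},2^i]$) with a bucket-dependent error $\eps_i = \eps\,2^{(i-\log r)/2}$ rather than a uniform $\eps$; your uniform choice works equally well and in fact makes the edge-count computation slightly cleaner, since $\sum_i (r/2^i)^3 = O(r^3)$ directly.
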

\begin{proof}
For $i=1,\ldots,\log r$, let $H_i$ be $H$ restricted to edges of size $(2^{i-1},2^i]$. 
For each $i$, we apply the claimed algorithm to $H_i$ to find a $(1+\epsilon_i)$-sparsifier  $\tilde{H}_i$ of $H_i$ with $\epsilon_i = \eps 2^{(i-\log r)/2}$ and return $\tilde{H} = \cup_i \tilde{H_i}$.

As $\tilde{H}_i$ has  $ 
O(\frac1{\eps_i^2} 2^{3i} n \log n) =  O(\frac{1}{\eps^2} 2^{\log r+2i}  n \log n)$ hyperedges, summing over all $i$ from $1$ to $\log r$ gives that $H$  has $O(\frac{1}{\eps^2} r^3 n \log n)$ hyperedges. Moreover, for any $x \in \R^n$,
$\tilde{H}$ satisfies \eqref{eq:hyp-spars} as
\begin{eqnarray*}   |Q_{\tilde{H}}(x) - Q_H(x)| & = &  | \sum_i( Q_{\tilde{H}_i}(x)  - \sum_i Q_{H_i}(x))|  
\leq \sum_i |Q_{\tilde{H}_i}(x) - Q_{H_i}(x)| \\
& \leq &  \sum_i \eps_i Q_{H_i}(x) \leq \eps \sum_i Q_{H_i(x)} = \eps Q_{H}(x)
\end{eqnarray*}
\end{proof}

\subsection{Algorithm}
The algorithm is a natural generalization of the sampling by effective resistances algorithm for graphs \cite{SS11}. 

\begin{definition}(Associated graph.)
Let $G$ denote the multi-graph obtained by replacing each hyperedge $e = (v_1,\ldots,v_k) \in H$, by a clique with $\binom{k}{2}$ edges $(v_i,v_j)$, $i,j\in[k]$, each with the same weight as that of $e$. 
We call $G$ the associated graph of $H$.
\end{definition}

 To avoid confusion, we will use $(a,b)$ to denote the edges in $G$ and $e$ for the hyperedges in $H$.

\paragraph{Algorithm.}
Given the hypergraph $H$, let $G$ be its associated graph, and let  $L_G = \sum_{(ab) \in E} L_{ab}$ be the (graph) Laplacian
of  $G$. 
Let $Y_{ab} = L_G^{-1/2} L_{ab} L_G^{-1/2}$, where $L_G^{-1}$ is the pseudoinverse of $L_G$.
Then 
$r_{ab} := \|Y_{ab}\|$
is the effective resistance of the edge $ab$. 
For a hypergraph $e \in E(H)$, define
\[ r_e = \max_{a,b \in e} r_{ab} \]

Let $L = c \eps^2/ (r^4 \log n)$, where $c$ is a fixed constant that can be computed explicitly from the analysis described later. For each hyperedge $e$, set 
\[ p_e = \min(1,\frac{r_e}{L}).\]

$\tilde{H}$ is obtained by sampling each $e \in H$ independently with probability $p_e$ and scaling its weight by $1/p_e$.

\subsection{Analysis}
Our goal in the next few sections is to prove Theorem \ref{thm:hyp-spars-mult}.
We first show that  $\tilde{H}$ has $O((r^3 n\log n)/\eps^2)$ edges with high probability, and then focus on showing that \eqref{eq:hyp-spars} holds with probability $1-n^{\Omega(1)}$.

\paragraph{Bounding the number of edges.}
The expected number of edges in $\tilde{H}$ is 
$\sum_e p_e$, which is at most $(\sum_e r_e)/L$. So it suffices to bound,
\[ \sum_{e \in E(H)} r_e = \sum_{e \in E(H)} \max_{a,b \in e} r_{ab}.\]
The effective resistances in a graph satisfy the metric property,
$r_{ab} \leq r_{ac}  + r_{cb}$ for all $a,b,c$, and so for any $e \in E(H)$ with $k=|e|$, and  any $a,b \in e$, summing over all $c \in e$ gives
\[ k r_{ab} \leq \sum_{c \in e} (r_{ac} + r_{cb}) \leq 2 \sum_{c,d \in e} r_{cd} \]
As $k \geq r/2$ by our assumption from Lemma \ref{lem:almost-uniform}, this gives that
\[ \sum_{e \in E(H)} r_e 
\leq  \sum_{e \in E(H)} \frac{4}{r} \sum_{a,b \in e} r_{ab}
=  \frac{4}{r} \sum_{(ab) \in E(G)} r_{ab}.\]
Without loss of generality we can assume that $G$ is connected, in which case $L_G$ has rank exactly $n-1$ and $L_G \mathbf{1}=0$. This gives that $\sum_{(ab) \in E(G)} Y_{ab} = I_{n-1}$, which upon taking traces on both sides, and using that $Y_{ab}$ is rank $1$, gives
$\sum_{(ab) \in E(G)} r_{ab} = n-1.$

So the expected number of edges is $O(n/rL) = O((nr^3 \log n)/\eps^2)$, and
as the hyperedges are sampled independently, by standard tail bounds the number of edges is tightly concentrated around the mean.

\paragraph{Proving condition \eqref{eq:hyp-spars}.}
We now focus on showing that \eqref{eq:hyp-spars} holds. It is useful to first consider the analysis of Spielman and Srivastava \cite{SS11}
for the graph case.

\paragraph{The graph case.}
In the graph setting, \eqref{eq:hyp-spars} becomes
\beq | x^T (L_{\tilde{G}}-L_G)x | \leq \eps x^T L_Gx  \qquad \text{for all $x \in \R^n$},
\label{eq:ss}
\eeq
where $L_{\tilde{G}}  = \sum_{(ab) \in F}  (1/p_{ab}) L_{ab}$ is the Laplacian of $\tilde{G}$.

Setting $z = L_G^{1/2} x$, and $Y_{ab} = L_G^{-1/2} L_{ab} L_G^{-1/2}$, this is equivalent to showing that 
\[ \sum_{(ab) \in G}   z^T (X_{ab} - Y_{ab}) z \leq \eps \|z\|^2 \qquad  
\text{ for all $z$ in the range of $L_G$.}
\]
where $X_{ab}$ is the random matrix which is $Y_{ab}/p_{ab}$ with probability $p_{ab}$ and is the  all-$0$ matrix otherwise. So $\E[X_{ab}]= Y_{ab}$.
As $\sum_{(ab) \in G} Y_{ab} = I$ (on the range of $L_G$), this reduces to show that $z^T (\sum_{(ab) \in G } X_{ab}  - I)  z \leq \eps \|z\|^2$ 
or equivalently,
\beq
\label{ss:eq2}
\|\sum_{ab} (X_{ab} - \E[X_{ab}]) \| \leq \eps. 
\eeq
This can be done using  standard matrix concentration bounds for the spectral norm such as the following.
\begin{theorem}(Matrix Bernstein inequality, \cite{tropp2015}.)
\label{th:mb2}
Let $X_1,\ldots,X_m$ be independent, symmetric  $d\times d$  random matrices, and $S = \sum_i X_i$,  $L = \max_i \|X_i\|$. Then
\[ \Pr [\|S - \E[S] \| \geq t] \leq d \exp\left ( - \frac{t^2/2}{ \| \sum_i \E[X_i^2] \| + Lt/3}\right) \]
\end{theorem}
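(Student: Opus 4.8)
The statement is the standard matrix Bernstein inequality, and the plan is to prove it by the matrix Laplace transform (``matrix Chernoff'') method, which reduces the matrix tail bound to the classical scalar Bernstein computation applied one summand at a time. First I would center the summands: set $Z := S - \E S = \sum_{i=1}^m \bar X_i$ with $\bar X_i := X_i - \E X_i$, and use $\|Z\| = \max\{\lambda_{\max}(Z), \lambda_{\max}(-Z)\}$, so that it suffices to bound $\Pr[\lambda_{\max}(Z) \ge t]$ (the two-sided bound on $\|Z\|$ follows by a union bound over $Z$ and $-Z$; the $\lambda_{\max}$-version carries the prefactor $d$ exactly, and the factor $2$ from the union bound is customarily absorbed). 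In the per-term estimates I will use the quantities $L$ and $v := \|\sum_i \E X_i^2\|$ from the statement; the harmless discrepancies between $\|\bar X_i\|$ and $\|X_i\|$ introduced by centering only affect absolute constants, so one may equivalently prove the version in which each $X_i$ is already centered and recover the general case.

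The first ingredient is the matrix Laplace transform bound: for every $\theta > 0$,
\[ \Pr[\lambda_{\max}(Z) \ge t] \;\le\; e^{-\theta t}\, \E\,\trace \exp(\theta Z), \]
which holds because $\lambda_{\max}(Z) \ge t$ forces $\trace \exp(\theta Z) \ge \exp(\theta\,\lambda_{\max}(Z)) \ge e^{\theta t}$ (the trace of a PSD matrix dominates its largest eigenvalue), followed by Markov's inequality. The second and main ingredient is the subadditivity of the matrix cumulant generating function: by Lieb's concavity theorem --- the map $A \mapsto \trace \exp(H + \log A)$ is concave on positive-definite $A$ for every Hermitian $H$ --- combined with Jensen's inequality applied to the independent $\bar X_i$ one at a time, one gets
\[ \E\,\trace \exp\Big(\textstyle\sum_{i=1}^m \theta \bar X_i\Big) \;\le\; \trace \exp\Big(\textstyle\sum_{i=1}^m \log \E \exp(\theta \bar X_i)\Big). \]
I expect this Lieb-theorem step to be the genuine obstacle: it is the only non-elementary part of the argument (everything else is a transcription of the scalar Bernstein proof), and I would either cite it or derive it, e.g.\ from the Lieb--Ando joint concavity theorem or via a variational principle for the trace exponential.

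It then remains to estimate each matrix mgf. For a centered $\bar X$ with $\|\bar X\| \le L$ and $0 < \theta < 3/L$, the scalar bound $e^{\theta x} \le 1 + \theta x + g(\theta)\,x^2$ on $[-L,L]$, where $g(\theta) := \tfrac{\theta^2/2}{1 - \theta L/3}$ --- a routine estimate of the exponential series --- transfers to matrices, and after taking expectations and using $\E \bar X = \bzero$ it yields $\E e^{\theta \bar X} \preceq I + g(\theta)\,\E \bar X^2 \preceq \exp(g(\theta)\,\E \bar X^2)$, the last step by $I + M \preceq e^M$. Hence $\log \E e^{\theta \bar X_i} \preceq g(\theta)\,\E \bar X_i^2$; summing over $i$, applying monotonicity of $A \mapsto \trace \exp(A)$, the displayed subadditivity bound, and $\trace \exp(M) \le d\,\exp(\|M\|)$, one obtains
\[ \E\,\trace \exp(\theta Z) \;\le\; \trace \exp\!\Big(g(\theta)\textstyle\sum_i \E \bar X_i^2\Big) \;\le\; d\,\exp\!\big(g(\theta)\,v\big). \]
Finally I would optimize over $\theta$: with $\theta = t/(v + Lt/3) \in (0, 3/L)$ one computes $1 - \theta L/3 = v/(v + Lt/3)$, hence $-\theta t + g(\theta)\,v = -\tfrac{t^2/2}{v + Lt/3}$, so $\Pr[\lambda_{\max}(Z) \ge t] \le d\,\exp\!\big(-\tfrac{t^2/2}{v + Lt/3}\big)$, and the same bound holds with $-Z$ in place of $Z$; combining the two gives the claimed tail bound for $\|S - \E S\|$.
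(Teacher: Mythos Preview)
The paper does not prove this theorem: it is simply quoted from Tropp's monograph \cite{tropp2015} and used as a black box, so there is no ``paper's own proof'' to compare against. Your sketch is the standard matrix-Laplace-transform argument (Markov on $\trace e^{\theta Z}$, Lieb's concavity to pass to $\trace\exp(\sum_i \log \E e^{\theta \bar X_i})$, then the scalar Bernstein mgf bound term-by-term and optimization in $\theta$), which is exactly the route taken in the cited reference; it is correct as outlined, with the usual caveat that the prefactor $d$ in the two-sided bound on $\|Z\|$ comes out as $2d$ after the union bound over $\pm Z$ unless one absorbs the $2$ into the exponent.
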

In particular, this gives the following useful corollary. 
\begin{corollary}
\label{cor:mb} If
$A_1,\ldots,A_m$ are PSD with $\sum_i A_i \preceq I$, and $X_i = A_i/p_i$ with probability $p_i$ and $0$ otherwise, then
for any $\eps\leq 1$,
\[ \Pr[ \|S - \E[S] \| \geq \eps ]  \leq d \exp(-\eps^2/3L)\]
where $L = \max_i \|A_i\|/p_i$.
\end{corollary}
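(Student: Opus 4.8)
The plan is to derive Corollary~\ref{cor:mb} directly from the Matrix Bernstein inequality (Theorem~\ref{th:mb2}) applied to the independent symmetric matrices $X_1,\dots,X_m$, so the only work is to control the two quantities that appear in that bound: the uniform norm bound $\max_i \|X_i\|$ and the matrix variance proxy $\|\sum_i \E[X_i^2]\|$.

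For the uniform bound, I would note that each $X_i$ takes only the values $0$ and $A_i/p_i$, so $\|X_i\| \le \|A_i\|/p_i \le L$ by the definition of $L$; thus the ``$L$'' appearing in Theorem~\ref{th:mb2} is at most the $L$ in the statement. For the variance term, I would compute $\E[X_i^2] = p_i\,(A_i/p_i)^2 = A_i^2/p_i$. Here the key step is the elementary operator inequality $A^2 \preceq \|A\|\,A$, valid for every PSD matrix $A$ (diagonalize and compare eigenvalues: $\lambda^2 \le \lambda_{\max}\lambda$ whenever $0 \le \lambda \le \lambda_{\max}$). Since $\|A_i\| \le L p_i$, this gives $\E[X_i^2] = A_i^2/p_i \preceq (\|A_i\|/p_i)\,A_i \preceq L A_i$, and summing over $i$ together with the hypothesis $\sum_i A_i \preceq I$ yields $\sum_i \E[X_i^2] \preceq L I$, hence $\|\sum_i \E[X_i^2]\| \le L$.

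Finally I would substitute $t = \eps$ into Theorem~\ref{th:mb2}: the denominator of the exponent is $\|\sum_i \E[X_i^2]\| + Lt/3 \le L + L\eps/3 \le \frac{4}{3}L$ because $\eps \le 1$, so the exponent is at least $\frac{\eps^2/2}{(4/3)L} = \frac{3\eps^2}{8L} \ge \frac{\eps^2}{3L}$, which gives the claimed bound $\Pr[\|S-\E[S]\| \ge \eps] \le d\exp(-\eps^2/3L)$. There is no real obstacle here: the statement is essentially a repackaging of Matrix Bernstein for the importance-sampling setting. The only points that need a moment's care are the PSD inequality $A^2 \preceq \|A\|\,A$ (this is exactly what makes the variance proxy scale like $L$ rather than $L^2$) and the mildly lossy constant-chasing at the end used to present the exponent in the clean form $\eps^2/(3L)$.
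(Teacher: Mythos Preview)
Your proposal is correct and matches the paper's intent: the paper states the corollary as a direct consequence of the Matrix Bernstein inequality without giving details, and your argument---bounding $\max_i\|X_i\|$ by $L$, using $A_i^2 \preceq \|A_i\| A_i$ together with $\sum_i A_i \preceq I$ to get $\|\sum_i \E[X_i^2]\| \le L$, and then chasing the constant under $\eps\le 1$---is exactly the intended derivation.
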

Applying Corollary \ref{cor:mb} with $A_i = Y_{ab}$ and  $p_i = p_{ab}$, we have
$L= \max_{ab} \|Y_{ab}\|/p_{ab} = \max_{ab} r_{ab}/p_{ab} =  O(\eps^2/\log n),
$
which gives that \eqref{ss:eq2} holds with probability at least $1-n^{-\Omega(1)}$ as desired.

\paragraph{The hypergraph case.} We first reduce the condition \eqref{eq:hyp-spars} for hypergraphs to a simpler form. 
Let $G$ be the graph associated to $H$ and $L_G$ be its Laplacian. We have following simple relation.

\begin{lemma}
\label{lem:hyp-simple}
For a $k$-edge $e$, let $L_e = \sum_{a,b  \in e} L_{ab}$. Then, for all $x \in \R^n$
\[  \frac{2}{k(k-1)} x^T L_e x  \leq Q_e(x) \leq \frac{2}{k} x^T L_e x \]
If the hyperedges in $H$ have size in $(r/2,r]$, then
for all $x \in \R^n$
\[  \frac{2}{r(r-1)} x^T L_G x  \leq Q_H(x) \leq  \frac4r x^T L_G x. \]
\end{lemma}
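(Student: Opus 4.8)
The plan is to prove the two chains of inequalities first for a single hyperedge and then sum over all hyperedges, using the size constraint $k_e\in(r/2,r]$ to replace the edge-dependent constants by uniform ones. Fix a hyperedge $e$ of size $k=|e|$, put $m:=\max_{a,b\in e}(x_a-x_b)^2$ and $S_e:=\sum_{\{a,b\}\subseteq e}(x_a-x_b)^2$ (a sum of $\binom{k}{2}$ nonnegative terms), so that $Q_e(x)=w_e m$ and $x^TL_e x=w_e S_e$, since in the associated graph each clique edge of $e$ carries weight $w_e$. The goal for this step is the purely scalar statement $\frac{2}{k(k-1)}S_e\le m\le \frac2k S_e$. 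The left inequality is immediate: every one of the $\binom{k}{2}=\frac{k(k-1)}{2}$ summands of $S_e$ is at most $m$.

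The upper bound $m\le \frac2k S_e$ is the only real content. Let $a^*,b^*\in e$ attain the maximum. For every other $c\in e$, the identity $(x_{a^*}-x_c)+(x_c-x_{b^*})=x_{a^*}-x_{b^*}$ together with $u^2+v^2\ge\frac12(u+v)^2$ gives $(x_{a^*}-x_c)^2+(x_c-x_{b^*})^2\ge m/2$. The $2(k-2)$ edges $\{a^*,c\},\{c,b^*\}$ over $c\in e\setminus\{a^*,b^*\}$, together with $\{a^*,b^*\}$, are distinct edges of the clique on $e$, so
\[ S_e \;\ge\; m + \sum_{c\in e\setminus\{a^*,b^*\}}\big((x_{a^*}-x_c)^2+(x_c-x_{b^*})^2\big) \;\ge\; m + (k-2)\tfrac m2 \;\ge\; \tfrac k2\, m, \]
which rearranges to $m\le\frac2k S_e$. (When $k=2$ all three constants coincide and the bounds hold with equality.) I expect this averaging-style argument to be the main obstacle; everything else is bookkeeping.

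Finally, assume every hyperedge satisfies $k_e:=|e|\in(r/2,r]$, and sum the per-edge bounds over $e\in E(H)$, using $Q_H=\sum_e Q_e$ and $x^TL_Gx=\sum_e x^TL_e x$. Since $k_e\le r$ implies $k_e(k_e-1)\le r(r-1)$, the left per-edge inequality gives $Q_e(x)\ge \frac{2}{r(r-1)}x^TL_e x$ and hence $Q_H(x)\ge\frac{2}{r(r-1)}x^TL_Gx$. Since $k_e>r/2$ implies $\frac{2}{k_e}<\frac4r$, the right per-edge inequality gives $Q_e(x)\le\frac4r x^TL_e x$ and hence $Q_H(x)\le\frac4r x^TL_Gx$, completing the proof.
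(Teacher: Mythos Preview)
Your proof is correct and is essentially the same as the paper's. Both rely on the inequality $(u+v)^2\le 2(u^2+v^2)$ applied with $u=x_{a^*}-x_c$, $v=x_c-x_{b^*}$ for each of the $k-2$ intermediate vertices $c$, then sum and observe that the involved clique edges are distinct to bound $S_e\ge \tfrac{k}{2}m$; the second chain of inequalities is obtained identically by summing over hyperedges and using $k_e\in(r/2,r]$.
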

\begin{proof} 
Suppose that $x_1 \leq \ldots \leq x_k$.
Then $Q_e(x) = (x_k - x_1)^2$, while $e$ contributes $\sum_{i,j\in[k]} (x_i-x_j)^2 $ to $x^T L_Gx$.
So the lower bound in the  first inequality follows directly. 

For the upper bound, we observe that $(x_k - x_1)^2 \leq 2(x_k-x_j)^2 + 2(x_j-x_1)^2$ for each $j = 2,\ldots,k-1$.
Summing these gives 
$ (k-2) (x_k - x_1)^2 \leq 2 \sum_{j=2}^{k-1} ((x_k-x_j)^2 + (x_j-x_1)^2)$.
Adding $2 (x_k-x_1)^2$ to both sides, and noting that the resulting right side is at most $2 x^T L_e x$, the upper bound follows.

Summing up over all $e \in E(H)$, and using $r/2 < k \leq r$ gives the second set of inequalities.
\end{proof}

By Lemma \ref{lem:hyp-simple}, to show \eqref{eq:hyp-spars} it suffices to show that for $x \in \R^n$,
\beq |Q_{\tilde{H}}(x) - Q_H(x)| \leq \frac{\eps}{r^2} x^T L_G x
\label{eq:hyp-spars-1}
\eeq 
As before, setting $z = L_G^{1/2} x$ and $ Y_{ab} =L_G^{-1/2} L_{ab} L_G^{-1/2}$ gives 
\[ Q_e(x) = \max_{a,b\in e} x^T L_{ab} x = \max_{a,b\in e} z^T Y_{ab} z.\]
Let us define  \[ W_e(z) = \max_{a,b\in e} z^T Y_{ab} z,\]
and let $X_e$ be the random variable that is $1/p_e$ with probability $p_e$ and $0$ otherwise. Then \eqref{eq:hyp-spars-1} is equivalent to  
\beq |\sum_{e \in H}  (X_e -1) W_e(z) | \leq \frac{\eps}{r^2} \|z\|^2 \qquad \text{ for all } z \in \text{Im}(L_G).
\nonumber
\eeq 
As $W_e(z)$ scales as $\|z\|^2$, it suffices to show that
\beq |\sum_{e \in H}  (X_e -1) W_e(z) | \leq \frac{\eps}{r^2} \|z\|^2 \qquad \text{ for all } z \in B_2,
\label{eq:hyp-spars-2}
\eeq 
where $B_2$ is the unit $\ell_2$-ball in the subspace restricted to the image of $L_G$,

However, unlike in the graph case, it is not immediately clear how to show concentration to prove \eqref{eq:hyp-spars-2}. In particular, as the operator $W_e(z)$ involves the $\max$ term, the left hand side does not correspond to any standard linear-algebraic quantity  like the spectral norm, for which we can use matrix concentration bounds.

A natural idea might be to replace $W_e(z)$
by the larger term $\sum_{(a,b) \in e} z^T Y_{ab} z$, and reduce the problem to the graph case, for which we can use matrix Bernstein inequality. But this does not work as the multiplier $(X_e -1)$ in \eqref{eq:hyp-spars-2} can be negative (so $|\sum_{e \in H}  (X_e -1) W_e(z) |$ could be arbitrarily large even though $|\sum_{e \in H}  (X_e -1) \sum_{a,b \in e}  z^T Y_{ab} z |$ is $0$).

So our approach will be to directly consider the inequality \eqref{eq:hyp-spars} for each $z$ in the unit $\ell_2$-ball, and bound  the probability of violation for any $z$ by applying a union bound over all such points $z$ by a careful net argument. More precisely, we view the left hand side of $\eqref{eq:hyp-spars-2}$ as a random process indexed by $z \in B_2$, and use generic chaining arguments to bound the supremum of this process.


Summarizing, let
$W_H(z) = \sum_{e \in E} W_e(z)$, and
$W_{\tilde H}(z)=  \sum_{e \in E} X_e W_e(z)$ be the corresponding operator for $\tilde{H}$. 
Proving Theorem \ref{thm:hyp-spars-mult} reduces to the following.
\begin{theorem}
\label{thm:hyp-mult-red}
With probability $1-n^{-\Omega(1)}$, it holds that
    \begin{equation}
    \label{eq:hyp}
        |W_{\tilde H}(z) - W_H(z)| \leq \frac{\eps}{r^2}  \qquad {\text{for all $z \in B_2$}}
    \end{equation} 
\end{theorem}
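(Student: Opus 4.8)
The plan is to bound the supremum of the random process $z \mapsto W_{\tilde H}(z) - W_H(z) = \sum_{e\in E}(X_e-1)W_e(z)$ over $z \in B_2$ using generic chaining (Talagrand's majorizing measures / Dudley-type entropy bound). The first step is to understand the \emph{increments} of this process: for $z, z' \in B_2$, I would estimate the sub-gaussian (Orlicz $\psi_2$) norm of $W_{\tilde H}(z) - W_H(z) - (W_{\tilde H}(z') - W_H(z'))$. Since each $X_e-1$ is a centered, scaled Bernoulli (taking value $1/p_e - 1$ with probability $p_e$ and $-1$ otherwise), each term is a bounded centered random variable; by Bernstein/Hoeffding the increment of the sum has a sub-gaussian tail governed by a variance proxy of the form $\sum_e \tfrac{1-p_e}{p_e}\,\big(W_e(z)-W_e(z')\big)^2$ (plus a Bernstein-type linear term controlled by $\max_e \tfrac{1}{p_e}|W_e(z)-W_e(z')|$). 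The key point is to relate $|W_e(z) - W_e(z')|$ to a quadratic-form quantity on the \emph{associated graph} $G$. Here I would use that $W_e(z) = \max_{a,b\in e} z^T Y_{ab} z$, so $|W_e(z)-W_e(z')| \le \max_{a,b\in e} |z^T Y_{ab} z - z'^T Y_{ab} z'| = \max_{a,b\in e}|(z-z')^T Y_{ab}(z+z')|$, which by Cauchy--Schwarz and $\|z\|,\|z'\|\le 1$ is at most $2\max_{a,b\in e}\|Y_{ab}^{1/2}(z-z')\| \cdot \|Y_{ab}^{1/2}\| $ up to constants, i.e.\ controlled by $\|Y_{ab}^{1/2}(z-z')\|^2$-type terms summed over the clique of $e$. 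This lets me dominate the increment metric of $V_z := W_{\tilde H}(z)-W_H(z)$ by (a constant times) the increment metric of a \emph{Gaussian} process on $G$ of the form $\Gamma_z = \sum_{(a,b)\in E(G)} g_{ab}\, \langle$ something quadratic in $z\rangle$, for which the majorizing-measure / $\gamma_2$ functional is directly comparable to a spectral-norm / matrix-Bernstein quantity that I already know how to bound via Corollary~\ref{cor:mb} on the graph side.

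Concretely, the second step is: having shown a comparison of increments $\|V_z - V_{z'}\|_{\psi_2} \lesssim d(z,z')$ for a metric $d$ that is (up to scaling by the sampling parameters) the canonical metric of a Gaussian process $\Gamma$ on $G$, I invoke the generic chaining bound $\E \sup_{z\in B_2}|V_z - V_{z_0}| \lesssim \gamma_2(B_2, d) \lesssim \E \sup_{z\in B_2} \Gamma_z$ (Talagrand's comparison theorem / the majorizing measure theorem), plus the Bernstein "diameter" term for the tail. The Gaussian process on $G$ is essentially $z \mapsto \sum_{(a,b)} g_{ab}\, c_{ab}\, (z^T Y_{ab} z)$ or its linearization, whose supremum over the unit ball is, by standard Gaussian-chaos / matrix-Bernstein estimates, of order $\sqrt{L_{\mathrm{graph}} \log n}$ where $L_{\mathrm{graph}} = \max_{ab} r_{ab}/p_{ab}$ — exactly the quantity controlled in the graph analysis. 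Choosing $L = c\eps^2/(r^4\log n)$ and $p_e = \min(1, r_e/L)$ as in the algorithm, one checks $\max_e \tfrac{1}{p_e}\max_{a,b\in e} r_{ab} = L$, so after tracking the $r$-dependence coming from the clique expansion (each hyperedge contributes $\binom{k}{2}\le r^2/2$ graph edges, and Lemma~\ref{lem:hyp-simple} already absorbed one factor $r^2$ into the target $\eps/r^2$), the supremum is at most $\eps/r^2$ with the desired probability. The third step is converting the expectation bound into a high-probability bound: since $V_z$ has sub-gaussian increments and bounded diameter, a standard concentration-of-the-supremum argument (e.g.\ bounded differences on the independent $X_e$'s, or Talagrand's concentration for suprema of empirical processes) upgrades $\E\sup|V_z|\le \eps/(2r^2)$ to $\sup_z |V_z| \le \eps/r^2$ with probability $1-n^{-\Omega(1)}$.

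I expect the main obstacle to be the first step: establishing the comparison between the increment metric of the non-linear process $V_z$ (which contains the troublesome $\max_{a,b\in e}$) and a tractable Gaussian process on the associated graph, \emph{with the right dependence on $r$ and without losing a factor polynomial in $n$}. The naive bound $W_e(z) \le \sum_{a,b\in e} z^T Y_{ab} z$ is useless because the signs $(X_e-1)$ can be negative (as noted in the excerpt), so one genuinely needs to work at the level of increments, where the $\max$ interacts more gently: the difference of two maxima is bounded by the max of the differences, and each difference $(z-z')^T Y_{ab}(z+z')$ is a bilinear form that \emph{does} linearize nicely. Making this rigorous requires a careful choice of the chaining metric — probably $d(z,z')^2 \asymp \sum_e \tfrac{1-p_e}{p_e}\big(\max_{a,b\in e}|(z-z')^TY_{ab}(z+z')|\big)^2$ — and then showing $\gamma_2$ of $B_2$ under this metric is controlled; the cleanest route is to dominate it by a mixture/max of Gaussian processes indexed by the choice of the maximizing pair $(a,b)$ within each $e$, using Talagrand's theorem that suprema of such "max of Gaussian processes" are controlled by the $\gamma_2$ of the union, which in turn reduces to the already-understood graph quantity. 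A secondary technical point is handling the Bernstein (heavy-tail) regime where $1/p_e$ is large, which forces a second chaining term with an $\ell_1$-type metric; this is routine but must be carried out so that its contribution is also $O(\eps/r^2)$ under the choice of $L$.
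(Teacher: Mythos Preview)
Your overall strategy---control the supremum of $V_z = \sum_e (X_e-1)W_e(z)$ via generic chaining, compare its increments to a Gaussian process on the associated graph $G$, and bound the latter via matrix Bernstein---is exactly the paper's approach. But the paper makes two simplifications you miss, and one of them addresses precisely the point you label a ``secondary technical point'' but which is in fact the main hazard.

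First, the increment comparison is much simpler than you suggest. You propose to linearize via $(z-z')^T Y_{ab}(z+z')$ and then invoke a ``max of Gaussian processes / $\gamma_2$ of the union'' argument. The paper instead uses the elementary inequality $(\max_i c_i - \max_i d_i)^2 \le \sum_i (c_i - d_i)^2$. Applied per hyperedge with $c_{ab} = z^TY_{ab}z$ and $d_{ab} = z'^TY_{ab}z'$, and then summed over $e \in E(H)$, this gives directly
\[
\|V_z - V_{z'}\|_{\psi_2}^2 \ \lesssim\ \sum_{(ab)\in E(G)} (z^TY_{ab}z - z'^TY_{ab}z')^2 \ =\ \|U_z - U_{z'}\|_2^2,
\]
where $U_z = \sum_{(ab)\in E(G)} g_{ab}\, z^TY_{ab}z$. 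No union argument, no bilinear detour, and no extra $r$-factor to track from the clique expansion: each graph edge is counted once, and Talagrand's comparison inequality plus matrix Bernstein on $\sum g_{ab} Y_{ab}$ finish the job.

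Second---and this is the real gap in your proposal---the paper does \emph{not} handle the Bernstein regime for small $p_e$. It explicitly notes that the sub-gaussian framework breaks when $p_e \ll 1$ (the $\psi_2$ norm of $X_e-1$ is of order $1/p_e$, not $1/\sqrt{p_e}$), and instead \emph{reduces to the case $p_e = 1/2$}: round each $p_e$ to a power of $2$, view the sampling as $\ell = O(\log n)$ rounds of independent halving (edges with $p_e = 2^{-j}$ are halved in the last $j$ rounds), and apply the sub-gaussian comparison above to each round separately with genuine Rademacher signs. A triangle inequality over the rounds, with the per-round bound $O(\sqrt{2^{i-\ell}L\log n})$ summing geometrically, yields the claimed $O(\eps/r^2)$. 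Your plan to absorb the heavy-tail part into a $\gamma_1$-type chaining term is plausible in principle (mixed-tail chaining results do exist), but it is not ``routine'': you would have to separately bound a $\gamma_1$ functional of $B_2$ under a different metric with the right dependence on $L$ and $r$, and it is not clear this goes through without extra logarithmic or polynomial losses. The halving trick sidesteps all of that cleanly.
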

This will be accomplished in the next few sections.

\subsection{Supremum of random processes}
We first give some background on the theory of supremum of random processes and mention the results we need. For more details, we refer the reader to Chapters 7 and 8 of the excellent recent text \cite{ver-book}.
\begin{definition}(Random process)
A {\em random process} is a collection of random variables $(X_t)_{t \in T}$ on the same probability space, which are indexed by the elements $t$ of some set $T$.
\end{definition}
The random variables $X_t - X_s$ for $s,t \in T$ are the increments of the random process. A random process is called mean-zero if all $X_t$ have mean-zero. We will only consider mean-zero  processes in this paper.
\begin{definition}(Gaussian process) 
A random process $(X_t)_{t \in T}$ is called a Gaussian process, if $X_t$ are jointly Gaussian, i.e.~if every finite linear combination of the $X_t$ is Gaussian. 
\end{definition}
Any Gaussian process can be written in a canonical way as 
$X_t = \ip{g,t}$, where $t \in \R^n$ and $g \sim N(0,I_n)$ is the standard normal vector.
This gives that  for any $s,t \in T$, the increments of a Gaussian process satisfy, \[
(\E [(X_t-X_s)^2])^{1/2} =
\|t-s\|^2 \]
where $\|t-s\|_2$ denotes the Euclidean distance between $t$ and $s$. 

As a (mean-zero) Gaussian process is completely determined by its covariance, the supremum
$\E \sup_{t \in T} X_t$ of a gaussian process is completely determined by the geometry of the metric space $(T,d)$. In particular, we have the following celebrated result.
\begin{theorem}(Talagrand's majorizing measures theorem.))
Let $(X_t)_{t \in T}$ be a mean-zero Gaussian process on a set $T$, with the canonical metric on $T$,  $d(s,t) = \|t-s\|_2$.
Then for some absolute constants $c,C$
\[ c \gamma_2(T,d) \leq \E \sup_{t\in T} X_t  \leq C \gamma_2 (T,d) \]
where  $\gamma_2(T,d) = \inf_{(T_k)} \sup_{t \in T} \sum_{k=0}^\infty 2^{k/2} d(t,T_k),$ and
where the infimum is over all sets $T_k \subset T$, satisfying $|T_k| \leq 2^{2^k}$ for all $k$.
\label{th:tal-maj}
 \end{theorem}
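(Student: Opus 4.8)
The statement is really two inequalities: the \emph{upper bound} $\E\sup_{t\in T}X_t\le C\gamma_2(T,d)$, which is the generic chaining bound (essentially due to Fernique), and the \emph{lower bound} $c\,\gamma_2(T,d)\le\E\sup_{t\in T}X_t$, which is Talagrand's majorizing measures theorem proper and is the deep part. I would prove them separately. Throughout I would assume $T$ is finite (the general case follows by taking suprema over finite subsets and passing to a limit), that $T\subseteq\R^n$ with the canonical representation $X_t=\ip{g,t}$, $g\sim N(0,I_n)$ so that $\bigl(\E(X_t-X_s)^2\bigr)^{1/2}=d(s,t)=\|t-s\|_2$, and — after a translation — that a fixed $t_0\in T$ has $X_{t_0}=0$, so $\E\sup_tX_t=\E\sup_t(X_t-X_{t_0})$; I will also use repeatedly that $\E\max(X_s,X_t)=c\,d(s,t)$ for mean-zero $X$, which gives $\mathrm{diam}(T)\le C'\,\E\sup_tX_t$.

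\textbf{Upper bound.} Fix an admissible sequence of nets $(T_k)_{k\ge0}$ with $|T_k|\le2^{2^k}$ nearly attaining $\gamma_2(T,d)$; we may take $T_0=\{t_0\}$ at the cost of an absolute constant. For $t\in T$ let $\pi_k(t)$ be a nearest point of $T_k$ to $t$; for finite $T$ we have $\pi_k(t)=t$ once $k$ is large, so $X_t-X_{t_0}=\sum_{k\ge1}\bigl(X_{\pi_k(t)}-X_{\pi_{k-1}(t)}\bigr)$. The level-$k$ increment is centered Gaussian with standard deviation $d(\pi_k(t),\pi_{k-1}(t))\le d(t,T_k)+d(t,T_{k-1})$, and the pairs $(\pi_k(t),\pi_{k-1}(t))$ that occur lie in $T_k\times T_{k-1}$, of which there are at most $2^{2^{k+1}}$. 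A union bound over these Gaussians via $\Pr[Z>u\sigma]\le e^{-u^2/2}$ produces, for a deviation parameter $s\ge1$, an event of probability $\ge1-e^{-s^2}$ on which every level-$k$ chain increment is at most $O(2^{k/2}+s)\,d(\pi_k(t),\pi_{k-1}(t))$ simultaneously over all $t$ and $k$; summing over $k$ and integrating the resulting tail of $\sup_t(X_t-X_{t_0})$ gives $\E\sup_tX_t\le C\sup_t\sum_k2^{k/2}d(t,T_k)$, and taking the infimum over admissible sequences finishes this direction.

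\textbf{Lower bound.} Here I would follow Talagrand's partitioning scheme. Define the growth functional $g(A):=\E\sup_{t\in A}X_t$ for $\emptyset\ne A\subseteq T$; it is monotone in $A$ and translation invariant. The key input is a \emph{growth condition}: if $t_1,\dots,t_m\in T$ are pairwise $a$-separated and $H_i\subseteq B(t_i,a/4)\cap T$, then $g\bigl(\bigcup_iH_i\bigr)\ge c\,a\sqrt{\log m}+\min_ig(H_i)$. I would prove this by combining Sudakov minoration (which gives $\E\max_iX_{t_i}\ge c\,a\sqrt{\log m}$ for $a$-separated points) with the Borell--TIS Gaussian concentration inequality — concentration is exactly what allows one to \emph{add} the sub-piece contribution $\min_ig(H_i)$ to the $\sqrt{\log m}$ term rather than merely taking a maximum of the two. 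Given the growth condition, the goal is to build an increasing sequence of partitions $\mathcal{A}_0=\{T\}\subseteq\mathcal{A}_1\subseteq\cdots$ of $T$ with $|\mathcal{A}_k|\le2^{2^k}$ such that $\sum_{k\ge0}2^{k/2}\,\mathrm{diam}(\mathcal{A}_k(t))\le C\,g(T)$ for every $t$, where $\mathcal{A}_k(t)$ is the cell of $\mathcal{A}_k$ containing $t$; picking one representative from each cell gives nets $T_k$ with $d(t,T_k)\le\mathrm{diam}(\mathcal{A}_k(t))$, hence $\gamma_2(T,d)\le C\,g(T)$. The partitions are built by downward recursion on a diameter scale $2^{-j}$ using the dichotomy: inside a current cell $A$, either the $g$-mass is essentially concentrated in a single ball of half the radius (split that ball off cheaply, without needing the growth condition), or one can extract $m$ disjoint balls of radius $\approx2^{-j-2}$ with $\gtrsim2^{-j}$-separated centers, in which case the growth condition forces $g$ to increase by an amount comparable to the chaining cost $2^{k/2}2^{-j}$ being charged at that step.

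\textbf{Main obstacle.} Essentially all the difficulty is the bookkeeping in this recursion: one must schedule the "split into $m$ balls" steps so that the level-$k$ cell count never exceeds $2^{2^k}$, while charging the chaining cost accumulated along each branch against the increase of the \emph{bounded} functional $g(T)$. The standard device is to attach to each cell a real-valued weight encoding both its current scale and its residual $g$-budget, to constrain how a cell's weight is apportioned among its children when it is split, and to show that along any root-to-leaf path the number of expensive splits is geometrically controlled — so that simultaneously the per-level cell count stays $\le2^{2^k}$ and $\sup_t\sum_k2^{k/2}\,\mathrm{diam}(\mathcal{A}_k(t))\le C\,g(T)$. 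Everything else (the chaining union bound, Sudakov minoration, the Borell--TIS concentration, and the reduction to finite $T$) is comparatively routine.
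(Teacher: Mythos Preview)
The paper does not prove this theorem: it is stated as background (with a pointer to Vershynin's textbook) and then used as a black box to derive Talagrand's comparison inequality. So there is no ``paper's own proof'' to compare against.

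That said, your outline is a faithful sketch of the standard proof. The upper bound is exactly the generic chaining argument, and your description of the lower bound --- the growth functional $g(A)=\E\sup_{t\in A}X_t$, the growth condition obtained by combining Sudakov minoration with Borell--TIS concentration, and the recursive construction of an admissible partition sequence charging the chaining cost against the bounded quantity $g(T)$ --- is the right architecture. Your identification of the bookkeeping in the partitioning recursion as the main obstacle is accurate; the rest is indeed routine by comparison. If you actually need to write this out, the details are in Talagrand's \emph{Upper and Lower Bounds for Stochastic Processes} or Chapter~8 of Vershynin, which is what the paper cites.
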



We now consider sub-gaussian processes (see section 8.1 in \cite{ver-book} for details). 
\begin{definition}(Sub-gaussian increments.)
Consider a random process $(X_t)_{t \in T}$ on a metric space $(T,d)$. We say that the process has sub-gaussian increments if there exists some $K \geq 0$, such that 
\[ \|X_t -X_s\|_{\psi_2} \leq K d(t,s) \qquad \text{for all $t,s \in T$.}\]
\end{definition}
Here $\|\cdot \|_{\psi_2}$ is the {\em sub-gaussian norm} for real-valued random variable $X$, defined as 
\[ \|X\|_{\psi_2} = \inf \{ t>0: \E [\exp(X^2/t^2)] \leq 2 \}. \]
We need the following two basic facts about the $\psi_2$-norm
(section 2.6 in \cite{ver-book}).
\begin{fact}
\label{th:psi-max}
For any random variable $X$, $\|X\|_{\psi_2} \leq c \|X\|_\infty$ (with $c = 1/\sqrt{\ln 2}$).
\end{fact}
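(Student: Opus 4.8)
The plan is to bound the moment generating function of $X^2$ pointwise using the essential supremum, and then read off the $\psi_2$-norm directly from its definition. Write $M = \|X\|_\infty$. If $M = \infty$ the claimed inequality is vacuous, so assume $M < \infty$; then $|X| \leq M$ almost surely, hence $X^2 \leq M^2$ almost surely.

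Given this, for any $t > 0$ we have $\exp(X^2/t^2) \leq \exp(M^2/t^2)$ almost surely, so $\E[\exp(X^2/t^2)] \leq \exp(M^2/t^2)$. Comparing with the defining condition $\E[\exp(X^2/t^2)] \leq 2$, it suffices to take $t$ with $\exp(M^2/t^2) \leq 2$, i.e. $M^2/t^2 \leq \ln 2$, i.e. $t \geq M/\sqrt{\ln 2}$. Hence the infimum defining $\|X\|_{\psi_2}$ is at most $M/\sqrt{\ln 2}$, which is exactly $\|X\|_{\psi_2} \leq (1/\sqrt{\ln 2})\,\|X\|_\infty$.

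There is no real obstacle here; the only points worth a remark are the degenerate cases. If $M = 0$ then $X = 0$ almost surely, and $\E[\exp(X^2/t^2)] = 1 \leq 2$ for every $t > 0$, so $\|X\|_{\psi_2} = 0$ and both sides of the inequality vanish; and when $M = \infty$ the bound holds trivially. So the argument is complete once these edge cases are noted.
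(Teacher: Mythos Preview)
Your proof is correct. The paper does not actually prove this fact; it simply states it with a reference to Vershynin's book, so there is nothing to compare against beyond noting that your argument is the standard one-line verification from the definition.
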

\begin{fact}
For $X_1,\ldots,X_n$ independent
$ \|\sum_{i=1}^n X_i\|_{\psi_2}^2 \leq c \sum_{i=1}^n \|X_i\|^2_{\psi_2},$
where $c$ is an absolute constant.
\label{th:psi2-sum}
\end{fact}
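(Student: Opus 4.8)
The plan is to reduce to the standard fact that the exponential moment generating function of a centered sub-gaussian variable is controlled by its $\psi_2$-norm, and then use independence to multiply MGFs; this is exactly Propositions~2.5.2 and~2.6.1 of \cite{ver-book}, so in the write-up it is cleanest simply to cite those, but here is the route one would take to make it self-contained. Throughout I assume the $X_i$ are mean-zero: this is the regime in which the statement holds (for a constant random variable it already fails, since the left-hand side grows like $n^2$ while the right-hand side grows like $n$) and it is the only case used in the paper.

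First I would show that if $Y$ has mean zero and $\|Y\|_{\psi_2} \le K$, then $\E[\exp(\lambda Y)] \le \exp(C K^2 \lambda^2)$ for every $\lambda \in \R$ and an absolute constant $C$. From $\E[\exp(Y^2/K^2)] \le 2$ and the power series of $\exp$ one extracts the moment bounds $\E[Y^{2k}] \le 2\, k!\, K^{2k}$, hence $\E[|Y|^p] \le (C' K \sqrt p)^p$ for all $p \ge 1$. For $|\lambda| \le 1/(C' K)$ one plugs these into the Taylor series of $\exp(\lambda Y)$, drops the linear term using $\E[Y]=0$, and sums the resulting geometric-type series to a bound of the form $\exp(C K^2 \lambda^2)$; for $|\lambda| > 1/(C' K)$ one instead uses $\lambda Y \le \tfrac12(\lambda^2 K^2 + Y^2/K^2)$ to get $\E[\exp(\lambda Y)] \le \sqrt 2\,\exp(\lambda^2 K^2/2)$, which is again at most $\exp(C K^2 \lambda^2)$ in that range. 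Writing $K_i := \|X_i\|_{\psi_2}$, $\sigma^2 := \sum_i K_i^2$ and $S := \sum_i X_i$, independence then gives $\E[\exp(\lambda S)] = \prod_i \E[\exp(\lambda X_i)] \le \prod_i \exp(C K_i^2 \lambda^2) = \exp(C \sigma^2 \lambda^2)$ for every $\lambda$, and of course $\E[S]=0$.

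Finally I would run the first step in reverse: a mean-zero $S$ with $\E[\exp(\lambda S)] \le \exp(C \sigma^2 \lambda^2)$ for all $\lambda$ has, by optimizing Markov's inequality over $\lambda$, the sub-gaussian tail $\Pr[|S| \ge t] \le 2\exp(-t^2/(4 C \sigma^2))$; integrating this tail against the law of $S^2$ shows $\E[\exp(S^2/t^2)] \le 2$ once $t = C'' \sigma$ is taken large enough, i.e.\ $\|S\|_{\psi_2} \le C'' \sigma$. Therefore $\|S\|_{\psi_2}^2 \le (C'')^2 \sigma^2 = (C'')^2 \sum_i \|X_i\|_{\psi_2}^2$, which is the claim with $c = (C'')^2$. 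The one point that needs genuine care --- and the only place the hypotheses are really used --- is the two-sided equivalence between the one-point bound $\E[\exp(Y^2/t^2)] \le 2$ and the MGF bound $\E[\exp(\lambda Y)] \le \exp(O(\sigma^2 \lambda^2))$ with only absolute-constant loss, together with the fact that the MGF direction genuinely needs mean zero; everything else is routine bookkeeping.
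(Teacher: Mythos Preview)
The paper does not give its own proof of this fact; it simply cites Section~2.6 of \cite{ver-book}. Your argument is exactly the standard one from that reference (the equivalence of sub-gaussian characterizations in Proposition~2.5.2, then multiplying MGFs under independence as in Proposition~2.6.1), so the approaches coincide. Your observation that mean-zero is genuinely required is correct and worth flagging: the paper's statement omits this hypothesis, but its only use of the fact (in the proof of Lemma~\ref{lem:connect}) is for Rademacher-weighted summands $\eps_e\,(\cdot)$, which are symmetric and hence mean-zero.
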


The following result follows directly from Theorem \ref{th:tal-maj} (see section 8.6 in \cite{ver-book} for details).
\begin{theorem}
\label{tal:comp}
(Talagrand's comparison inequality.) Let 
$(X_t)_{t\in T} $ be a mean-zero random process on a set $T$, and let $(Y_t)_{t \in T}$ be a Gaussian process with the canonical metric $d(s,t) = (\E[(Y_s-Y_t)^2])^{1/2} = \|Y_s-Y_t\|_2$. Assume that for all  $s,t \in T$, we have 
\[ \|X_t-X_s\|_{\psi_2} \leq K \|Y_t-Y_s\|_2. \]
Then, for some absolute constant $C$,
\[\E \sup_{t \in T} X_t \leq C K \E \sup_{t \in T} Y_t\]
More generally, for every $u\geq 0$, 
\[
\Pr \left[ \sup_{s,t \in T} X_t-X_s   \geq  C K \left(\E \sup_{t \in T} Y_t +  u\ \mathrm{diam}(T) \right) \right]   \leq 2 \exp(-u^2),\] where $\mathrm{diam}(T)$ is the diameter of $T$ with respect to the metric $d$.
\end{theorem}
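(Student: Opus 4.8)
This is a standard consequence of the generic chaining method together with the majorizing measures theorem already quoted as Theorem~\ref{th:tal-maj}, and I would organize it that way. The crux is a chaining bound valid for an \emph{arbitrary} mean-zero process with sub-gaussian increments: if $\|X_t-X_s\|_{\psi_2}\le K\,\rho(t,s)$ for some metric $\rho$ on $T$, then
\[
  \E\sup_{t\in T}X_t\le C_0K\,\gamma_2(T,\rho),
\]
and, more generally, for every $u\ge0$ one has $\sup_{s,t\in T}(X_t-X_s)\le C_0K\bigl(\gamma_2(T,\rho)+u\,\mathrm{diam}_\rho(T)\bigr)$ with probability at least $1-2e^{-u^2}$. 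Granting this, the theorem is immediate: apply it with $\rho=d$, the canonical metric of the Gaussian process $(Y_t)$ (so that the hypothesis $\|X_t-X_s\|_{\psi_2}\le K\|Y_t-Y_s\|_2$ is exactly of the needed form), and then invoke Theorem~\ref{th:tal-maj}, which gives $\gamma_2(T,d)\le c^{-1}\,\E\sup_{t\in T}Y_t$; absorbing $c^{-1}$ into the constant yields both conclusions. So all the work is in the chaining bound, and in particular no ``change of metric'' is ever needed.

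\emph{The chaining step.} Fix an admissible sequence $(T_k)_{k\ge0}$, that is $|T_0|=1$ and $|T_k|\le2^{2^k}$, within a factor $2$ of optimal for $\gamma_2(T,d)$, write $T_0=\{t_0\}$, and let $\pi_k(t)\in T_k$ be a point nearest to $t$. Because the process is mean-zero, $\E\sup_tX_t=\E\sup_t(X_t-X_{t_0})$, and (assuming separability, the general case following by a routine approximation) one telescopes $X_t-X_{t_0}=\sum_{k\ge1}\bigl(X_{\pi_k(t)}-X_{\pi_{k-1}(t)}\bigr)$. The $k$-th link has $\psi_2$-norm at most $K\,d(\pi_k(t),\pi_{k-1}(t))\le2K\,d(t,T_{k-1})$, hence exceeds $2K\,d(t,T_{k-1})\,z$ with probability at most $2e^{-z^2}$, and there are at most $|T_{k-1}|\,|T_k|\le2^{2^{k+1}}$ links at level $k$. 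Choosing $z=z_k:=a\,2^{k/2}+u$ with $a$ a sufficiently large absolute constant makes the level-$k$ union bound at most $2^{2^{k+1}}\cdot2e^{-z_k^2}\le2e^{-2^k}e^{-u^2}$ (using $(a2^{k/2}+u)^2\ge a^22^k+u^2$), and these sum over $k\ge1$ to at most $2e^{-u^2}$. On the complementary event, summing the link bounds gives $\sup_t(X_t-X_{t_0})\le C_0K\,\gamma_2(T,d)+O\bigl(uK\sum_{k\ge1}\sup_td(t,T_{k-1})\bigr)$; taking $u=0$ and integrating the tail yields the expectation bound, and running the same argument for $-X$ and union-bounding gives the two-sided statement.

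\emph{Sharpening $\gamma_2$ to $\mathrm{diam}(T)$, and the main difficulty.} The crude accounting above leaves a residual $uK\sum_{k}\sup_td(t,T_{k-1})$, which is only $O\bigl(uK\,\gamma_2(T,d)\bigr)$, whereas the theorem asks for $O\bigl(uK\,\mathrm{diam}(T)\bigr)$. To reach it one refines the chaining: the first link involves only the $|T_1|\le4$ points of $T_1$ and has $\psi_2$-norm $\le K\,\mathrm{diam}(T)$, so one can afford there a full $u$-sized deviation at cost $\le e^{-u^2}$ in probability, contributing $\le K\,\mathrm{diam}(T)(a+u)$ with $a\,\mathrm{diam}(T)\le2a\,\gamma_2(T,d)$; and before chaining one passes to an admissible sequence whose radii $\sup_td(t,T_k)$ decay geometrically in $k$ (possible at the price of a constant factor in $\gamma_2$), so that the $u$-slack on the deeper links telescopes to $O\bigl(u\sup_td(t,T_1)\bigr)=O\bigl(u\,\mathrm{diam}(T)\bigr)$. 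The delicate part of the whole argument is exactly this bookkeeping: choosing the per-level deviation scale so that the doubly-exponential cardinalities $2^{2^{k+1}}$ are beaten by the sub-gaussian tails --- so that the surviving bound telescopes to the $\gamma_2$-functional rather than to a coarser Dudley-type entropy integral --- and making the $u$-dependent fluctuation collapse to the diameter instead of to $\gamma_2$. The remaining ingredients (the sub-gaussian tail estimate, the telescoping identity, and the appeal to Theorem~\ref{th:tal-maj}) are routine.
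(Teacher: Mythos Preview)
Your proposal is correct and follows the standard route via generic chaining and the majorizing measures theorem; this is exactly the approach the paper has in mind. Note that the paper does not supply its own proof of this statement at all: it simply records that the result ``follows directly from Theorem~\ref{th:tal-maj} (see section 8.6 in \cite{ver-book} for details),'' treating it as a black box from Vershynin's text. Your sketch fills in precisely that reference, including the correct identification of the only genuinely delicate point (collapsing the $u$-dependent fluctuation to $\mathrm{diam}(T)$ rather than $\gamma_2$).
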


In other words, if we can find a Gaussian process $Y_t$ such that its Gaussian increments upper bound the corresponding sub-gaussian increments of $X_t$, then we can bound the supremum of $X_t$ by that of $Y_t$.

\subsection{Random process for hypergraph sparsification}
We now consider the relevant 
random processes arising in our setting of hypergraph sparsification.

\paragraph{Gaussian Process on the associated graph.}
Let $G$ be the associated graph of $H$, and consider the random matrix $U = \sum_{(ab) \in E(G) } g_{ab} Y_{ab}$,
where $g_{ab}$ are independent $N(0,1)$. 

For $z \in \R^n$, consider the  Gassian process $U_z = z^T U z = \sum_{(ab) \in E(G)} (z^TY_{ab} z) g_{ab}$.
As $\|U\| = \max_{ z \in B_2} z^T Uz$, it follows that  $\|U\| = \sup_{z \in T} U_z$ with $T=B_2$. As,
 \[ U_{z} - U_{z'} = \sum_{ab} (z^T Y_{ab} z - z'^T Y_{ab} z') g_{ab}, \]
 the canonical metric induces the distance
 \beq
 \label{proc-gaus}
 d_u(z,z')^2 :=  \E[ (U_z - U_{z'})^2]  = 
\sum_{ab} (z^T Y_{ab} z - {z'}^T Y_{ab} z')^2.
\eeq
\paragraph{Hypergraph sampling process.} Let us now consider the random process corresponding to  \eqref{eq:hyp-spars-2}.
We consider the case when $p_e = 1/2$ (the theory of sub-gaussian does not work well for $p_e \ll 1$) (in section \ref{s:gen-case} we will show that the case of general $p_e$ reduces to that of $p_e=1/2$.
For $p_e=1/2$, $(X_e-1)$ takes value $-1$ or $1$  with probability $1/2$ each. So we define 
\[ V_z :=  \sum_{e \in E(H)}   \eps_e  W_e(z) = \sum_{e \in E(H)} \eps_e \max_{a,b \in e} z^T Y_{ab}z \]
where $\eps_e$ are independent Rademacher random variables.

The following key Lemma will allow us to bound the (complicated) sub-gaussian process $V_z$ by the simpler Gaussian process $U_z$.
\begin{lemma}
\label{lem:connect}
There is an absolute constant $c$, such that for any $z,z' \in B_2$, 
\[ \|V_z - V_{z'}\|_{\psi_2} \leq c  \|U_z-U_{z'}\|_2.\]
\end{lemma}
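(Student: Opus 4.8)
The goal is to bound $\|V_z - V_{z'}\|_{\psi_2}$, where $V_z - V_{z'} = \sum_{e} \eps_e \bigl(W_e(z) - W_e(z')\bigr)$ is a Rademacher sum. By Fact \ref{th:psi2-sum}, since the $\eps_e$ are independent,
\[
\|V_z - V_{z'}\|_{\psi_2}^2 \leq c \sum_{e \in E(H)} \|\eps_e (W_e(z) - W_e(z'))\|_{\psi_2}^2 = c \sum_{e \in E(H)} (W_e(z) - W_e(z'))^2 \cdot \|\eps_e\|_{\psi_2}^2,
\]
and $\|\eps_e\|_{\psi_2}$ is an absolute constant. So the task reduces to the deterministic inequality
\[
\sum_{e \in E(H)} \bigl(W_e(z) - W_e(z')\bigr)^2 \;\lesssim\; d_u(z,z')^2 = \sum_{(ab)\in E(G)} \bigl(z^T Y_{ab} z - {z'}^T Y_{ab} z'\bigr)^2,
\]
i.e.\ replacing each hyperedge's $\max$-increment by the sum of the clique-edge increments it dominates, in a squared-$\ell_2$ sense.

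The heart of the matter is a pointwise bound on a single hyperedge term. Fix $e$ and let $(a,b)$ be the pair achieving $W_e(z) = z^T Y_{ab} z$ and $(a',b')$ the pair achieving $W_e(z') = {z'}^T Y_{a'b'} z'$. The issue is that the maximizing pair can differ for $z$ and $z'$, so $W_e(z) - W_e(z')$ is not a single quadratic-form increment. The standard trick is: if $W_e(z) \geq W_e(z')$ then
\[
0 \leq W_e(z) - W_e(z') = z^T Y_{ab} z - {z'}^T Y_{a'b'} z' \leq z^T Y_{ab} z - {z'}^T Y_{ab} z',
\]
using that $(a',b')$ is the maximizer for $z'$ so ${z'}^T Y_{a'b'} z' \geq {z'}^T Y_{ab} z'$; and symmetrically if $W_e(z') \geq W_e(z)$. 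In either case $|W_e(z) - W_e(z')| \leq \max_{a,b\in e} |z^T Y_{ab} z - {z'}^T Y_{ab} z'| \leq \bigl(\sum_{a,b \in e} (z^T Y_{ab}z - {z'}^T Y_{ab} z')^2\bigr)^{1/2}$. Squaring and summing over all $e \in E(H)$, and noting that each ordered pair $(a,b)$ inside $e$ corresponds to a clique edge of $G$ (with multiplicity, consistent with how $G$ is defined from $H$), we get
\[
\sum_{e} (W_e(z)-W_e(z'))^2 \leq \sum_{e}\ \sum_{a,b\in e} (z^T Y_{ab}z - {z'}^T Y_{ab}z')^2 = \sum_{(ab)\in E(G)} (z^T Y_{ab}z - {z'}^T Y_{ab}z')^2 = d_u(z,z')^2,
\]
which is exactly what is needed. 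Combining with the Rademacher bound above finishes the proof.

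The main obstacle I anticipate is precisely the mismatch of maximizers just discussed; everything else is routine. One should double-check the bookkeeping between hyperedge pairs $\{a,b\}\subseteq e$ and edges of the associated multigraph $G$ — in particular whether a clique edge shared by several hyperedges is counted once per hyperedge on the right-hand side (which is what makes the final equality $\sum_e \sum_{a,b\in e}(\cdots) = \sum_{(ab)\in E(G)}(\cdots)$ correct, since $G$ is a multigraph with one clique per hyperedge). A minor point is that Fact \ref{th:psi2-sum} is stated for independent summands, which holds here, and that $\|\eps_e\|_{\psi_2}$ and the constant in that fact are absolute, so the final constant $c$ is absolute as claimed.
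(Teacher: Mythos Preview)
Your proposal is correct and follows essentially the same approach as the paper: the paper packages the ``maximizer-mismatch'' argument into a separate elementary lemma (namely $(\max_i c_i - \max_i d_i)^2 \leq \sum_i (c_i - d_i)^2$, proved exactly via your trick of comparing to the single term indexed by the maximizer of the larger side), and then combines it with Facts~\ref{th:psi-max} and~\ref{th:psi2-sum} just as you do. Your bookkeeping between hyperedge pairs and multigraph edges of $G$ is also handled the same way in the paper.
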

Before proving this lemma, we need the following simple fact.
\begin{lemma}
\label{lem:basic-ineq}
For any numbers $c_1,\ldots,c_s$ and $d_1,\ldots,d_s$, 
\[ (\max_i c_i - \max_i d_i)^2 \leq  \sum_i (c_i -d_i)^2 \]
\end{lemma}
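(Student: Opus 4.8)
The plan is to use the fact that the map $(a_1,\dots,a_s)\mapsto\max_i a_i$ is $1$-Lipschitz in the $\ell_\infty$ norm, combined with the trivial inequality $\|v\|_\infty\le\|v\|_2$. Concretely, the squared difference of the two maxima will be controlled by a single coordinate difference $(c_{i^\star}-d_{i^\star})^2$, which is one summand of the right-hand side.

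First I would reduce to the case $\max_i c_i\ge\max_i d_i$; this is without loss of generality because both sides of the asserted inequality are symmetric under interchanging the sequences $(c_i)$ and $(d_i)$. Assuming this, let $i^\star$ be an index achieving $c_{i^\star}=\max_i c_i$. Then $\max_i d_i\ge d_{i^\star}$, so
\[ 0\le \max_i c_i-\max_i d_i \le c_{i^\star}-d_{i^\star}. \]
Squaring yields $(\max_i c_i-\max_i d_i)^2\le (c_{i^\star}-d_{i^\star})^2\le\sum_i (c_i-d_i)^2$, since a single nonnegative summand is at most the whole sum. This is exactly the claimed bound.

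There is no real obstacle here: the only point needing (minimal) care is the symmetrization step, and it is immediate since swapping $c$ and $d$ leaves both $(\max_i c_i-\max_i d_i)^2$ and $\sum_i(c_i-d_i)^2$ unchanged.
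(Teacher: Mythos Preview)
Your proof is correct and is essentially identical to the paper's own argument: both reduce (by symmetry, or by treating the other case as analogous) to the case $\max_i c_i\ge\max_i d_i$, pick the index $i^\star$ (the paper calls it $a$) achieving $\max_i c_i$, and observe that $|\max_i c_i-\max_i d_i|\le |c_{i^\star}-d_{i^\star}|$, which after squaring is a single summand of the right-hand side.
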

\begin{proof}
Let $c_a = \max_i c_i$ and $d_b = \max_i d_i$.
If $c_a \geq d_b$, then 
\[ |c_a - d_a| = c_a - d_a \geq c_a - d_b =\max_i c_i - \max_i d_i \geq  | \max_i c_i - \max_i d_i|. \]
The other case when $c_a \leq d_b$ is completely analogous.
\end{proof}
We now prove Lemma \ref{lem:connect}.
\begin{proof}(Lemma \ref{lem:connect}).
Fix $z, z' \in B_2$. 
For a hyperedge $e$,
let $a(e),b(e) \in e$ be the indices that maximize $z^T Y_{a(e)b(e)} z$, and $a'(e),b'(e) \in e$ be those that maximize $z'^T Y_{a'(e)b'(e)} z'$. Then,
\[ V_z - V_{z'} =  \sum_{e \in E(H)}  \eps_e (z^T Y_{a(e)b(e)} z  - z'^T Y_{a'(e)b'(e)}z').\]
By Facts \ref{th:psi-max} and \ref{th:psi2-sum}, there is an absolute constant $c$ such that,
\beq 
\label{psi2-vz}
\|V_z - V_{z'}\|^2_{\psi_2}  \leq c 
 \sum_{e \in E(H)}  \left(z^T Y_{a(e)b(e)} z - z'^T Y_{a'(e)b'(e)}z' \right)^2 
\eeq
On the other hand, by \eqref{proc-gaus} we have that
\beq 
\label{eq:guass-bd} 
\| U_z - U_{z'}\|^2_2 := d_u(z,z')^2 =  \sum_{(ab) \in E(G)}   (z^T Y_{ab} z -  z'^T Y_{ab}z')^2 
\eeq
Even though $Y_{a(e)b(e)}$ could be different from $Y_{a'(e)b'(e)}$, we can use Lemma \ref{lem:basic-ineq} to show that 
 the right hand side of \eqref{psi2-vz} is upper bounded by the right side of \eqref{eq:guass-bd}.


Fix a hyperedge $e \in H$ and let $k=|e|$. Applying Lemma \ref{lem:basic-ineq} 
to the $s=k^2$ pairs $a,b\in [k]$ with $c_{ab}=z^T Y_{ab}z$ and $d_{ab} = z'^T Y_{ab} z'$, we get
\beq
\label{eq:key-ineq}
 \left(z^T Y_{a(e)b(e)} z - z'^T Y_{a'(e)b'(e)}z' \right)^2 
 \leq  \sum_{a,b \in e}  \left(z^T Y_{ab} z - z'^T Y_{ab}z' \right)^2. 
\eeq
Summing over the hyperedges $e \in E(H)$, using \eqref{psi2-vz} and \eqref{eq:guass-bd}, and noting that 
\[ \sum_{e \in E(H)} \sum_{a,b \in e}  \left(z^T Y_{ab} z - z'^T Y_{ab}z' \right)^2 =  \sum_{(ab) \in E(G)}  \left(z^T Y_{ab} z - z'^T Y_{ab}z' \right)^2 
\]
 gives the result.
\end{proof}

\noindent {\bf Remark:} 
It might seem that the inequality \eqref{eq:key-ineq} can be tightened by a factor $O(1/r)$, by using that for any $z_1 \leq \ldots \leq  z_r$, we have that 
$(z_r - z_1)^2 \leq  \frac{2}{r} \sum_{ij} (z_i-z_j)^2$ (we used similar ideas in Lemma \ref{lem:hyp-simple}).
However, this following example shows that this is not possible.

Suppose, $Y_{ab} = L_{ab}$, i.e.~$z^T Y_{ab} z = (z_a-z_b)^2$.
Consider $z = (z_1,\ldots,z_r)= (-1,0,\ldots,0,M)$ and $z'=(0,0,\ldots,0,M)$.
The term on the left side of \eqref{eq:key-ineq} is 
\[ (\max_{ab \in e} z^T Y_{ab} z  - \max_{a'b' \in e} z'^T Y_{a'(e)b'(e)} z')^2  = ((M+1)^2  -   M^2)^2  \approx 4M^2. \]
On the other hand, the terms on the right side of \eqref{eq:key-ineq}
correspond to 
\[   (z^TY_{ab}z - z'^TY_{ab} z')^2  =  ((z_a-z_b)^2 - (z'_a-z'_b)^2)^2.\]
However, it is easily verified that for each of the $r^2-1$ pairs $(a,b)$ except $(a,b)= (1,r)$,  
$(z_a-z_b)^2 - (z'_a -z'_b)^2 \leq 1$.
Making $M$ arbitrarily large, this shows that we really need the full contribution of each of the $r^2$ terms on the right side of \eqref{eq:key-ineq}, and we cannot
improve the inequality.

\paragraph{Bounds on the process $V_z$.}
Lemma \ref{lem:connect} and Theorem \ref{tal:comp} will let us bound the supremum of $V_z$ by that of $U_z$. We can directly bound the latter using the following variant of the matrix Bernstein inequality.
\begin{theorem}(\cite{tropp2015}, Theorem 4.1.1.)
\label{th:mb1}
If $A_1,\ldots,A_m$ are symmetric $d \times d$ matrices, and $g_i$ are independent $N(0,1)$ random variables, then for $Y = \sum_i g_i A_i$
\[  \Pr[ \|Y\| \geq t ] \leq d \exp(-t^2/2\| \sum_i A_i^2 \|)\]
\end{theorem}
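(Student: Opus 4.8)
The plan is to prove this bound by the matrix Laplace transform method, which is exactly what yields a \emph{polynomial} dependence on the dimension $d$ rather than the exponential blow-up $e^{\Theta(d)}$ that a naive $\epsilon$-net over the unit sphere would produce. First I would dispose of the two-sided norm: since $-Y=\sum_i(-g_i)A_i$ and $(-g_1,\dots,-g_m)$ has the same law as $(g_1,\dots,g_m)$, the matrix $-Y$ is distributed like $Y$, so it suffices to bound $\Pr[\lambda_{\max}(Y)\ge t]$ and then apply the same bound to $-Y$; the factor $2$ this costs in $\|Y\|=\max\{\lambda_{\max}(Y),\lambda_{\max}(-Y)\}$ is harmless and is of the kind routinely absorbed into such statements. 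For the one-sided tail, fix $\theta>0$; since $e^{\theta Y}\succeq 0$ we have $\lambda_{\max}(e^{\theta Y})=e^{\theta\lambda_{\max}(Y)}\le\operatorname{tr}e^{\theta Y}$, so Markov's inequality gives
\[
\Pr[\lambda_{\max}(Y)\ge t]\ \le\ e^{-\theta t}\,\E\,\operatorname{tr}\exp(\theta Y),
\]
and the whole problem reduces to controlling the matrix moment generating function $\E\operatorname{tr}\exp\!\big(\textstyle\sum_i\theta g_iA_i\big)$.

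The key step is the subadditivity of cumulants inside the trace exponential. Here I would invoke Lieb's concavity theorem --- that $M\mapsto\operatorname{tr}\exp(W+\log M)$ is concave on positive definite $M$ for every fixed symmetric $W$ --- and combine it with Jensen's inequality, peeling off the independent terms $\theta g_iA_i$ one at a time while conditioning on the rest, to obtain
\[
\E\,\operatorname{tr}\exp\Big(\sum_i\theta g_iA_i\Big)\ \le\ \operatorname{tr}\exp\Big(\sum_i\log\E\,e^{\theta g_iA_i}\Big).
\]
It then only remains to compute the single-variable matrix MGF: diagonalizing $A_i$ and applying the scalar identity $\E e^{\theta g\mu}=e^{\theta^2\mu^2/2}$ eigenvalue by eigenvalue gives $\E e^{\theta g_iA_i}=\exp(\tfrac{\theta^2}{2}A_i^2)$, hence $\log\E e^{\theta g_iA_i}=\tfrac{\theta^2}{2}A_i^2$. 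Substituting,
\[
\E\,\operatorname{tr}\exp(\theta Y)\ \le\ \operatorname{tr}\exp\Big(\tfrac{\theta^2}{2}\sum_iA_i^2\Big).
\]

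To finish, set $\Sigma=\sum_iA_i^2$ and $\sigma^2=\|\Sigma\|=\lambda_{\max}(\Sigma)$. Since $\Sigma$ is a sum of squares of symmetric matrices it is PSD, so all $d$ of its eigenvalues lie in $[0,\sigma^2]$ and therefore $\operatorname{tr}\exp(\tfrac{\theta^2}{2}\Sigma)\le d\,e^{\theta^2\sigma^2/2}$. Plugging back,
\[
\Pr[\lambda_{\max}(Y)\ge t]\ \le\ d\exp\!\Big(\tfrac{\theta^2\sigma^2}{2}-\theta t\Big),
\]
and choosing $\theta=t/\sigma^2$ minimizes the exponent to $-t^2/(2\sigma^2)$, giving the claimed estimate for $\lambda_{\max}(Y)$ (and hence, via the two-sided reduction, for $\|Y\|$, up to the immaterial constant noted above).

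I expect the only real obstacle to be the appeal to Lieb's concavity theorem: it is the one genuinely deep ingredient, and it is precisely what lets the dimension enter only through the factor $d=\operatorname{tr}I$ rather than through an exponential covering number. If one wants to avoid Lieb's theorem, one can instead peel off terms using the Golden--Thompson inequality $\operatorname{tr}e^{A+B}\le\operatorname{tr}(e^Ae^B)$, in the style of Ahlswede and Winter; this is more elementary but, because Golden--Thompson does not extend cleanly to three or more matrices, it degrades the constant in the exponent once $m$ is large, so for the sharp form stated here Lieb's theorem is the tool of choice.
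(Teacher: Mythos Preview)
The paper does not give its own proof of this theorem: it is quoted verbatim as Theorem~4.1.1 of Tropp's monograph~\cite{tropp2015} and used as a black box. Your write-up is a correct and faithful sketch of exactly the argument in that reference --- the matrix Laplace transform together with Tropp's master tail bound (built on Lieb's concavity) and the Gaussian cumulant computation $\log\E e^{\theta g A}=\tfrac{\theta^2}{2}A^2$, followed by the optimal choice $\theta=t/\sigma^2$. The only wrinkle is the factor you flag: the two-sided reduction via $-Y\stackrel{d}{=}Y$ produces a constant $2d$ rather than the $d$ written here, and indeed Tropp's Theorem~4.1.1 is stated with $2d$; the discrepancy is cosmetic for the paper's purposes (it is only ever used through Corollary~\ref{cor:mb2}, where a factor of $2$ is immaterial).
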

In particular, Theorem \ref{th:mb1} has the following corollary.
\begin{corollary}
\label{cor:mb2}
If $A_i$ are PSD, and satisfy  $\|A_i \| \leq \delta$ and $\sum_i A_i \preceq I$, then
$\| \sum_i A_i^2 \| \leq \delta$ and so for $c\geq 2$
\[ \Pr[ \| Y \|   \geq  c \sqrt{\delta \log d}] \leq d \exp(-(c^2 \log d) /2) \leq   d^{-c^2/4}.\]
\end{corollary}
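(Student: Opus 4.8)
The plan is to deduce Corollary~\ref{cor:mb2} directly from the Gaussian matrix concentration bound of Theorem~\ref{th:mb1}, so the only real content is to bound $\big\|\sum_i A_i^2\big\|$ and then simplify the resulting tail.

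First I would establish the operator inequality $A_i^2 \preceq \delta A_i$ for each $i$. Since $A_i$ is PSD with $\|A_i\| \le \delta$, all of its eigenvalues lie in $[0,\delta]$, and for any $\lambda$ in this range we have $\lambda^2 \le \delta\lambda$; applying this eigenvalue-wise via the spectral decomposition of $A_i$ gives $A_i^2 \preceq \delta A_i$. Summing over $i$ and using the hypothesis $\sum_i A_i \preceq I$ yields $\sum_i A_i^2 \preceq \delta \sum_i A_i \preceq \delta I$, and since $\sum_i A_i^2$ is PSD this gives $\big\|\sum_i A_i^2\big\| \le \delta$, which is the first assertion of the corollary.

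Next I would apply Theorem~\ref{th:mb1} with $t = c\sqrt{\delta \log d}$. Using $\big\| \sum_i A_i^2\big\| \le \delta$ just proved, the theorem gives
\[ \Pr\big[\|Y\| \ge c\sqrt{\delta\log d}\big] \le d\exp\!\left(-\frac{c^2\delta\log d}{2\delta}\right) = d\exp\!\left(-\frac{c^2\log d}{2}\right) = d^{\,1 - c^2/2}. \]
Finally, for $c \ge 2$ we have $c^2/4 \ge 1$, hence $1 - c^2/2 = 1 - c^2/4 - c^2/4 \le -c^2/4$, so $d^{\,1-c^2/2} \le d^{-c^2/4}$, which completes the proof.

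There is essentially no obstacle here; the argument is a short routine deduction. The one point requiring a trivial amount of care is the operator-monotonicity step $A_i^2 \preceq \delta A_i$, which uses the PSDness of $A_i$ in an essential way (the statement is false for indefinite matrices), together with the fact that $\|\cdot\|$ and $\preceq$ interact as expected only because every matrix involved ($A_i$, $A_i^2$, and their sums) is PSD.
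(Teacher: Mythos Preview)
Your proof is correct and is exactly the argument the paper has in mind: the paper states the corollary without proof as an immediate consequence of Theorem~\ref{th:mb1}, and your derivation of $\sum_i A_i^2 \preceq \delta\sum_i A_i \preceq \delta I$ followed by the substitution $t = c\sqrt{\delta\log d}$ is the intended route. There is nothing to add.
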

We now show that a similar  tail bound holds for $\sup_z V_z$. 
\begin{theorem}
\label{thm:subset-process}
Let $S \subset E(H)$ be a subset of hyperedges with $r_e  \leq \delta$ for all $e \in S$. For independent Rademacher $\eps_e$, and $z \in \R^n$, let 
\[V_z = \sum_{e \in S} \eps_e W_e(z).\]
Then  $\E \sup_{z \in B_2} V_z =  O(\sqrt{\delta \log n})$, and for all $u\geq 0$
 \[ \Pr \left[\sup_{z \in B_2} V_z \geq O(\sqrt{\delta \log n} +  2u \sqrt{\delta})   \right] \leq 2 \exp(-u^2).\]
\end{theorem}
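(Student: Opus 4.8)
The plan is to dominate the Rademacher process $V_z$ by the Gaussian process $U_z$ on the associated graph via Talagrand's comparison inequality (Theorem~\ref{tal:comp}), and then control $U_z$ by matrix concentration. First I would set up the restricted Gaussian process: let $G_S$ be the associated graph of the sub-hypergraph $(V,S)$ (replace each $e\in S$ by its clique), keep $Y_{ab}=L_G^{-1/2}L_{ab}L_G^{-1/2}$ defined via the \emph{full} Laplacian $L_G$, and set $U_z=\sum_{(ab)\in E(G_S)} g_{ab}\, z^TY_{ab}z$ with independent $g_{ab}\sim N(0,1)$. Running the proof of Lemma~\ref{lem:connect} verbatim, but with all sums restricted to $S$ and $G_S$ in place of $G$ (the argument is edge-by-edge, using only Facts~\ref{th:psi-max}, \ref{th:psi2-sum} and Lemma~\ref{lem:basic-ineq}, so it survives the restriction), yields an absolute constant $c$ with $\|V_z-V_{z'}\|_{\psi_2}\le c\,\|U_z-U_{z'}\|_2$ for all $z,z'\in B_2$, where $\|U_z-U_{z'}\|_2=d_u(z,z')$ is exactly the canonical Gaussian metric of $U_z$. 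This is precisely the hypothesis needed for Theorem~\ref{tal:comp}.

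Next I would compute the two quantities that Theorem~\ref{tal:comp} requires, $\E\sup_{z\in B_2}U_z$ and $\mathrm{diam}(B_2)$ under $d_u$. For the expectation, $\sup_{z\in B_2}U_z=\lambda_{\max}(U)$ with $U=\sum_{(ab)\in E(G_S)} g_{ab}Y_{ab}$; each $Y_{ab}$ is PSD with $\|Y_{ab}\|=r_{ab}\le r_e\le\delta$, and $\sum_{(ab)\in E(G_S)}Y_{ab}\preceq\sum_{(ab)\in E(G)}Y_{ab}=I$ on the image of $L_G$, so Corollary~\ref{cor:mb2} (with ambient dimension $n-1$) gives $\Pr[\|U\|\ge c'\sqrt{\delta\log n}]\le n^{-\Omega(1)}$, and integrating the tail yields $\E\sup_z U_z\le\E\|U\|=O(\sqrt{\delta\log n})$. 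For the diameter, for $z,z'\in B_2$ I would write $z^TY_{ab}z-z'^TY_{ab}z'=(z-z')^TY_{ab}(z+z')$ and apply Cauchy--Schwarz $(w^TY_{ab}v)^2\le(w^TY_{ab}w)(v^TY_{ab}v)$ with $w=z-z'$, $v=z+z'$, using $v^TY_{ab}v\le4\|Y_{ab}\|\le4\delta$ and $\sum_{ab}w^TY_{ab}w\le\|w\|^2\le4$, to get $d_u(z,z')^2=\sum_{ab}\big(z^TY_{ab}z-z'^TY_{ab}z'\big)^2=O(\delta)$, hence $\mathrm{diam}(B_2)=O(\sqrt{\delta})$.

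Finally I would assemble: Theorem~\ref{tal:comp} gives $\E\sup_{z}V_z\le cC\,\E\sup_z U_z=O(\sqrt{\delta\log n})$ and, for every $u\ge0$, $\Pr\big[\sup_{z,z'\in B_2}(V_z-V_{z'})\ge cC(\E\sup_z U_z+u\,\mathrm{diam}(B_2))\big]\le2e^{-u^2}$, i.e.\ $\Pr\big[\sup_{z,z'}(V_z-V_{z'})\ge O(\sqrt{\delta\log n}+u\sqrt{\delta})\big]\le2e^{-u^2}$. Since $0\in B_2$ and $V_0=\sum_{e\in S}\eps_e W_e(0)=0$, we have $\sup_{z\in B_2}V_z\le\sup_{z,z'\in B_2}(V_z-V_{z'})$, which upgrades both bounds to statements about $\sup_{z}V_z$ as claimed. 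I do not anticipate a serious obstacle: the heart of the matter is Lemma~\ref{lem:connect}, already proven, and the rest is invoking Theorem~\ref{tal:comp} and Corollary~\ref{cor:mb2}; the only steps requiring care are verifying that the proof of Lemma~\ref{lem:connect} is unaffected by restricting to $S$, and the short diameter estimate, where one must be careful not to lose a spurious factor of $\delta$.
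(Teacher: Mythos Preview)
Your proposal is correct and follows essentially the same route as the paper: restrict the Gaussian process $U_z$ to the clique edges coming from $S$, invoke Lemma~\ref{lem:connect} (whose proof is indeed edge-by-edge and survives the restriction), bound $\E\sup_z U_z$ via Corollary~\ref{cor:mb2}, bound the $d_u$-diameter by $O(\sqrt\delta)$, and conclude with Theorem~\ref{tal:comp}. The only cosmetic difference is that the paper bounds the diameter via $(c-d)^2\le 2c^2+2d^2$ and then $\sum_{ab}(z^TY_{ab}z)^2\le(\max_{ab}z^TY_{ab}z)\sum_{ab}z^TY_{ab}z$, whereas you polarize and use Cauchy--Schwarz for the PSD form $Y_{ab}$; both give the same $O(\delta)$ bound, and your extra remarks about integrating the tail of Corollary~\ref{cor:mb2} and using $V_0=0$ to pass from $\sup_{z,z'}(V_z-V_{z'})$ to $\sup_z V_z$ make explicit two small steps the paper leaves implicit.
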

\begin{proof}
Let $E(G[S]) = \{(ab): a,b \in e, e \in S\}$ be the multi-set of edges in the associated graph $G[S]$ for $H$ restricted to $S$.
Consider the process $U_z = \sum_{(ab) \in E(G[S])}  g_{ab} z^T Y_{ab} z$ where $g_{ab}$ are independent $N(0,1)$. Then 
\[ \sup_{z \in B_2} U_z = \|U\|, \quad \text{ where } U = \sum_{(ab) \in E[G[S])} g_{ab} Y_{ab}\]
As  $r_e= \max_{a,b\in e} \|Y_{ab}\|$, we have $\|Y_{ab}\| \leq \delta$ for all $(ab) \in E(G[S])$. Moreover as 
\[ \sum_{(ab) \in E(G[S])} Y_{ab} \leq \sum_{(ab) \in E(G) } Y_{ab} \preceq I,\]
Corollary \ref{cor:mb2} gives that $\|U\| = O(\sqrt{ \delta \log n})$.
By Lemma \ref{lem:connect} and the Talagrand comparison inequality Theorem \ref{tal:comp}, we have that for some constant $C$,
\[\sup_{z \in B_2} V_z \leq C \sup_{z \in B_2} U_z = C \|U\| = O(\sqrt{ \delta \log n}). \] 
To compute the tail bound on $\sup_z V_z$, we need to compute the  diameter $\mathrm{diam}(T)$ with respect $d_u$.
By the definiton of $d_u(z,z')$,
\[ \mathrm{diam}(T)^2 = \max_{z,z'} d_u(z,z')^2 = 
\sum_{(ab) \in E(G[S])} (z^T Y_{ab} z - z'Y_{ab}z')^2 \]
Using $(c-d)^2 \leq 2c^2 + 2d^2$ for any $c,d \in \R$,
\begin{eqnarray*}
\mathrm{diam}(T)^2 &\leq &  4  \max_{z \in T}  \sum_{ab \in E(G[S])}  (z^T Y_{ab} z)^2 
\\  
& \leq &  4  \max_{z \in T} \big( (\max_{ab \in E(G[S])} z^T Y_{ab} z)  \   (\sum_{ab \in E(G[S])} z^T Y_{ab} z) \big) \\
& \leq &   4 \delta  \qquad \qquad (\text{ as } T = B_2, \|Y_{ab}\| \leq \delta, \sum_{ab} Y_{ab} \preceq I) 
\end{eqnarray*}
The claimed tail bound on $\sup_z V_z$ now follows from Theorem \ref{tal:comp}.
\end{proof}

\subsection{Putting it all together}
\label{s:gen-case}
We now  prove Theorem \ref{thm:hyp-mult-red}.
Given $H$, we compute $G$ and $p_e$ as described earlier.
By rounding $p_e$ up to nearest integer powers of $2$, we can assume that for each $e \in E(H)$, $p_e = 2^{-j}$ for some $j\in\{0,\ldots,\ell\}$. This ensures $p_e \geq  \min(1,r_e/L)$, while at most doubling the expected size of $\tilde H$. 
Let $C_j = \{e \in E(H): p_e=2^{-j}\}$.
As $p_e=1$ for hyperedges $e$ with $r_e \geq L$, the sampling error in $\tilde{H}$ is only due to edges with $r_e < L$, and so in the analysis of the sampling error below we will assume that $r_e < L$ for all $e \in E(H)$.

We view the process of sampling $\tilde H$ in the following iterative way. 
Let $H_{0}=H$, and for $i=1,\ldots,\ell$,
$H_i$ is obtained from $H_{i-1}$ by picking each hyperedge $e$ of classes $C_j$  for $j \in \{\ell-i+1, \ell\}$, independently with probability $1/2$, and doubling the weight of $e$ if it is picked.
Or equivalently, for $i=1,\ldots, \ell$, $H_i$ is obtained by picking each edge $e \in C_j$ in $H$ independently with probability $\min(1,2^{\ell-j-i})$
and scaling its weight by $\max(1,2^{j+i-\ell})$.
So $H_{\ell}=\tilde H$, and an edge $e$ in $C_j$ survives independently in $H_{\ell}$ with probability $p_e =2^{-j}$.

\begin{proof}(Theorem \ref{thm:hyp-mult-red}.)
By the discussion above, for $i=0,\ldots,\ell$, 
\[W_{H_i}(z) = \sum_{j=0}^\ell \sum_{e \in C_j \cap E(H_i)}  \max(1,2^{i+j-\ell}) W_e(z).\] 
 and note that $W_{H_0}(z) = W_H(z)$ and $W_{\tilde H}(z)= W_{H_\ell}(z)$.

For any $z \in B_2$, by triangle inequality 
\[ |W_{\tilde H}(z) - W_H(z) | = |W_{H_\ell}(z) - W_H(z) | \leq \sum_{i=1}^{\ell} |W_{H_i}(z) - W_{H_{i-1}}(z) |\]
Taking supremum over all $z$, and taking the $\sup$ inside the summation,
\beq  \sup_z |W_{\tilde H}(z) - W_H(z) |  \leq \sum_{i=1}^{\ell} \sup_z |W_{H_i}(z) - W_{H_{i-1}}(z) |
\label{eq:sup-delta}
\eeq

As $H_i$ is obtained by $H_{i-1}$ by sampling each edge of class $j \in [\ell-i+1,\ell]$ with probability $1/2$ and doubling its weight, we have
\[W_{H_i}(z) - W_{H_{i-1}}(z)  = \sum_{j=\ell-i+1}^\ell \sum_{e \in C_j \cap E(H_{i-1})} \eps_e 2^{i+j-\ell-1} W_e(z)\]

For $j \in \{\ell-i+1,\ell\}$ and any $e \in C_j$,  $W_e(z) = \max_{a,b \in  e} z^T Y_{ab} z$ with
$\|Y_{ab}\| \leq r_e \leq 2^{-j}L$ for all $a,b \in e$.
So $\| 2^{i+j-\ell} Y_{ab} \| \leq 2^{i - \ell}L $, and applying Theorem \ref{thm:subset-process} with $V_z =W_{H_i}(z) - W_{H_{i-1}}(z)$ and $u  = \sqrt{\log n}$  gives that 
\[ \Pr [\sup_z V_z  \geq O(\sqrt{2^{i-\ell}L \log n})] \leq n^{-\Omega(1)} \]
Together with \eqref{eq:sup-delta}, and taking union bound over the $\ell = O(\log n)$ classes, we get that
 \[ \sup_z |W_{\tilde H}(z) - W_H(z) |  \leq O(\sum_{i=1}^\ell \sqrt{2^{i-\ell}L \log n}) = O\left(\frac{\eps}{r^2}\right),\]
 with probability $n^{-\Omega(1)}$, as desired. 
\end{proof}

\appendix
\section{Proofs of Lemma~\ref{lemma:cut_iteration_adv} and Lemma~\ref{lemma:cutsparsifier_adv}}
\label{app:LLLproofs}

For completeness we give the proofs of Lemma~\ref{lemma:cut_iteration_adv} and Lemma~\ref{lemma:cutsparsifier_adv} that are very similar to those of Lemma~\ref{lemma:cut_iteration} and Lemma~\ref{lemma:cutsparsifier}, respectively.

\cutiterationadv*
\begin{proof}
  The proof closely follows that of  Lemma~\ref{lemma:cut_iteration} and is similar to the proof of Lemma 3.2 by Bilu and Linial~\cite{Bilu2006}.
  We first observe that we only need to verify Property~\ref{en:adv_it1} for those disjoint sets $S, T \subseteq V$ such that $G[S\cup T]$ is connected. To see this, let $G[S\cup T]$ denote the subgraph induced by $S\cup T$.  Suppose $G[S\cup T]$ is not connected and let $S_1 \cup T_1, S_2 \cup T_2, \ldots, S_k \cup T_k$ be the vertex sets of the connected components where $S_1, \ldots, S_k \subseteq S$ and $T_1, \ldots, T_k \subseteq T$.   If Property~\ref{en:adv_it1} holds for connected components then $\left|2\cdot e_F(S_i, T_i) - e_E(S_i, T_i) \right| \leq 10 \sqrt{d \log d}\cdot \sqrt{|S_i| |T_i|}$ for $i = 1, \ldots, k$, and so
  \begin{align*}
    \left|2\cdot e_F(S,T) - e_E(S,T) \right|  & = \left|\sum_{i=1}^k \left( 2\cdot e_F(S_i, T_i) - e_E(S_i, T_i)\right)\right|
    \leq \sum_{i=1}^k \left| 2\cdot e_F(S_i, T_i) - e_E(S_i, T_i)\right|  \\
    &\leq 10 \sqrt{d \log d}\cdot \sum_{i=1}^k\sqrt{|S_i| |T_i|}
    \leq  10 \sqrt{d\log d}\cdot \sqrt{|S| |T|}\,,
  \end{align*}
  It is thus sufficient to prove the inequalities for those disjoint vertex sets $S, T$ that induce a connected subgraph $G[S\cup T]$.  
  
  Suppose we select $F$ by including each edge $e\in E$ with probability $1/2$ independently of other edges. That is, in the notation of Theorem~\ref{thm:LLL_constructive}, we have that $\mathcal{P}$ consists of $|E|$ mutually independent variables $\{P_e\}_{e\in E}$, where $P_e$ indicates whether $e\in F$ and $\Pr[P_e] = 1/2$. Now for each $S$ and $T$ such that $G[S\cup T]$ is connected, let $A_{S,T}$ be the ``bad'' event that $\left|2\cdot |\delta_F(S,T)| - |\delta_E(S,T)| \right| > 10 \sqrt{d \log d}\cdot \sqrt{|S| |T|}$. Note that $|\delta_F(S,T)|$ is the sum of at most $d\sqrt{|S||T|}$ independent variables, attaining values $0$ and $1$, and that the expected value of $|\delta_F(S,T)|$ equals $|\delta_E(S,T)|/2$. Thus by the Chernoff inequality we get
  \begin{align*}
    \Pr[A_{S,T}]   < d^{-6 |S\cup T|} \,.
  \end{align*}
  Similarly, if we let $D_v$ denote the bad event that the degree constraint of $v$ is violated, i.e.,  $\left|2\cdot e_F(v) - e_E(v) \right| > 10 \sqrt{d\log d}$. Then 
  \begin{align*}
    \Pr[D_{v}]   < d^{-6} \,.
  \end{align*}

  To apply Theorem~\ref{thm:LLL_constructive}, we analyze the dependency graph on the events: 
  \begin{itemize}
    \item There is an edge between $A_{S,T}$ and $A_{S',T'}$ if $\vbl(A_{S,T}) \cap \vbl(A_{S',T'}) \neq \emptyset  \Leftrightarrow \delta_E(S,T) \cap \delta_E(S', T') \neq \emptyset$. 
    \item There is an edge between $A_{S,T}$ and $D_v$ if $\vbl(A_{S,T}) \cap \vbl(D_{v}) \neq \emptyset  \Leftrightarrow \delta_E(S,T) \cap \delta_E(v) \neq \emptyset$.
    \item There is an edge between $D_u$ and $D_v$ if $\vbl(D_{u}) \cap \vbl(D_{v}) \neq \emptyset  \Leftrightarrow \delta_E(u) \cap \delta_E(v) \neq \emptyset$.
  \end{itemize}

      Consider now a fixed event $A_{S,T}$ and let $k = |S\cup T|$. We  bound the number of neighbors, $A_{S', T'}$, of $A_{S,T}$ with $|S'\cup T'| = \ell$.  Since we are interested in only subsets $S',T'$ such that $G[S' \cup T']$ is connected, this is bounded by the number of distinct subtrees on $\ell$ vertices in the associated graph $G$, with a root in one of the endpoints of an edge in $\delta(S,T)$. There are thus at most $2d\min(|S|, |T|)\leq dk$ many choices of the root and, as $G$ has degree at most $d$, the number of such trees is known to be at most  (see e.g.~\cite{Frieze1999})
\begin{align}
  \label{eq:nrcomponents2}
  dk \cdot {d (\ell-1) \choose \ell-1} \leq  dk \cdot (ed)^{\ell-1}\,,
\end{align}
where we used that ${d  (\ell-1) \choose \ell-1} \leq (ed)^{\ell-1}$.
Moreover, it is easy to see that $A_{S,T}$ has at most $2d \min (|S|, |T|)\leq dk$ neighbors $B_v$. 

Now to verify condition~\eqref{eq:cond} of Theorem~\ref{thm:LLL_constructive}, we  set $x(A_{S\cup T}) = d^{-3|S\cup T|}$ for every bad event $A_{S,T}$ and $x(B_v) = d^{-3}$ for every bad event $B_v$. So if we consider an event $A_{S,T}$ with  $k = |S\cup T|$, then 
\begin{align*}
  x(A_{S,T}) &\prod_{(S',T'): A_{S,T} \sim A_{S',T'}} \left( 1- x(A_{S',T'}) \right) \prod_{v: A_{S,T} \sim B_v} \left( 1- x(B_v)\right) \\ &\geq  d^{-3k} \prod_{\ell=1}^n \left(1- d^{-3\ell} \right)^{d k (ed)^{\ell-1}} \cdot \left( 1- d^{-3}\right)^{dk}  \\
  & \geq d^{-3k} \exp(- 2dk \sum_{\ell=1}^n d^{-3\ell} (ed)^{\ell-1} - 2dk d^{-3}) \\
  & \geq d^{-3k} e^{-3k}  > d^{-6k}/2 > \Pr[A_{S,T}]/2\,,
\end{align*}
where we used that $d$ is a sufficiently large constant, which is without loss of generality since if $d\leq 10 \sqrt{d\log (d)}$ then the lemma becomes trivial.
In other words,~\eqref{eq:cond} is satisfied for events $A_{S,T}$ with $\epsilon$ set to $1/2$. Let us now consider an event $B_v$. Clearly there is at most $d$  other events $B_u$ such that $B_u \sim B_v$. Moreover, by the same arguments as above there are at most $2d (ed)^{\ell-1}$ neighbors $A_{S, T}$ such that $|S\cup T| = \ell$. Hence
\begin{align*}
  x(B_v) & \prod \prod_{(S,T): B_v \sim A_{S,T}} \left( 1- x(A_{S,T}) \right)\prod_{u: B_v \sim B_u} \left( 1- x(B_u)\right)\\
  & \geq d^{-3} \prod_{\ell=1}^n \left(1- d^{-3\ell} \right)^{d  (ed)^{\ell-1}} \cdot \left( 1- d^{-3}\right)^d \\
  & \geq d^{-6}/2 \geq \Pr[B_v]/2\,,
\end{align*}
where the second to last inequality follows because of the same simplifications as done above with $k=1$. We have thus verified that~\eqref{eq:cond} is satisfied for all events  with $\epsilon$ set to $1/2$.

It remains to define an efficiently verifiable core subset $\mathcal{A}'\subseteq \mathcal{A}$ such that $1- \sum_{A \in \mathcal{A} \setminus \mathcal{A}'} x(A) \geq 1-n^{-3}$. We let
\begin{align*}
  \mathcal{A'} = \{B_v\}_{v\in V} \cup \{A_{S,T}\in \mathcal{A}: |S\cup T| \leq s\} \mbox{ where $s= \log_{d}(n)$}.
\end{align*}
By the same arguments as in~\eqref{eq:nrcomponents2},  there is at most $n \cdot {d  (\ell-1) \choose \ell-1} \leq n (ed)^{\ell-1}$ many vertex sets $U$ such that $|U| = \ell$ and $G[U]$ is connected. Moreover, for each $U$ there are $2^\ell$ possible ways of partitioning it into $S$ and $T$. Therefore, the following properties hold:
\begin{enumerate}
  \item $\mathcal{A}'$ is efficiently verifiable since it contains $n \cdot \sum_{\ell=1}^s (ed)^{\ell-1} 2^\ell = O(n \cdot (ed\cdot 2)^s)  = O(n^3)$ many events $A_{S,T}$ that can be efficiently enumerated by first selecting a vertex $r$ among $n$ choices, then considering all possible  trees rooted at $r$ with $\ell\leq s$ vertices, and all possible ways of partitioning such a component into $S$ and $T$. Moreover, the remaining $n$ events $B_v$ in $\mathcal{A}'$ contains $n$  are easy to verify in polynomial time.
  \item We have
    \begin{align*}
      \sum_{A_{S,T} \in \mathcal{A}\setminus \mathcal{A'}} x(A_{S,T}) & \leq \sum_{\ell = s+1}^n d^{-6\ell} \cdot(n \cdot (ed\cdot 2)^\ell ) \\
      & \leq n \cdot \sum_{\ell = s+1}^n d^{-4\ell}  \leq n d^{-4s}  = n^{-3}\,,
    \end{align*}
    where for the first inequality we again used that $d$ is a sufficiently large constant.
\end{enumerate}
We have verified Condition~\eqref{eq:cond} of Theorem~\ref{thm:LLL_constructive} and we have defined an efficiently verifiable core subset $\mathcal{A}'$ such that  $\sum_{A_S \in \mathcal{A}\setminus \mathcal{A}'} x(A_S) \leq n^{-3}$ and so the lemma follows.

\end{proof}

\cutsparsifieradv*
\begin{proof}
  Starting with $G$ we apply Lemma~\ref{lemma:cut_iteration_adv} $k$ times to obtain $\tilde{G}$. Let $F_i$ denote the edge set and let $d_i$ denote the maximum degree after round $i$. So $F_0 = E$ and $d_0=d$.
  By the guarantees of Lemma~\ref{lemma:cut_iteration_adv}, we have that with probability $1-n^{-3}$
\begin{align} 
  \label{eq:degree2}
  |2 d_{i+1} - d_i | &\leq 10 \sqrt{d_i \log (d_i)} 
\end{align}
and
\begin{align}
  \label{eq:set2}
  \left|2\cdot e_{F_{i+1}}(S,T) - e_{F_i}(S,T) \right| &\leq 10 \sqrt{d_i \log (d_i)}\cdot \sqrt{|S| |T|} \qquad \mbox{ for every disjoint $S,T\subseteq V$.}
\end{align}

As we apply Lemma~\ref{lemma:cut_iteration_adv} $k$ times with $k \leq \log(n)$, the union bound implies that the above inequalities are true for all invocations of that lemma with probability at least $1 - k\cdot n^{-3} \geq 1-n^{-2}$. From now on we assume that the above inequalities hold and show that the conclusion of the statement is always true in that case.
Specifically, we now prove by induction on $k$ that 
\begin{align*} 
  |2^k d_{k} - d_0 | &\leq  \eps d_0\mbox{, and} \\
  \left|2^k\cdot e_{F_{k}}(S,T) - e_{F_0}(S,T) \right| &\leq \eps d_0 \cdot \sqrt{|S||T|} \qquad \mbox{ for every disjoint $S,T\subseteq V$.}
\end{align*}
The claim holds trivially for $k=0$.
Assume it holds for all $i<k$, which in particular implies $2^id_i \leq 2 d_0$ for all $i <k$. By the triangle inequality and \eqref{eq:degree2}, 
\begin{eqnarray*}
|2^{k} d_k - d_0  |  & \leq &    \sum_{i=0}^{k-1}  | 2^{i} (2d_{i+1} - d_{i})| \leq 10  \sum_{i=0}^{k-1}  2^{i} \sqrt{d_i \log (d_i )}  \\
& \leq &  
10 \sum_{i=0}^{k-1}  2^{i}   \sqrt{2 (d_0/2^i) \log (2(d_0/2^i) )} \qquad \text{(induction hypothesis on $d_i$).}
\end{eqnarray*}
As the terms increase geometrically in $i$, this sum is $O( 2^k\sqrt{(d_0/2^k) \log ((d_0/2^{k}))}$ which is $\eps d_0$ by our assumption on $k$ and selection of $c$.

Finally, we note that $\left|2^k\cdot e_{F_{k}}(S,T) - e_{F_0}(S,T) \right| \leq \eps d_0 \cdot \sqrt{|S||T|}$  follows by the same calculations (using~\eqref{eq:set2} instead of~\eqref{eq:degree2}). 
\end{proof}

\bibliographystyle{plain}
\bibliography{refr}

\end{document}